 \def\myendproof{{\ \vbox{\hrule\hbox{%
   \vrule height1.3ex\hskip0.8ex\vrule}\hrule }}\par}
\newcommand{\floor}[1]{\left\lfloor{#1}\right\rfloor}
\newcommand{\ceiling}[1]{\left\lceil{#1}\right\rceil}
\newcommand{\setof}[1]{\left\{{#1}\right\}}
\newcommand{\dbf}[0]{\mbox{\sc dbf}}
\newcommand{\load}[0]{\mbox{\sc load}}
\newcolumntype{L}[1]{>{\raggedright\let\newline\\\arraybackslash\hspace{0pt}}m{#1}}
\newcolumntype{C}[1]{>{\centering\let\newline\\\arraybackslash\hspace{0pt}}m{#1}}
\newcolumntype{R}[1]{>{\raggedleft\let\newline\\\arraybackslash\hspace{0pt}}m{#1}}
\tikzset{
    task/.style={shade, shading=radial, rectangle,minimum height=.1cm,
        inner color=#1!20, outer color=#1!60!gray},
    task1/.style={task=yellow, minimum width=13mm},
    task2/.style={task=orange, minimum width=13mm},
    task3/.style={task=red, minimum width=13mm},
    task4/.style={task=green, minimum width=13mm},
    task5/.style={task=blue, minimum width=13mm},
    task6/.style={task=purple, minimum width=13mm},
    task7/.style={task=cyan, minimum width=13mm},
    task8/.style={task=pink, minimum width=13mm},
}
\begin{document}

\title{Push Forward: Global Fixed-Priority Scheduling of
Arbitrary-Deadline Sporadic Task Systems}
\titlerunning{Push Forward: Arbitrary-Deadline Sporadic Task Systems}

\author[1]{Jian-Jia Chen}
\author[2]{Georg von der Br\"uggen}
\author[3]{Niklas Ueter}

\affil[1]{TU Dortmund University, Germany\\
  \texttt{jian-jian.chen@tu-dortmund.de}}
\affil[2]{TU Dortmund University, Germany\\
\texttt{georg.von-der-brueggen@tu-dortmund.de}}

\affil[3]{TU Dortmund University, Germany\\
  \texttt{niklas.ueter@tu-dortmund.de}}

\authorrunning{J.-J. Chen, G. von der Br\"uggen, and N. Ueter} 

\Copyright{Jian-Jia Chen, Georg von der Br\"uggen,  and Niklas Ueter}





\maketitle
\begin{abstract}
  The sporadic task model is often used to analyze recurrent execution
  of identical tasks in real-time systems.  A sporadic task defines an
  infinite sequence of task instances, also called jobs, that arrive
  under the minimum inter-arrival time constraint.  To ensure the
  system safety, timeliness has to be guaranteed in addition to
  functional correctness, i.e., all jobs of all tasks have to be
  finished before the job deadlines.  We focus on analyzing
  arbitrary-deadline task sets on a homogeneous (identical) multiprocessor system
  under any given global fixed-priority scheduling approach and 
  provide a series of schedulability tests with different tradeoffs
  between their time complexity and their accuracy. Under the
  arbitrary-deadline setting, the relative deadline of a task can be
  longer than the minimum inter-arrival time of the jobs of the task.
  We show that global deadline-monotonic (DM) scheduling
  has a speedup bound of $3-1/M$ against any optimal
  scheduling algorithms, where $M$ is the number of identical processors, 
  and prove that this bound is
  asymptotically tight.
\end{abstract}

\section{Introduction}
\label{sec:intro}

The sporadic task
model  is
the basic task model in real-time systems, where  
each task $\tau_i$ releases an infinite number of \emph{task
instances} (\emph{jobs}) under its \emph{minimum inter-arrival time} 
(\emph{period}) $T_i$ and is further characterized by its \emph{relative
deadline} $D_i$ and its \emph{worst-case execution time} $C_i$.
The sporadic task model has been widely adopted 
in real-time systems.  
A sporadic task defines an infinite sequence of task instances, also
called 
\emph{jobs}, 
that arrive under the minimum inter-arrival time
constraint, i.e.,   
 any two consecutive releases of jobs of task $\tau_i$ 
 are temporally
 separated by at least $T_i$.  When a job of task $\tau_i$ arrives at time $t$,
it must finish no later than its \emph{absolute deadline} $t+D_i$.
 If 
 all tasks release their jobs strictly periodically with period $T_i$, 
 the task model
is the well-known Liu and Layland task model \cite{liu73scheduling}.
A sporadic task set is
called with 1)~\emph{implicit deadlines}, if the relative deadlines 
are equal to their minimum inter-arrival times,
2)~\emph{constrained deadlines}, if the minimum inter-arrival times are
no less than their relative deadlines, and 3)~\emph{arbitrary
  deadlines}, otherwise.

To schedule such task sets on a multiprocessor platform,
three paradigms have been widely adopted: 
partitioned, global, and
semi-partitioned multiprocessor scheduling.  The \emph{partitioned} scheduling
approach partitions the tasks statically among the available
processors, i.e., a task executes all its jobs on the assigned processor. 
 The \emph{global} scheduling approach allows a job to migrate from one
 processor to another at any time. The \emph{semi-partitioned} scheduling
 approach decides whether a task is divided into subtasks statically and how each task/subtask is then
assigned to a processor. 
A comprehensive survey of
multiprocessor scheduling for real-time systems can be found in
\cite{DBLP:journals/csur/DavisB11}. 

We focus on \emph{global fixed-priority preemptive
  scheduling} on $M$ identical processors, i.e., unique fixed priority levels
  are statically assigned to the tasks and at any point in
time the $M$ highest-priority jobs in the ready queue are
executed. Hence, the schedule is \emph{workload-conserving}. 
The response time of a job is defined as
its finish time minus its arrival time. The worst-case response time
of a task is an upper bound on the response times of all the jobs of
the task and can be derived by a \emph{(worst-case) response time analysis}
for a sporadic task under a  given scheduling algorithm.
Verifying whether a set of sporadic tasks can meet
their deadlines by a scheduling algorithm is called a
\emph{schedulability test}, i.e., verifying if the \emph{(worst-case) response
time} is smaller than or equal to the \emph{relative
deadline}. 


\subsection{Related Work}
\label{sec:related-work}

For uniprocessor
systems, i.e, M=1, 
the exact schedulability test and the (tight) worst-case response time
analysis by using \emph{busy intervals} were provided by
Lehoczky~\cite{DBLP:conf/rtss/Lehoczky90}.
Several approaches have been
proposed to reduce the time complexity, e.g., \cite{sjodin1998improved}. 
Bini and Buttazzo \cite{DBLP:journals/tc/BiniB04} proposed a framework of schedulability
tests that can be tuned to balance the time complexity and the
acceptance ratio of the schedulability test for uniprocessor sporadic
task systems. To achieve polynomial-time schedulability tests and response
time analyses, Lehoczky~\cite{DBLP:conf/rtss/Lehoczky90} 
proposed a utilization upper bound for a set of sporadic arbitrary-deadline tasks under fixed-priority
scheduling.  
  The linear-time response-time bound for fixed-priority systems was first 
  proposed
by Davis and Burns~\cite{DavisRTSS2008}, and later 
improved by Bini et
al.~\cite{DBLP:journals/tc/BiniNRB09,bini-RTSS2015} and Chen et
al. \cite{DBLP:journals/corr/abs-k2q}.
The computational complexity of the schedulability test problem and the worst-case response time analysis in uniprocessor 
systems for different variances can be found in \cite{DBLP:conf/soda/BonifaciCMM10,DBLP:conf/soda/EisenbrandR10,EisenbrandR08,DBLP:conf/ecrts/Ekberg015,DBLP:conf/rtss/Ekberg015}.

In this paper, we will implicitly assume multiprocessor systems, i.e., $M \geq
2$. Many results are known for constrained-deadline ($D_i\leq T_i$) and
implicit-deadline task systems \mbox{($D_i=T_i$)} on identical
multiprocessor platforms, 
 e.g.,
\cite{DBLP:conf/rtss/AnderssonBJ01,baruah2007techniques,DBLP:conf/rtss/GuanSYY09,DBLP:conf/opodis/Andersson08,DBLP:journals/rts/BaruahBMS10,DBLP:journals/corr/abs-k2q}. 
For details, please refer to the survey by Davis and Burns
\cite{DBLP:journals/csur/DavisB11}. Unfortunately, deriving exact
schedulability tests under multiprocessor global scheduling is
much harder than deriving them for uniprocessor systems due to the lack of concrete
worst-case scenarios that can be constructed efficiently.  
Most results in the literature focus on sufficient schedulability tests. 
Exceptions
are the exhaustive search under discrete time parameters by Baker and
Cirinei \cite{DBLP:conf/opodis/BakerC07}, finite automata under
discrete time parameters by Geeraerts et
al. \cite{geeraerts2013multiprocessor}, and hybrid finite automata by
Sun and Lipari \cite{DBLP:journals/rts/SunL16}. Specifically,
Geeraerts et al. \cite{geeraerts2013multiprocessor} showed that the
schedulability test formulation by Baker and Cirinei \cite{DBLP:conf/opodis/BakerC07} 
is {\sc Pspace}-Complete.

Regarding global fixed-priority scheduling for arbitrary-deadline task
systems, several sufficient schedulability tests and safe worst-case
response time analyses have been proposed, 
e.g.,~\cite{baker2006analysis,DBLP:conf/opodis/BakerC07,DBLP:conf/opodis/BaruahF07,DBLP:conf/icdcn/BaruahF08,DBLP:conf/rtss/GuanSYY09,DBLP:conf/rtcsa/SunLA014,DBLP:conf/rtns/HuangC15}.
Baker~\cite{baker2006analysis} designed a test based on certain
properties to characterize a \emph{problem window}.  Baruah and
Fisher~\cite{DBLP:conf/opodis/BaruahF07,DBLP:conf/icdcn/BaruahF08}
used different annotations to extend the analysis window and derived
corresponding exponential-time
schedulability tests.  The first worst-case response-time analysis for
arbitrary-deadline task systems was proposed by Guan et al.
\cite{DBLP:conf/rtss/GuanSYY09}, where the authors used the insight
proposed by Baruah~\cite{baruah2007techniques} to limit the number of
carry-in jobs, and then apply the workload function proposed by
Bertogna et al.~\cite{DBLP:conf/opodis/BertognaCL05} to quantify the
requested demand of higher-priority tasks.  Unfortunately, it has
recently been shown by Sun et al.~\cite{DBLP:conf/rtcsa/SunLA014} that
this analysis in \cite{DBLP:conf/rtss/GuanSYY09}  is
optimistic. 
In addition, Sun et al.~\cite{DBLP:conf/rtcsa/SunLA014} derived a complex
carry-in workload function for the response time analysis where 
all possible combinations of carry-in and non-carry-in functions have to be
explicitly enumerated. However, their method is computationally intractable
since the time complexity is exponential.  Huang and Chen
\cite{DBLP:conf/rtns/HuangC15} proposed a more precise quantification
for the number of carry-in jobs of a 
task than the bounds used in the  tests provided 
in \cite{baker2006analysis,DBLP:conf/icdcn/BaruahF08}. They also presented a
response time bound for arbitrary-deadline tasks under
global scheduling in multiprocessor systems with linear-time complexity.

\subsection{Our Contribution}

We consider arbitrary-deadline sporadic task systems, which is the
most general case of the sporadic real-time task model.
To quantify the performance loss due to efficient schedulability tests
and the non-optimality of scheduling algorithms, we will adopt the
notion of speedup factors/bounds, also known as resource augmentation
factors/bounds. Table~\ref{tab:summary} summarizes the
state-of-the-art speedup bounds for the most adopted global fixed-priority scheduling algorithm, i.e., global deadline-monotonic (DM) scheduling.
Under global DM, a task $\tau_i$ has  higher priority than task
$\tau_j$ if $D_i \leq D_j$, in which ties are broken arbitrarily.
The authors note that the proof by Lundberg
 \cite{DBLP:conf/rtas/Lundberg02} seems incomplete. However, the
 concrete task set in \cite{DBLP:conf/rtas/Lundberg02}
 provides the lower bound $2.668$ of the speedup factors for global
 DM. Moreover, Andersson \cite{DBLP:conf/opodis/Andersson08}
showed that global slack monotonic scheduling has a speedup bound of
$\frac{3+\sqrt{5}}{2}\approx 2.6181$ for implicit-deadline task
systems. However, no better global fixed-priority scheduling
algorithms with respect to speedup factors are known for
constrained-deadline and arbitrary-deadline task systems.

{\bf Our Contributions:} Table~\ref{tab:summary} summarizes the
related results and the contribution of this paper for multiprocessor
global fixed-priority preemptive scheduling. We improve the best known
results by Baruah and Fisher \cite{DBLP:conf/opodis/BaruahF07} with
respect to the speedup bounds. Our contributions are:
\begin{itemize}
\item For \emph{any} global fixed-priority preemptive scheduling, we provide
  a series of schedulability tests 
  with different tradeoffs between time
  complexity and accuracy  in
  Section~\ref{sec:our-tests} and Section~\ref{sec:our-efficient-tests}.

\item We show that the global deadline-monotonic scheduling algorithm has a
  speedup factor $3-1/M$ with respect to the optimal
  multiprocessor scheduling policies when considering task systems
  with arbitrary deadlines. 
  This improves the analyses by Fisher
  and Baruah with respect to the
  speedup bounds, i.e., $4-1/M$ \cite{DBLP:conf/icdcn/BaruahF08} and
  $3.73$ \cite{DBLP:conf/opodis/BaruahF07}. 
  \item We show that all the schedulability
  tests we provide in this paper analytically dominate the tests by
  Baruah and Fisher \cite{DBLP:conf/opodis/BaruahF07} for global
  DM. We also show that global DM has a speedup lower bound of
  $3-3/(M+1)$, which shows that our schedulability analyses are
  asymptotically tight with respect to the speedup factors.
\end{itemize}

\begin{table*}[t]
  \centering
\renewcommand{\arraystretch}{1.3}
\scalebox{0.7}{
  \begin{tabular}{|l|p{2cm}|p{3.5cm}|p{5.3cm}|l|p{4.5cm}|}
    \hline
    & &implicit deadlines & constrained deadlines& arbitrary deadlines \\
    \hline
    \multirow{3}{*}{Global DM} & \multirow{2}{*}{upper bounds}& 2.668 \cite{DBLP:conf/rtas/Lundberg02} (poly.-time)& $3-1/M$ \cite{DBLP:journals/rts/BaruahBMS10} (expo.-time) & $\frac{2(M-1)}{4M-1-\sqrt{12M^2-8M+1}}\leq 3.73$ \cite{DBLP:conf/opodis/BaruahF07} (expo.-time)\\
    & & 2.823 \cite{DBLP:journals/corr/abs-k2q} (poly.-time) & $3-1/M$ \cite{DBLP:journals/corr/abs-k2q} (poly.-time) & $3-\frac{1}{M}$ (\emph{this paper}) (poly.-time)\\
    \cline{2-5}
    \cline{2-5}
    &  \multirow{2}{*}{lower bounds} & 2.668
    \cite{DBLP:conf/rtas/Lundberg02} & 2.668
    \cite{DBLP:conf/rtas/Lundberg02} &2.668 \cite{DBLP:conf/rtas/Lundberg02}\\
    & & & & $3- \frac{3}{M+1}$ (this paper)\\
    \hline
  \end{tabular}}
  \caption{Speedup bounds of the global deadline-monotonic (DM)
  scheduling algorithm for sporadic task systems. }
  \label{tab:summary}
\end{table*}


\section{System Model, Definitions, and Assumptions}
\label{sec:model}


We consider an arbitrary-deadline sporadic
task set ${\bf T}$ with $N$ tasks executed on $M \geq 2$ identical
processors based on global fixed-priority preemptive scheduling. We assume that
the priority levels of the tasks are unique (and given) and that 
$\tau_i$ has
higher priority than task $\tau_j$ if $i < j$. When there is only one processor, i.e., $M=1$, 
the existing results discussed in Section~\ref{sec:related-work} can be adopted, and
 our analysis here cannot be applied. We will implicitly use
the assumption $M \geq 2$ in the paper.

By definition, $M$ is an integer. 
We will implicitly assume that $D_i > 0$, $C_i > 0$, $T_i > 0$, $C_i/D_i \leq 1$, and $U_i \leq
1$  $\forall \tau_i$  
in this paper. Moreover, \emph{intra-task parallelism}
is not allowed. \emph{At most one job} of task $\tau_i$ can be
executed on at most one processor at each instant in time,
\emph{regardless of the number of the jobs of task $\tau_i$ awaiting
  for execution and the number of idle processors}. 
We denote the set of natural numbers as $\mathbb{N}$.

\subsection{Resource Augmentation}



We assume the original platform
speed is $1$. Therefore, running the platform at speed $s$ implies that the
worst-case execution time of task $\tau_i$ becomes $C_i/s$.
A scheduling algorithm $\mathcal{A}$ has a \emph{speedup bound} $s$
with respect to the optimal schedule, if it guarantees to always
produce a feasible solution when 1)~each processor is sped up to run
at $s$ times of the original speed of the platform and 2)~the task set
${\bf T}$ can be feasibly scheduled on the original $M$ identical
processors, i.e., running at speed 1.

We will use the negation of the above definition to quantify the
failure of algorithm $\mathcal{A}$: \emph{If $\mathcal{A}$ fails to
  ensure that all the task in ${\bf T}$ meet their deadlines, then no
  feasible multiprocessor schedule exists when each processor is
  slowed down to run at speed $1/s$.}



\subsection{Definitions and Necessary Condition}

We define the
following notation according to the task system and the priority
assignment:
\begin{itemize}
\item density $\delta_i$ of task $\tau_i$: $\delta_i = C_i/\min\{D_i, T_i\}$ 
\item maximum density $\delta_{\max}(k)$ among the first $k$ tasks:
  $\delta_{\max}(k)=\max_{i=1}^{k} \delta_i$ 
\item maximum between the utilization of the higher-priority tasks and the density
  of task \\$\tau_k$: $U_{\delta,k}^{\max}=\max\{\max_{i=1}^{k-1} U_i,
  \delta_k\}$ 
\item demand bound function \cite{Baruah90} $\dbf(\tau_i, t)$ of task $\tau_i$, further
  explained in Definition.~\ref{def:dbf-function} 
\item load $\load(k)$ of the first $k$ tasks: $\load(k)
  = \max_{t > 0} \frac{\sum_{i=1}^{k} \dbf(\tau_i, t)}{t}$ 
\end{itemize}

\begin{definition}[demand bound function ($\dbf$) by Baruah \cite{Baruah90}]
  \label{def:dbf-function}
  For any $t \geq 0$ 
  \begin{equation}
    \label{eq:dbf-function}
\dbf(\tau_i, t) = \max\left\{0,
    \left(\floor{\frac{t-D_i}{T_i}}+1\right)C_i\right\}
  \end{equation}
The demand bound function $\dbf(\tau_i, t)$ defines the execution time task
$\tau_i$ must finish for any interval length $t$ to ensure its timing correctness.
  \hfill\myendproof
\end{definition}

Since $\delta_i \geq U_i$ by definition,  we know that
$U_{\delta,k}^{\max} \leq \delta_{\max}(k)$. As we assume $C_i/D_i \leq 1$ and $U_i \leq
1$ we know that $\delta_i \leq 1$.
In addition to DBFs, we will heavily use the following workload function:
\begin{definition}[Workload function]
  Let $work_i(t)$ be a workload function, 
  representing the maximum amount of time for \emph{sequentially} executing
  the jobs of task $\tau_i$ released in time interval $[a,
  a+t)$, i.e., jobs released before $a$ are not considered. For
  any $t \geq 0$ 
  \begin{equation}
    \label{eq:workload-function}
    work_i(t) = \floor{\frac{t}{T_i}}C_i + \min\left\{C_i,t- \floor{\frac{t}{T_i}}T_i\right\}.
  \end{equation}
  For notational brevity, we set $work_i(t)$ to $-\infty$ if
  $t < 0$.  \hfill\myendproof
\end{definition}
The workload function $work_i(t)$  defined above is a piecewise function, i.e.,
linear in intervals $[\ell T_i, \ell T_i + C_i]$ with a slope $1$ and
constant, $(\ell+1)C_i$, 
in intervals $[\ell T_i + C_i, (\ell+1)T_i]$ for any non-negative integer $\ell$.
Two examples of the workload function are
 illustrated in Figure~\ref{fig:work-light-functions} in
 Section~\ref{sec:our-tests}.
To prove the speedup bound, 
we will utilize the following necessary condition. 
\begin{lemma}
  \label{lemma:lower-speed-bound}
  A task set ${\bf T}$ with $N$ tasks is not schedulable by any multiprocessor scheduling
  algorithm when the $M$ processors are running at any speed $s$, if 
  \begin{equation}
    \label{eq:def-s-overall}
    \max\left\{\max_{t> 0}\frac{\sum_{\tau_i \in {\bf T}} \dbf(\tau_i, t)}{Mt}, \frac{\sum_{\tau_i \in {\bf T}} U_i}{M}, \delta_{\max}(N) \right\} > s.    
  \end{equation}
\end{lemma}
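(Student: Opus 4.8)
The plan is to exhibit a single legal sporadic arrival sequence---the \emph{synchronous} one, in which every task $\tau_i$ releases its first job at time $0$ and then one job every $T_i$ thereafter---and to show that whenever the left-hand side of \eqref{eq:def-s-overall} exceeds $s$, already this sequence cannot be scheduled on $M$ speed-$s$ processors by \emph{any} algorithm. Since any correct scheduler must handle every admissible release pattern, infeasibility of this one witness suffices. I would argue term by term: if the synchronous sequence were schedulable at speed $s$, then each of the three quantities inside the $\max$ would have to be at most $s$; equivalently, if any single term exceeds $s$ I would point to a concrete interval in which the required work provably cannot be completed.

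For the first two terms I would use the aggregate processing capacity. Over the window $[0,t)$ of the synchronous release, the jobs of $\tau_i$ that are released at or after $0$ and have their absolute deadlines at or before $t$ carry exactly $\dbf(\tau_i,t)$ units of work, all of which must be executed inside $[0,t)$. The $M$ processors running at speed $s$ deliver at most $Mst$ units of execution during $[0,t)$, so feasibility forces $\sum_{\tau_i\in{\bf T}}\dbf(\tau_i,t)\le Mst$ for every $t>0$, which is exactly the first term. The second term then comes for free: since $\floor{\frac{t-D_i}{T_i}}+1\ge \frac{t-D_i}{T_i}$, we have $\dbf(\tau_i,t)\ge U_i t - U_i D_i$ for $t\ge D_i$, so $\frac{\sum_i \dbf(\tau_i,t)}{t}\to \sum_i U_i$ as $t\to\infty$ and $\sup_{t>0}\frac{\sum_i\dbf(\tau_i,t)}{Mt}\ge \frac{\sum_i U_i}{M}$; thus $\frac{\sum_i U_i}{M}>s$ already violates the first inequality.

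For the third term I would invoke the prohibition of intra-task parallelism. Because at most one job of $\tau_k$ occupies at most one processor at any instant, the total amount of $\tau_k$-work completed during $[0,t)$ is at most $st$, independently of $M$. The work of $\tau_k$ alone that must be finished in $[0,t)$ is $\dbf(\tau_k,t)$, so feasibility forces $\dbf(\tau_k,t)\le st$ for all $t>0$. A short computation shows $\sup_{t>0}\frac{\dbf(\tau_k,t)}{t}=\delta_k$: when $D_k\le T_k$ the supremum is attained at $t=D_k$ with value $C_k/D_k=\delta_k$, while when $D_k>T_k$ it is approached as $t\to\infty$ with value $C_k/T_k=U_k=\delta_k$. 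Hence $\delta_k>s$ yields some $t$ with $\dbf(\tau_k,t)>st$, contradicting the single-task capacity; maximizing over $k$ gives the $\delta_{\max}(N)$ term.

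The routine part is the capacity accounting; the delicate part is the density term, specifically the case $D_k>T_k$, where $\delta_k=U_k$ is only reached in the limit, so the floor in $\dbf$ must be handled asymptotically rather than at a single convenient $t$. The other point that needs care, rather than computation, is the reduction itself: one must argue explicitly that a necessary condition derived for the synchronous release pattern certifies non-schedulability by \emph{every} algorithm, which rests on the synchronous pattern being an admissible sporadic arrival sequence and on the speed-$s$ processors providing at most $Mst$ (respectively $st$) units of service in any window of length $t$.
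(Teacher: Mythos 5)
Your proof is correct. Note that the paper itself does not prove this lemma---it simply defers to the literature (Baruah and Fisher)---and your argument is precisely the standard one underlying those cited results: a synchronous-release witness combined with aggregate capacity $Mst$ for the $\dbf$ and utilization terms, and the no-intra-task-parallelism bound $st$ on a single task's service for the density term. You also correctly flag and handle the only delicate point, namely that for $D_k>T_k$ the ratio $\dbf(\tau_k,t)/t$ only approaches $\delta_k=U_k$ in the limit, so a finite witness $t$ must be extracted from the strict inequality $\delta_k>s$ rather than read off at a single instant.
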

\begin{proof}
  This is widely used based on a reformulation in the literature, e.g., \cite{DBLP:conf/opodis/BaruahF07,DBLP:conf/icdcn/BaruahF08}.
\end{proof}




\subsection{Analysis Based on DBFs}
Baruah and Fisher in~\cite{DBLP:conf/opodis/BaruahF07} provided a schedulability
test for task $\tau_k$ under global deadline-monotonic (DM) scheduling  that is
based on the Demand Bound Functions (DBF), assuming that the tasks are sorted
according to DM order already, i.e., $D_1 \leq D_2 \leq \ldots \leq D_N$: 

\begin{theorem}[Baruah and Fisher \cite{DBLP:conf/opodis/BaruahF07},
  revised in \cite{report-chen-baruah-errata-arbitrary-deadline-2007}]
\label{thm:baruah-fisher07-bug-free} Let $\mu_k$ be  defined as
$M-(M-1)\delta_{\max}(k)$. Task $\tau_k$ is schedulable under global DM if \footnote{The
  original proof by Baruah and Fisher \cite{DBLP:conf/opodis/BaruahF07} had
  a mathematical flaw in their Lemma 3, i.e., setting $\mu_k$ to
  $M-(M-1)\delta_k$. 
  It can be fixed by setting
  $\mu_k$ to $M-(M-1)\delta_{\max}(k)$.}
\begin{equation}
  \label{eq:sch-test-corollary-v1}
2\load(k) +
(\ceiling{\mu_k}-1)\delta_{\max}(k) \leq \mu_k.
\end{equation}  
\end{theorem}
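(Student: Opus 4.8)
The plan is to prove the contrapositive by a \emph{problem-window} (deadline-miss) argument. Suppose for contradiction that $\tau_k$ is \emph{not} schedulable under global DM, and consider the first job of $\tau_k$ that misses its deadline; say it arrives at $r_k$ and misses at the absolute deadline $d_k = r_k + D_k$. I would first fix a problem window $[t_0, d_k)$ by extending backward from $d_k$ to the earliest instant $t_0$ such that, at \emph{every} time in the window at which the analysed job of $\tau_k$ is not being executed, all $M$ processors are busy running jobs of tasks $\tau_1,\dots,\tau_k$. Because deadlines are arbitrary, this window may also contain earlier, still-pending jobs of $\tau_k$ itself; I would account for this self-interference by tracking the \emph{total} demand of $\tau_1,\dots,\tau_k$ that must be serviced inside the window, rather than the demand of a single job of $\tau_k$.

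Next I would turn the deadline miss into a lower bound on executed work. Writing $L = d_k - t_0$ for the window length, the miss means that $\tau_k$ cannot accumulate its required execution by $d_k$, so the total amount of work from $\tau_1,\dots,\tau_k$ executed during $[t_0,d_k)$ must exceed the ``blocking'' capacity that $\tau_k$ has to overcome. The quantity $\mu_k = M-(M-1)\delta_{\max}(k)$ arises here: at the left boundary $t_0$ at most $M-1$ of the interfering streams can carry pending work across the boundary (the standard carry-in bound), and since $\delta_i \le \delta_{\max}(k)$ for every $i \le k$, each such carry-in stream occupies a processor at an average rate of at most $\delta_{\max}(k)$. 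The integer number of these streams is what produces the discrete correction $(\ceiling{\mu_k}-1)\delta_{\max}(k)$ rather than a purely linear term.

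On the other side I would upper-bound the workload of $\tau_1,\dots,\tau_k$ inside the window using the $\dbf$/$\load$ machinery. For each task $\tau_i$ with $i \le k$, the workload contributed to a length-$L$ window, including one carry-in job, can be bounded in terms of $work_i$ and hence of $\dbf(\tau_i,\cdot)$; summing over $i$ and invoking the definition $\sum_{i=1}^{k}\dbf(\tau_i,t) \le \load(k)\,t$, the total in-window workload is at most $2\load(k)\,L$ plus the carry-in density term, the factor $2$ coming from charging both the in-window demand and one full carry-in job of each task to the load. Combining this upper bound with the lower bound from the previous step, dividing through by $L$, and simplifying yields $\mu_k < 2\load(k) + (\ceiling{\mu_k}-1)\delta_{\max}(k)$, which is exactly the negation of \eqref{eq:sch-test-corollary-v1}; this contradiction establishes the theorem.

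I expect the delicate step to be the carry-in accounting that produces $\mu_k = M-(M-1)\delta_{\max}(k)$ together with the ceiling term, since this is precisely where the original argument was flawed: the footnote records that Lemma~3 used $\delta_k$ in place of $\delta_{\max}(k)$. The subtlety is that the carry-in of \emph{each} of the up-to-$(M-1)$ interfering streams must be bounded by the maximum density $\delta_{\max}(k)$ over \emph{all} tasks of priority at least $k$, not merely by $\delta_k$, and the ceiling bookkeeping must be verified when $\mu_k$ is non-integral. Justifying the factor $2$ on $\load(k)$ is the second technical point: it requires relating $work_i$ over the (possibly long, arbitrary-deadline) window to $\dbf(\tau_i,\cdot)$ evaluated at an enlarged argument and checking that the one-carry-in-job overcharge is uniform across tasks.
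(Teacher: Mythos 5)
First, a point of reference: the paper does not prove Theorem~\ref{thm:baruah-fisher07-bug-free} at all --- it is quoted (with the errata fix to $\mu_k$) from Baruah and Fisher, so there is no in-paper proof to match your attempt against. Judged on its own, your sketch has the right skeleton (contrapositive, backward window extension, carry-in counting, a $\dbf$/$\load$ upper bound played against a lower bound on executed work), but the two quantitative ingredients that make the theorem true are exactly the ones your outline either defers or gets wrong. The main problem is your definition of $t_0$: you take it to be the start of a maximal busy interval (all $M$ processors busy with $\tau_1,\dots,\tau_k$ whenever the analysed job is not running). That definition can bound the number of carry-in tasks only by $M-1$ and gives no handle on $\mu_k$ whatsoever, so it cannot produce the term $(\ceiling{\mu_k}-1)\delta_{\max}(k)$ with $\mu_k=M-(M-1)\delta_{\max}(k)$, which is strictly smaller than $M-1$ whenever $\delta_{\max}(k)>0$. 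The actual mechanism --- reproduced in this paper as Lemma~\ref{lemma:workload-overall}, Definition~\ref{definition-t0}, and Lemmas~\ref{lemma:existence-t0} and~\ref{lemma:number-carry-in-tasks} --- is to define $\Omega(t,t_d)$ as the density of work of $\tau_1,\dots,\tau_k$ \emph{executed} in $[t,t_d)$, to deduce from the deadline miss that $\Omega(t_a,t_d)>M-(M-1)C_k/D_k\geq\mu_k$, and then to take $t_0$ as the earliest instant with $\Omega(t_0,t_d)\geq\mu_k$. The lower bound $W(t_0,t_d)\geq\mu_k(t_d-t_0)$ is then immediate from the definition of $t_0$, and the fact that $\Omega(t,t_d)<\mu_k$ for all $t<t_0$ is what caps the carry-in tasks at $\ceiling{\mu_k}-1$. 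Your narrative inverts cause and effect: $\mu_k$ is the threshold that \emph{defines} the window, not a consequence of ``carry-in streams occupying processors at rate $\delta_{\max}(k)$.''

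The second gap is the factor $2$. It does not come from ``charging one full carry-in job of each task to the load''; it comes from the DM hypothesis, which guarantees $D_i\leq D_k\leq\Delta$ for all interfering tasks and hence lets the in-window demand be bounded by $\sum_{i\leq k}\dbf(\tau_i,\cdot)$ evaluated over a window of length at most $2\Delta$, i.e.\ by $2\Delta\cdot\load(k)$, with the carry-in surplus charged separately at $\delta_{\max}(k)\Delta$ per carry-in task. This is precisely why the theorem is stated for global DM only, whereas the paper's own Theorem~\ref{thm:schedulability-test-main}, built on the same $\Omega$-based window but avoiding the $\load$ approximation, applies to arbitrary fixed priorities and (per Corollary~\ref{cor:superior-to-BaruahFisher2007}) dominates this test. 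So: correct high-level shape, but both the derivation of $\mu_k$ together with the $\ceiling{\mu_k}-1$ carry-in count and the derivation of the factor $2$ are missing, and the window definition you do commit to would not deliver them.
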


\section{Schedulability Test by Pushing Forward}
\label{sec:our-tests}

In this section, we provide several 
conditions for the schedulability
of task $\tau_k$ under a given preemptive global fixed-priority scheduling
algorithm. They lead to a sufficient schedulability test for $\tau_k$, assuming
that the schedulability of
the tasks $\tau_1, \tau_2, \ldots, \tau_{k-1}$
under the given algorithm  is already  verified. 
This means that 
for all tasks $\tau_i$ with $i < k$ the worst-case
response time  is at most $D_i$. 
Therefore, the test should be applied for all tasks, i.e., from the
highest-priority task to the lowest-priority task, to ensure the schedulability of the task
set under the (specified/given) global fixed-priority scheduling. As the test
presented here has a high time complexity, we provide more efficient tests in
Section~\ref{sec:our-efficient-tests}. 

\subsection{Analysis Window Extension} 
We analyze the schedulability of $\tau_k$
by looking at the intervals where $\tau_k$ is active in the schedule 
$S$ provided by the global fixed-priority scheduling algorithm
according to the following definition:


\begin{definition}[{\bf active task}]
  For a schedule $S$, a task $\tau_i$ is active at time $t$, 
  if there is (at least) one job of $\tau_i$ that has arrived 
  before or at $t$ and has not finished yet at time $t$. \hfill\myendproof
\end{definition}


The schedulability conditions are proved by using \emph{contrapositive}.
Suppose 
a schedule $S$ produced
by the given global fixed-priority scheduling algorithm and that $t_d$
is the earliest (absolute) deadline at which a job of task $\tau_k$
misses its deadline.
Let $t_a$ be the time instant in $S$ such that
$\tau_k$ is continuously active in the time interval $[t_a, t_d)$
and 
is not active \emph{immediately} prior to $t_a$. By definition, $t_a$
must be the arrival time of a job of task $\tau_k$.  Suppose that
$t_d$ is the absolute deadline of the $\ell$-th job of task $\tau_k$ that
arrived in the time interval $[t_a, t_d)$. Therefore, as $ \tau_k$ is a sporadic
task, 
 $t_d - t_a \geq (\ell-1)T_k + D_k$. 
For notational brevity, we define $D_k'=(\ell-1) T_k + D_k$ and
$C_k'=\ell C_k$.

We remove all the jobs of task $\tau_k$ that
arrive before $t_a$ and all the jobs with priorities
lower than $\tau_k$ from  the schedule $S$. The
schedule of task $\tau_k$ remains unchanged in the resulting (new) schedule $S$,
due to the preemptiveness of the global fixed-priority scheduling algorithm. 
Let $C_k^*$ be the amount of time that task $\tau_k$ is executed from
$t_a$ to $t_d$. Since the $\ell$-th job of task $\tau_k$ misses its
deadline, we know that $C_k^* < \ell C_k = C_k'$. We now
introduce three functions that are defined for any $t \leq t_d$.
\begin{itemize}
\item Let $E(t, t_d)$ be the amount of workload (sum of the execution
  times) of the higher-priority jobs, i.e., from $\tau_1, \tau_2,
  \ldots, \tau_{k-1}$, \emph{executed} in the time interval $[t, t_d)$
  in schedule $S$.
\item Let $W(t, t_d)$ be $C_k^* + E(t, t_d)$.
\item Let $\Omega(t, t_d)$ be $\frac{W(t, t_d)}{t_d-t}$.
\end{itemize}

Those definitions and the deadline miss of task $\tau_k$ at time
$t_d$ lead to the following lemma.
\begin{lemma} Since $\tau_k$ misses its deadline at $t_d$ in $S$, the following
conditions hold: 
  \label{lemma:workload-overall}
  \begin{align}
    E(t_a, t_d)  \geq \;& M\times(t_d-t_a-C_k^*)\label{eq:E-geq-M}\\
    W(t_a, t_d)  > \;& M\times(t_d-t_a) - (M-1)C_k' \label{eq:W-geq-M}\\
    \Omega(t_a, t_d) > \;& M - (M-1)\times\frac{C_k'}{D_k'} \label{eq:Omega-geq-M}
  \end{align}
\end{lemma}
\begin{proof}
  Since task $\tau_k$ is active from $t_a$ to $t_d$ and is only
  executed for \emph{exactly} $C_k^*$ amount of time, we know that all
  $M$ processors must be busy executing other higher-priority jobs for
  at least $t_d-t_a-C_k^*$ amount of time. Therefore, the amount of
  workload $E(t_a, t_d)$ of the higher-priority jobs executed in the
  time interval $[t_a, t_d)$ must be at least $M\times (t_d-t_a-C_k^*)$, i.e., 
  Eq.~\eqref{eq:E-geq-M} must hold.\footnote{The condition in
    Eq.~\eqref{eq:E-geq-M} is widely used in the form
    of $E(t_a, t_d)  > M\times(t_d-t_a-\ell C_k)$. Here, since we will
    use $C_k^*$, the correct form is with $\geq$. }
Therefore, since $W(t_a, t_d)$ is defined as $E(t_a, t_d) + C_k^*$, we have
  \[
  W(t_a, t_d)  \geq M\times(t_d-t_a-C_k^*) +C_k^* > M\times(t_d-t_a) - (M-1)C_k',
  \]
  where the last inequality is due to $M \geq 2$ and $C_k' >
  C_k^*$. This leads to the conditions in Eq.~\eqref{eq:W-geq-M}.
  Since $\Omega(t_a, t_d)$ is defined as $\frac{W(t_a, t_d)}{t_d-t_a}$
  and $D_k' \leq t_d-t_a$, we have
\[
\Omega(t_a, t_d) \geq  M - (M-1)\frac{C_k'}{t_d-t_a} \geq M - (M-1)\frac{C_k'}{D_k'},  
\]
 i.e., the condition in Eq.~\eqref{eq:Omega-geq-M}.
\end{proof}

Although the interval $[t_a, t_d)$ can already be used for constructing
the schedulability tests, researchers  have tried
to push the interval of interest towards $[t_0, t_d)$ for some $t_0
\leq t_a$ based on certain properties, 
e.g., \cite{DBLP:conf/rtns/HuangC15,DBLP:conf/icdcn/BaruahF08,DBLP:conf/opodis/BaruahF07}. 
Such extensions have been
shown to provide better quantifications of the interfering
workload from the higher-priority tasks. In our analysis, we will
use a similar extension  strategy as suggested by Baruah and
Fisher~\cite{DBLP:conf/opodis/BaruahF07} based on a
user-specified parameter $\rho$. 

\begin{figure}[t]\centering
\scalebox{0.8}{
\def\ux{0.7cm} 
\def\uy{0.8cm} 
\begin{tikzpicture}[x=\ux,y=\uy,font=\sffamily,auto]
       \hspace{-0.5cm}

        \draw(24,0.3)node[anchor=south,red]{\Lightning};
        \draw(24,0.3)node[anchor=south west,red]{\footnotesize deadline miss};

        \draw[thick,->](0,0) -- (25, 0) node[anchor=west]{time};
        \draw[thick,->](24,1) -- (24, 0) node[anchor=north]{$t_d$};
        \draw[thick,->](15,0) node[anchor=north]{$t_a$} -- (15, 1);
        \draw[<->](15,0.3) -- (19.5, 0.3) node[anchor=south]{$\geq (\ell-1)T_k + D_k=D_k'$} -- (24, 0.3);
        \draw[thin, dotted] (7,0) node[anchor=north]{$t_0$} -- (7,1.3);
        \draw[thick,->](3,0) node[anchor=north]{$t_i$} -- (3, 1);
        \draw[thick,<-](11,0) node[anchor=north]{} -- (11, 1);
        \draw[<->](3,0.3) -- (7, 0.3) node[anchor=south]{$\tau_i$ is active} -- (11, 0.3);
        \draw[thick,<->](3,1.1) -- (5, 1.1) node[anchor=south]{$\phi_i$} -- (7, 1.1);
        \draw[thick,<->](7,1.1) -- (15.5, 1.1) node[anchor=south]{$\Delta$} -- (24, 1.1);

        \foreach \x in {3,7,15,24}
        \draw (\x,1pt) -- (\x,-3pt) ;
  
\end{tikzpicture}}
\caption{The notation used in Section~\ref{sec:our-tests}: 1) task
  $\tau_k$ is continuously active from $t_a$ to $t_d$ with a deadline
  miss at time $t_d$; 2) time instant $t_0$ is the smallest value of
  $t \leq t_a$ such that $\Omega(t, t_d) \geq \mu_k$; 3) time instant $t_i$ is the
  arrival time of a higher-priority carry-in task $\tau_i$ if $\tau_i$ is continuously
  active in time interval $[t_i, t_0+\varepsilon]$, where $t_i < t_0$ and
  $\varepsilon > 0$ is an arbitrarily small number; 4) $\phi_i$ is $t_0 -
  t_i$ and $\Delta$ is $t_d - t_0$. }
\label{fig:schedule-notation}
\end{figure}
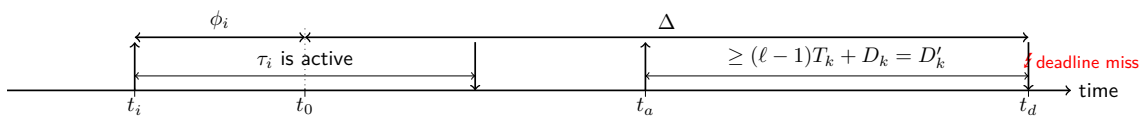

The following definition and lemmas are from
\cite{DBLP:conf/opodis/BaruahF07}. Figure~\ref{fig:schedule-notation}
provides an illustration of our notation based on the above
definitions.

\begin{definition}
  \label{definition-t0}
  Suppose that $\mu_k = M - (M-1)\rho$ for a certain $\rho$ with $1
  \geq \rho \geq \frac{C_k'}{D_k'}$. For the schedule $S$, let time
  instant $t_0$ be the smallest value of $t \leq t_a$ such that
  $\Omega(t, t_d) \geq \mu_k$. This means, $\Omega(t, t_d) < \mu_k$ for
  any $t < t_0$.
 \hfill\myendproof
\end{definition}

\begin{lemma}
  \label{lemma:existence-t0}
  If $\tau_k$ misses its deadline at $t_d$, for any $\rho$ with $1
  \geq \rho \geq \frac{C_k'}{D_k'}$, the time
  $t_0$, as defined in Definition~\ref{definition-t0}, always exists with
  $\Omega(t_0, t_d) \geq \mu_k$ and $t_0 \leq t_a$.
\end{lemma}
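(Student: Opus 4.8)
The plan is to realize $t_0$ as the minimum of the set $A := \{\, t \le t_a : \Omega(t,t_d) \ge \mu_k \,\}$ and to establish three facts: that $A$ is nonempty with $t_a \in A$ (which forces $\min A \le t_a$), that $A$ is closed, and that $A$ is bounded below; the last two guarantee that the minimum is attained and finite. First I would check that $t_a$ itself lies in $A$. By \eqref{eq:Omega-geq-M} of Lemma~\ref{lemma:workload-overall} we have $\Omega(t_a,t_d) > M-(M-1)\tfrac{C_k'}{D_k'}$, and since $\rho \ge \tfrac{C_k'}{D_k'}$ and $M-1 > 0$ (as $M \ge 2$), we get $\mu_k = M-(M-1)\rho \le M-(M-1)\tfrac{C_k'}{D_k'} < \Omega(t_a,t_d)$. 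Hence $\Omega(t_a,t_d) > \mu_k$, so $t_a \in A$, $A \ne \emptyset$, and $t_0 = \min A \le t_a$ once the minimum is shown to exist.

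Next I would argue that $A$ is closed and bounded below. The map $t \mapsto \Omega(t,t_d) = \tfrac{C_k^* + E(t,t_d)}{t_d - t}$ is continuous on $(-\infty,t_d)$: the numerator is continuous in $t$ because $E(\cdot,t_d)$ is (it is the integral over $[t,t_d)$ of the number of processors running higher-priority jobs), and the denominator is continuous and positive for $t < t_d$. Thus $A$, the intersection of $(-\infty,t_a]$ with the preimage of the closed set $[\mu_k,\infty)$, is closed. For boundedness below I would use that the schedule begins at a finite instant, namely the arrival of its first job: for $t$ before that instant no processor executes anything, so $E(t,t_d)$ is constant there while $t_d - t \to \infty$, forcing $\Omega(t,t_d) \to 0$ as $t \to -\infty$. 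Since $\mu_k = M-(M-1)\rho \ge 1 > 0$ (using $\rho \le 1$), some threshold $t_1$ satisfies $\Omega(t,t_d) < \mu_k$ for all $t \le t_1$, so $A \subseteq (t_1,t_a]$ is bounded below. Consequently $t_0 := \min A$ exists, lies in $A$ (so $\Omega(t_0,t_d) \ge \mu_k$), and satisfies $t_0 \le t_a$; minimality of $t_0$ in $A$ is precisely the assertion $\Omega(t,t_d) < \mu_k$ for all $t < t_0$ required by Definition~\ref{definition-t0}.

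The step I expect to be the main obstacle is boundedness below, that is, ruling out that $\Omega(t,t_d)$ stays $\ge \mu_k$ indefinitely as $t \to -\infty$. My argument leans on the schedule having a finite past, which makes the numerator $W(t,t_d)$ eventually constant; an alternative that avoids this assumption is to bound the executed higher-priority work by the released work, yielding $\limsup_{t\to-\infty}\Omega(t,t_d) \le \sum_{i<k} U_i$, after which the residual case $\sum_{i<k} U_i \ge \mu_k$ would have to be excluded separately. The continuity and nonemptiness steps are routine by comparison.
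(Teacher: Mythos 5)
Your proposal is correct and follows essentially the same route as the paper: the paper likewise observes that $\Omega(t_a,t_d) > M-(M-1)\tfrac{C_k'}{D_k'} \geq \mu_k$ via Eq.~\eqref{eq:Omega-geq-M} and $\rho \geq \tfrac{C_k'}{D_k'}$, and then concludes that $t_0 \leq t_a$ exists ``at least when the system starts,'' which is exactly your boundedness-below argument. You merely make explicit the closedness/attainment details that the paper leaves implicit.
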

\begin{proof}
  By Eq.~\eqref{eq:Omega-geq-M} from Lemma~\ref{lemma:workload-overall} and $\rho \geq
  \frac{C_k'}{D_k'}$, we know 
\[ \Omega(t_a, t_d) > M -
  (M-1)\times\frac{C_k'}{D_k'} \geq M-(M-1)\rho=\mu_k.
\] Therefore,
  such a time instant $t_0 \leq t_a$ exists, at least 
  when the system starts.  
\end{proof}

\begin{definition}[{\bf carry-in task}]
  A task $\tau_i$ is a carry-in task in the schedule $S$, if 
  $\tau_i$ is continuously active in a time interval $[t_i, t_0+\varepsilon]$,
  for $t_i < t_0$ and an arbitrarily small $\varepsilon > 0$.
  \hfill\myendproof
\end{definition}


\begin{lemma}
  \label{lemma:number-carry-in-tasks}
  For  $1 \geq \rho \geq \frac{C_k'}{D_k'}$, there are at most
  $\ceiling{M - (M-1)\rho}-1$ carry-in tasks at $t_0$ in schedule
  $S$.
\end{lemma}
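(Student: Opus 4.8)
The plan is to count carry-in tasks at $t_0$ by exploiting the minimality of $t_0$ established in Definition~\ref{definition-t0}, which guarantees $\Omega(t_0 - \varepsilon', t_d) < \mu_k$ for arbitrarily small $\varepsilon' > 0$. The key observation is that just to the left of $t_0$, the ratio of executed workload to interval length drops below $\mu_k$, and each carry-in task is, by definition, continuously active (hence occupying a processor whenever it is not blocked by higher-priority work) across the boundary $t_0$. First I would consider the instant $t_0 - \varepsilon'$ and compare $W(t_0-\varepsilon', t_d)$ against $W(t_0, t_d)$. Since every carry-in task $\tau_i$ is continuously active on $[t_i, t_0+\varepsilon]$ with $t_i < t_0$, in the global fixed-priority workload-conserving schedule each such task must actually be \emph{executing} during $[t_0 - \varepsilon', t_0)$ — otherwise, being active but not executing, all $M$ processors would be busy with even higher-priority jobs, but a carry-in task by priority order contributes to the higher-priority workload $E$ that feeds into $W$. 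The intended contradiction is that too many simultaneously active carry-in tasks would force $\Omega(t_0-\varepsilon', t_d)$ to remain at least $\mu_k$, violating minimality.

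Concretely, I would argue as follows. Suppose for contradiction there are at least $\ceiling{\mu_k}$ carry-in tasks active immediately before $t_0$. Each contributes its full execution rate (slope $1$) to $E(\cdot, t_d)$ across the small interval $[t_0-\varepsilon', t_0)$, because each is continuously active and hence either executing or preempted only by other higher-priority tasks that \emph{also} count toward $E$. Thus the total higher-priority workload executed in $[t_0-\varepsilon', t_0)$ is at least $\min\{M, \#\text{carry-in tasks}\}\cdot \varepsilon'$; with at least $\ceiling{\mu_k} \geq \mu_k$ such tasks and noting $\mu_k \leq M$, this quantity is at least $\mu_k \cdot \varepsilon'$ (capped appropriately by $M$). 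Adding this to $W(t_0, t_d)$ and dividing by the slightly longer interval $t_d - t_0 + \varepsilon'$, I would show that $\Omega(t_0 - \varepsilon', t_d) \geq \mu_k$ still holds, contradicting the defining minimality of $t_0$ from Definition~\ref{definition-t0}. Since the number of carry-in tasks is an integer, being strictly fewer than $\mu_k$ means at most $\ceiling{\mu_k} - 1 = \ceiling{M - (M-1)\rho} - 1$ of them.

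The main obstacle will be the bookkeeping around the boundary instant $t_0$ and the role of $\varepsilon'$: I must handle the fact that $\Omega(t_0, t_d) \geq \mu_k$ holds with a weak inequality while the minimality gives only $\Omega(t, t_d) < \mu_k$ for $t < t_0$, so the contradiction must be extracted in the limit as $\varepsilon' \to 0^+$. The delicate point is verifying that each active carry-in task genuinely contributes execution at rate $1$ across $[t_0-\varepsilon', t_0)$ rather than idling — this relies on the workload-conserving property together with the active-but-not-executing dichotomy: if an active task is not running, all $M$ processors run strictly higher-priority work, and one must confirm this still pumps enough mass into $E$. I would make this rigorous by counting, at each instant in $[t_0-\varepsilon', t_0)$, the number of active carry-in tasks that are running versus the processors occupied by higher-priority interference, concluding that in either case the aggregate executed higher-priority workload rate is at least $\min\{M, \text{number of active carry-in tasks}\}$. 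Everything else is then a short ratio manipulation comparing $\Omega$ at $t_0 - \varepsilon'$ and $t_0$, completing the contrapositive.
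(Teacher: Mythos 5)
Your argument is correct and is essentially the standard proof of this lemma, which the paper itself does not reproduce but cites from Baruah and Fisher: compare $\Omega(t_0-\varepsilon',t_d)$ with $\Omega(t_0,t_d)$, use the fact that at every instant in $[t_0-\varepsilon',t_0)$ the executed higher-priority workload accrues at rate at least $\min\{M,n\}$ (each carry-in task is either running or blocked with all $M$ processors busy on higher-priority jobs), and conclude $n<\mu_k$ from the minimality of $t_0$. One small simplification: no limit $\varepsilon'\to 0^+$ is needed — for any fixed sufficiently small $\varepsilon'>0$ the weighted-average bound already gives $\Omega(t_0-\varepsilon',t_d)\geq\mu_k$, directly contradicting the strict inequality in Definition~\ref{definition-t0}.
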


\subsection{Analysis Based on Workload Functions}

By extending the interval of interest to $[t_0, t_d)$,
Baruah and Fisher
provided the schedulability test  shown in
Theorem~\ref{thm:baruah-fisher07-bug-free} in this paper.  However,
they analyzed the workload in 
$[t_0, t_d)$ based on
the DBFs  by using the function
$\load(k)$ as an approximation, which will be shown pessimistic in
Corollary~\ref{cor:superior-to-BaruahFisher2007} in
Section~\ref{sec:global-DM}.  Moreover, their final analysis can only
be applied for global DM.
We will
carefully analyze the workload executed in 
$[t_0, t_d)$ 
to ensure that the analytical accuracy is 
 better preserved
and that the analysis can be used for any global fixed-priority preemptive
scheduling.
We will demonstrate that our analysis dominates the analysis by Baruah and
Fisher~\cite{DBLP:conf/opodis/BaruahF07}  in
Corollary~\ref{cor:superior-to-BaruahFisher2007}. 

\emph{For the analysis before
Theorem~\ref{thm:schedulability-test-main}, we will assume that
$\rho$ is given and $t_0$ is already defined. }
According to Lemma~\ref{lemma:number-carry-in-tasks}, at time $t_0$ at most
$\ceiling{M - (M-1)\rho}-1$ tasks are active  in schedule $S$. 
We quantify their contribution to the \emph{executed} workload in time
interval $[t_0,t_d)$ with two different forms from Lemma~\ref{lemma:workload-carry-in-tasks}, denoted by
$\omega^{heavy}_i(t_d-t_0)$,
and from Lemma~\ref{lemma:workload-carry-in-light-tasks}, denoted by
$\omega^{light}_i(t_d-t_0)$. While
Lemma~\ref{lemma:workload-carry-in-tasks} can be used in general,
Lemma~\ref{lemma:workload-carry-in-light-tasks} only holds if $U_i\leq \rho$. 


\begin{lemma}
  \label{lemma:workload-carry-in-tasks}
  If all jobs of a higher-priority task $\tau_i$ meet their deadlines, the
  upper bound $\omega^{heavy}_i(\Delta)$ on the workload
  of task $\tau_i$ executed from $t_0$ to
  $t_d$ with $\Delta=t_d-t_0$ in schedule $S$ is at most: 
  \begin{align}
    \label{eq:workload-heavy}
    \omega^{heavy}_i(\Delta) = work_i(\Delta + D_i).
  \end{align}
\end{lemma}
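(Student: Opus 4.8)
The plan is to follow the execution of $\tau_i$ inside $[t_0,t_d)$ job-by-job and then collapse the contributions into the workload function. Since $\tau_i$ is a carry-in task it is active at $t_0$, so at least one job of $\tau_i$ has arrived but not yet finished by $t_0$. Let $a$ be the arrival time of the \emph{earliest} such still-unfinished job and set $\phi_i=t_0-a\ge 0$ (this is the $t_i$/$\phi_i$ of Figure~\ref{fig:schedule-notation} specialized to the job that actually carries over $t_0$). By the choice of $a$, every job of $\tau_i$ released strictly before $a$ is already complete at $t_0$ and therefore executes nothing in $[t_0,t_d)$; hence all of the execution of $\tau_i$ inside $[t_0,t_d)$ is attributable to jobs released in $[a,t_d)$.

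The hypothesis that \emph{all jobs of $\tau_i$ meet their deadlines} enters here. The job arriving at $a$ is still unfinished at $t_0$, so its absolute deadline $a+D_i$ must lie after $t_0$; otherwise that job would have missed its deadline. This yields $a>t_0-D_i$, equivalently $\phi_i<D_i$, and therefore $t_d-a=\Delta+\phi_i<\Delta+D_i$, where $\Delta=t_d-t_0$ as in Definition~\ref{definition-t0}.

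With these two observations the argument closes in three moves. First I would bound the execution of $\tau_i$ in $[t_0,t_d)$ by its execution in the larger window $[a,t_d)$, which is immediate because $a\le t_0$. Second, I would show that, over all sporadic release patterns consistent with the minimum inter-arrival time $T_i$ (using $C_i\le T_i$, which follows from $U_i\le 1$), the total execution attributable to jobs released in a window of length $t_d-a$ that begins at a release is maximized by spacing the releases exactly $T_i$ apart and then equals precisely $work_i(t_d-a)$ from Eq.~\eqref{eq:workload-function}: each job except the last released inside the window has room to run for its full $C_i$, contributing the $\lfloor (t_d-a)/T_i\rfloor C_i$ term, while the final job, released at most $(t_d-a)-\lfloor (t_d-a)/T_i\rfloor T_i$ before $t_d$, can run inside the window for at most $\min\{C_i,(t_d-a)-\lfloor (t_d-a)/T_i\rfloor T_i\}$, which is exactly the second term. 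Third, since $work_i$ is non-decreasing and $t_d-a<\Delta+D_i$, I conclude that the executed workload is at most $work_i(t_d-a)\le work_i(\Delta+D_i)=\omega^{heavy}_i(\Delta)$.

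The step I expect to be the main obstacle is the second one: arguing that the executed workload of the jobs released in $[a,t_d)$ is bounded by $work_i(t_d-a)$ rather than by the cruder job count $\lceil (t_d-a)/T_i\rceil C_i$. The subtlety is that execution is measured only up to the window boundary $t_d$, so a job released close to $t_d$ cannot contribute its full $C_i$ within $[t_0,t_d)$ even though it may continue running past $t_d$; one must verify that this truncation of the last job coincides exactly with the $\min\{C_i,\cdot\}$ term of the workload function, and that packing the releases as tightly as the sporadic constraint allows is indeed the worst case. The remaining monotonicity and substitution steps are routine once this identity is established.
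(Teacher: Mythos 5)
Your proof is correct and follows essentially the same route as the paper's: the deadline-meeting hypothesis confines every job of $\tau_i$ that executes in $[t_0,t_d)$ to arrivals in an interval of length at most $\Delta+D_i$, after which the bound is exactly the defining property of $work_i$. The step you flag as the main obstacle is immediate in the paper because $work_i(t)$ is \emph{defined} as the maximum sequential execution of jobs released in a window of length $t$ (Eq.~\eqref{eq:workload-function} merely instantiates it); the only loose end in your write-up is the trivial case where $\tau_i$ has no unfinished job at $t_0$, for which the stronger bound $work_i(\Delta)\leq work_i(\Delta+D_i)$ applies directly.
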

\begin{proof}
  Since all jobs of $\tau_i$ meet their deadlines,
  the jobs of  $\tau_i$ executed in $[t_0, t_d)$ must arrive in the time interval $(t_0
  - D_i, t_d)$. Therefore, the workload of task $\tau_i$ that can be
  sequentially executed is upper bounded by the workload function with
  length $t_d-(t_0-D_i)=\Delta+D_i$. 
\end{proof}


The key improvement achieved in this paper is due to the following
Lemma~\ref{lemma:workload-carry-in-light-tasks} to safely bound the
workload of a light task. 

Figure~\ref{fig:work-light-functions}
demonstrates the workload function for different cases in
Lemma~\ref{lemma:workload-carry-in-light-tasks}, together with a
linear approximation that will be presented in
Lemma~\ref{lemma:workload-carry-in-light-tasks-approximate}.  For the
workload function defined in Eq.~(\ref{eq:workload-light}), informally
speaking, the workload defined by $(p_2+1)C_i + \max\{0, C_i-\rho (T_i
- q_2)\}$ can be imagined as if 1) there is an offset for $C_i$ amount
of execution time at beginning of the interval, and 2) the workload in
each period starting from $C_i+p_2T_i$ to $C_i+(p_2+1)T_i$ is pushed
to the end of the period with a slope $\rho$. For example, in
Figure~\ref{fig:work-light-functions}(b), the offset is $3$, the
workload increases from $3$ at time $7$ to $6$ at time $13$ with a
slope $\rho=0.5$, the workload increases from $6$ at time $17$ to $9$
at time $23$ with a slope $\rho=0.5$, etc.

\begin{figure*}[t]\centering
  \subfloat[$U_i=0.3$ and $\rho=0.3$]{
  \centering\scalebox{0.7}{  
  \begin{tikzpicture}[node distance=0cm, y=0.24cm, x=0.14cm]
    \draw[->] (0,0) -- coordinate (x axis mid) (64,0) node[right]{$\Delta$};
    \draw[->] (0,0) -- coordinate (y axis mid) (0,21) node[above]{};
    \foreach \x in {0,5,...,60}
    \draw [line width=0.02pt, dotted, gray](\x,20) -- (\x,-3pt) 
    node[anchor=north] {\x}; 
    \foreach \x in {0,1,...,60}
    \draw [line width=0.02pt, dotted, gray](\x,20) -- (\x,-3pt) node[]{};
    \foreach \y in {0,2,...,20}
    \draw [line width=0.02pt, dotted, gray] (60, \y) -- (-3pt, \y) 
    node[anchor=east] {\y}; 
   \foreach \y in {0,1,...,20}
    \draw [line width=0.02pt, dotted, gray] (60, \y) -- (-3pt, \y) 
    node[anchor=east] {}; 
    
		\draw[line width = 1pt,color=red, dotted] plot[]
     (0, 0) -- (3, 3) -- (10, 3) -- (13, 6) -- (20, 6) -- (23, 9) -- (30, 9)
     -- (33, 12) -- (40, 12) -- (43, 15) -- (50, 15) -- (53, 18) -- (60, 18);

	\draw[color= black, line width = 1.5pt]
        (3,3) -- (53,18) -- (60, 20.1);

	\draw[color= blue, line width = 3pt, dashed]
        (0, 2.1) -- (60,20.1) ;

      \draw[black](70.5,4) node[anchor=east]{\footnotesize
      $\omega_i^{light}(\Delta)=(p_2+1)C_i + \max\{0, C_i-\rho (T_i - q_2)\}$
      (solid)}; \draw[red](70.5,6)
      node[anchor=east]{\footnotesize $work_i(\Delta)$ (dotted)}; 
      \draw[blue](70.5,2) node[anchor=east]
      {\footnotesize safe approximation of $\omega_i^{light}(\Delta)$ in
      Lemma~\ref{lemma:workload-carry-in-light-tasks-approximate} (dashed)};
  \end{tikzpicture}}}
\subfloat[$U_i = 0.3$ and $\rho=0.5$]{
  \centering\scalebox{0.7}{  
  \begin{tikzpicture}[node distance=0cm, y=0.24cm, x=0.14cm]
    \draw[->] (0,0) -- coordinate (x axis mid) (64,0) node[right]{$\Delta$};
    \draw[->] (0,0) -- coordinate (y axis mid) (0,21) node[above]{};
    \foreach \x in {0,5,...,60}
    \draw [line width=0.02pt, dotted, gray](\x,20) -- (\x,-3pt) 
    node[anchor=north] {\x}; 
    \foreach \x in {0,1,...,60}
    \draw [line width=0.02pt, dotted, gray](\x,20) -- (\x,-3pt) node[]{};
    \foreach \y in {0,2,...,20}
    \draw [line width=0.02pt, dotted, gray] (60, \y) -- (-3pt, \y) 
    node[anchor=east] {\y}; 
   \foreach \y in {0,1,...,20}
    \draw [line width=0.02pt, dotted, gray] (60, \y) -- (-3pt, \y) 
    node[anchor=east] {}; 
    
		\draw[line width = 1pt,color=red, dotted] plot[]
     (0, 0) -- (3, 3) -- (10, 3) -- (13, 6) -- (20, 6) -- (23, 9) -- (30, 9)
     -- (33, 12) -- (40, 12) -- (43, 15) -- (50, 15) -- (53, 18) -- (60, 18);

	\draw[color= black, line width = 1.7pt]
        (3,3) -- (7,3) -- (13, 6) -- (17, 6) -- (23, 9) -- (27, 9)
     -- (33,12) -- (37,12) -- (43, 15) -- (47, 15) -- (53, 18) -- (57, 18) -- (60, 19.5);

	\draw[color= blue, line width = 2pt, dotted]
        (0, 2.1) -- (60,20.1) ;

      \draw[black](70.5,4) node[anchor=east]{\footnotesize
      $\omega_i^{light}(\Delta)=(p_2+1)C_i + \max\{0, C_i-\rho (T_i - q_2)\}$
      (solid)}; \draw[red](70.5,6)
      node[anchor=east]{\footnotesize $work_i(\Delta)$ (dotted)}; 
      \draw[blue](70.5,2) node[anchor=east]
      {\footnotesize safe approximation of $\omega_i^{light}(\Delta)$ in
      Lemma~\ref{lemma:workload-carry-in-light-tasks-approximate} (dashed)};

  \end{tikzpicture}}}
\caption{Two examples for the approximation of $work_i$ for $\tau_i$
with $T_i=10, C_i=3, D_i = 45$: 
black curves for
$\omega_i^{light}(\Delta)$ defined in
Lemma~\ref{lemma:workload-carry-in-light-tasks} 
and the 
approximation 
in Lemma~\ref{lemma:workload-carry-in-light-tasks-approximate} (blue curves).}
  \label{fig:work-light-functions} 
\end{figure*}
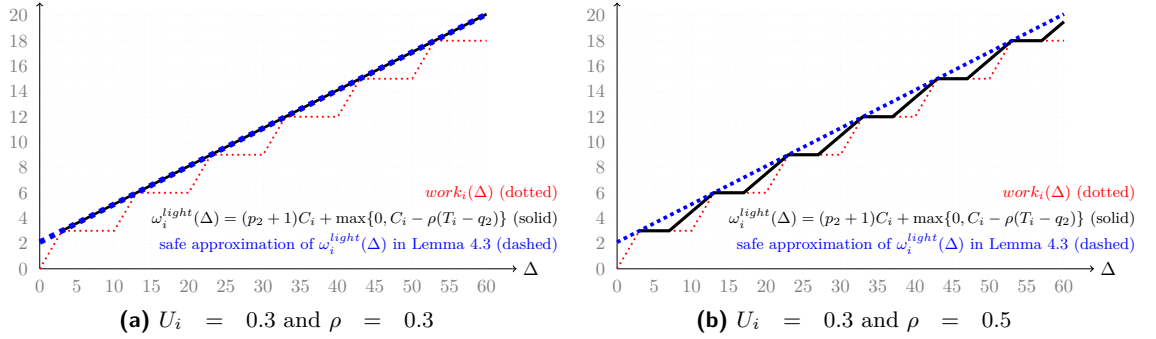

\begin{lemma}
  \label{lemma:workload-carry-in-light-tasks}
  If all jobs of a higher-priority task $\tau_i$ meet their deadlines
  and $U_i \leq
  \rho \leq 1$, the upper bound $\omega^{light}_i(\Delta)$ on the workload
  of task $\tau_i$ executed from $t_0$ to $t_d$ with $\Delta=t_d-t_0$ in schedule $S$ is:
  {\small
  \begin{align}
    \label{eq:workload-light}
    \omega^{light}_i(\Delta) =    
\begin{cases}
      \Delta & \mbox{ if } 0 < \Delta \leq
      C_i\\      
  \max
    \begin{Bmatrix*}[l]
   work_i(\Delta),\\
     (p_2+1)C_i + \max\{0,C_i-\rho (T_i-q_2)\}
    \end{Bmatrix*} 
& \mbox{ if } \Delta > C_i
    \end{cases}
  \end{align} }   
  where $p_2 = \ceiling{(\Delta-C_i)/T_i}-1$ and $q_2$ is $\Delta-C_i
  - p_2 T_i$.
\end{lemma}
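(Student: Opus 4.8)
The plan is to argue by \emph{contrapositive} inside the configuration already fixed before the lemma: $\tau_i$ is a carry-in task, continuously active on some $[t_i,t_0+\varepsilon]$ with $t_i<t_0$, all of its jobs meet their deadlines, and $t_0$ is the instant of Definition~\ref{definition-t0} for $\mu_k=M-(M-1)\rho$ with $U_i\le\rho\le 1$. I would first dispose of the regime $0<\Delta\le C_i$: because intra-task parallelism is forbidden, $\tau_i$ uses at most one processor at a time and can therefore be executed for at most $\Delta$ time units inside an interval of length $\Delta$, so $\omega_i^{light}(\Delta)=\Delta$ is immediate. The content of the lemma is the case $\Delta>C_i$.

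For $\Delta>C_i$ I would start from the coarse estimate $work_i(\Delta+D_i)$ of Lemma~\ref{lemma:workload-carry-in-tasks}, which already confines every job executed in $[t_0,t_d)$ to the arrival window $(t_0-D_i,t_d)$, and then sharpen it using the definition of $t_0$. The jobs executed in $[t_0,t_d)$ split into a carry-in part (arriving before $t_0$) and body jobs (arriving inside $[t_0,t_d)$, separated by at least $T_i$ and each worth at most $C_i$). The carry-in part is precisely what makes $work_i(\Delta+D_i)$ wasteful, since it charges a full $D_i$-length window; the heart of the proof is to show that for a \emph{light} task this part costs no more than a single offset $C_i$ at the start of the interval. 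This is where $U_i\le\rho$ is used. By Definition~\ref{definition-t0}, $\Omega(t,t_d)<\mu_k$ for every $t<t_0$, whereas $\Omega(t_0,t_d)\ge\mu_k$; subtracting these shows that the executed higher-priority workload in any $[t,t_0)$ accumulates at rate below $\mu_k$, and combining this aggregate density bound with the one-processor-per-task constraint and $U_i\le\rho$ forces $\tau_i$'s own execution, beyond the initial offset, to build up at an effective rate of at most $\rho$. Informally this produces a linear envelope of slope $\rho$ anchored at $(C_i,C_i)$, the dashed line of Figure~\ref{fig:work-light-functions} formalized in Lemma~\ref{lemma:workload-carry-in-light-tasks-approximate}.

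To recover the exact piecewise expression rather than the envelope, I would follow the periodic structure. Writing $\Delta-C_i=p_2T_i+q_2$ with $p_2=\ceiling{(\Delta-C_i)/T_i}-1$ and $0<q_2\le T_i$, the offset $C_i$ and the $p_2$ whole periods account for $(p_2+1)C_i$, while the final partial period of length $q_2$ can contribute only the part of a $C_i$-block that, when pushed toward the end of its period at slope $\rho$, protrudes into $[t_0,t_d)$, namely $\max\{0,C_i-\rho(T_i-q_2)\}$; the assumption $U_i\le\rho$ gives $C_i/\rho\le T_i$, so each slope-$\rho$ ramp fits within one period. The outer $\max$ with $work_i(\Delta)$ comes from a case split: in the alignment where the pushed-forward carry-in shape binds one obtains the second term, while in the complementary alignment (carry-in offset effectively absorbed, only jobs released inside $[t_0,t_d)$ relevant) one obtains $work_i(\Delta)$; bounding the true workload by one term in each case bounds it by their maximum.

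The main obstacle is the middle step: turning the \emph{global} slack at $t_0$, which constrains the aggregate executed workload $E(\cdot,t_d)$ of all higher-priority tasks, into a \emph{per-task} slope-$\rho$ push-forward for $\tau_i$ alone, and doing so uniformly when several carry-in jobs are present rather than a single idealized one. I also expect delicate bookkeeping at the boundaries—the transition at $\Delta=C_i$, the last partial period, and the extreme $q_2=T_i$—to guarantee the formula is an upper bound everywhere and that the case analysis yielding the $\max$ is exhaustive.
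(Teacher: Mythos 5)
Your overall route coincides with the paper's: the trivial bound for $0<\Delta\le C_i$, the no-carry-in case yielding $work_i(\Delta)$, and, for a carry-in light task, a push-forward bound obtained by playing $\Omega(t,t_d)<\mu_k$ for $t<t_0$ against $\Omega(t_0,t_d)\ge\mu_k$, combined with the observation that an active-but-blocked task forces all $M$ processors to be busy. The outer $\max$ in Eq.~\eqref{eq:workload-light} arises essentially as you predict (one branch from the carry-in-free alignment, the other from the pushed-forward alignment).

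However, you explicitly leave the pivotal step---turning the aggregate density information at $t_0$ into a per-task slope-$\rho$ statement for $\tau_i$---as an unresolved ``obstacle,'' and that step is the entire content of the lemma; as written the proposal is a plan rather than a proof. The paper closes the gap in a few lines, and in the opposite direction from your phrasing: instead of bounding the rate at which $\tau_i$ executes \emph{after} $t_0$, it lower-bounds how much $\tau_i$ has \emph{already} executed before $t_0$. Let $t_i$ be the start of $\tau_i$'s active interval containing $t_0$, let $\phi_i=t_0-t_i$, and let $y_i$ be the amount of $\tau_i$ executed in $[t_i,t_0)$. Subtracting $W(t_0,t_d)\ge\Delta\mu_k$ from $W(t_i,t_d)<(\Delta+\phi_i)\mu_k$ shows that the total workload executed in $[t_i,t_0)$ is strictly less than $\phi_i\mu_k$; since $\tau_i$ is active throughout $[t_i,t_0)$ and blocked for $\phi_i-y_i$ of that time, this workload is at least $M(\phi_i-y_i)+y_i$, and $M(\phi_i-y_i)+y_i<\phi_i\bigl(M-(M-1)\rho\bigr)$ with $M\ge 2$ yields $y_i>\rho\phi_i$. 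Consequently the workload of $\tau_i$ executed in $[t_0,t_d)$ is at most $work_i(\Delta+\phi_i)-\rho\phi_i$, and the rest of the proof is a four-case analysis of this expression over $q_1+q_2\in(0,2T_i]$ (writing $\phi_i=p_1T_i+q_1$), in which $U_i\le\rho$ absorbs the $p_1$ whole carry-in periods and $\rho\le 1$ settles the boundary segments---exactly where your two branches and the ``ramp fits within one period'' observation materialize. Finally, your concern about several simultaneous carry-in jobs is moot: the argument is run independently for each carry-in task with its own $t_i$, and the inequality $\Omega(t,t_d)<\mu_k$ holds at \emph{every} $t<t_0$ by the minimality of $t_0$, so no uniformity issue arises.
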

\begin{proof}
  As the case when 
    $0 < \Delta \leq C_i$ is due to the definition, let $\Delta > C_i$
    for   the rest of the proof.
  Based on the
  schedule $S$, let $t_i< t_0$ be the time instant such that task
  $\tau_i$ is continuously active in the time interval $[t_i, t_0]$
  and task $\tau_i$ is not active \emph{immediately} prior to $t_i$. If $t_i$ does not
  exist, then task $\tau_i$ does not have 
  workload released
  before $t_0$ that is still active. Therefore, the worst-case workload is 
  $work_i(\Delta)$ in this case. 

  By the definition of $t_i$, if it exists, there are at
  most $\ceiling{\frac{\phi_i}{T_i}}$ jobs of task $\tau_i$ executed
  in time interval $(t_i, t_0]$.
  For the rest of the proof, we only consider that $t_i$ exists and that 
  $\Delta > C_i$.   By definition, $t_i$ must be the arrival time of a
  job of task $\tau_i$. Moreover, due to the definition of $t_0$ in
  Definition~\ref{definition-t0}, we know that $\Omega(t_i, t_d) <
  M-(M-1)\rho$. Let $\phi_i$ be $t_0 - t_i$. Since $\Omega(t_i, t_d) <
  M-(M-1)\rho$ and $\Omega(t_0, t_d) \geq M-(M-1)\rho$, we have
  \begin{align}
    W(t_0, t_d) =\Omega(t_0, t_d) \cdot (t_d-t_0) & \geq (t_d-t_0)
    \mu_k = \Delta\mu_k \label{eq:W-of-td-t0}\\
    W(t_i, t_d) = \Omega(t_i, t_d) \cdot (t_d-t_i)& < (t_d-t_i) \mu_k
    = (\Delta+\phi_i)\mu_k \label{eq:W-of-td-ti}
  \end{align}
  Substracting Eq.~(\ref{eq:W-of-td-ti}) by Eq.~(\ref{eq:W-of-td-t0}), we have $ W(t_i, t_d) -  W(t_0, t_d) < \phi_i \mu_k$, i.e., in schedule $S$ the workload executed in
  time interval $[t_i, t_0)$ is \emph{strictly less} than $\phi_i
  \mu_k$. Suppose that $y_i$ is the amount of time that task $\tau_i$
  is executed in time interval $[t_i, t_0)$, i.e., task $\tau_i$ is
  active but blocked by other higher-priority jobs for $\phi_i - y_i$
  amount of time in this time interval. 
  When task $\tau_i$
  is blocked in global fixed-priority scheduling,  
  all the $M$
  processors are executing other jobs. The workload executed in time
  interval $[t_i, t_0)$ is at least $M(\phi_i - y_i) + y_i$. 
  Therefore, by the above discussions, we know that
  \begin{equation}
    \label{eq:y-lowerbound}
M(\phi_i - y_i) + y_i < \phi_i \mu_k = \phi_i (M-(M-1)\rho)\Rightarrow y_i >
\rho \phi_i,
  \end{equation}
  since $M \geq 2$. At time $t_0$, the remaining execution time of the
  jobs of task $\tau_i$ that arrived before $t_0$ in schedule $S$ is at
  most $\ceiling{\phi_i/T_i}C_i - \rho\phi_i$. Note that the
  existence of $t_i$ in our definition means that
  $\ceiling{\phi_i/T_i}C_i - y_i > 0$, i.e., $\ceiling{\phi_i/T_i}C_i - \rho\phi_i > 0$.

  The workload of task $\tau_i$ that is executed in the time interval $[t_i,
  t_d)$ in schedule $S$ is at most $work_i(t_d-t_i) =
  work_i(\Delta+\phi_i)$.  The workload of task $\tau_i$ that is executed in
  the time interval $[t_i, t_d)$ is at least $y > \rho \phi_i$. Therefore,
  the workload of task $\tau_i$ that is executed in the time interval $[t_0,
  t_d)$ in schedule $S$ is upper bounded by $work_i(\Delta+\phi_i) - \rho
  \phi_i$. 

  The rest of the proof is to provide an upper bound of
  $work_i(\Delta+\phi_i) - \rho \phi_i$ for any arbitrary $\phi_i >
  0$. The proof involves some detailed manipulations of the workload
  function. Before proceeding, we explain two basic properties of the
  workload function here:
  \begin{itemize}
  \item When $p$ is a non-negative integer and $0 \leq x\leq C_i$,
    $work_i(p T_i+x)=p C_i+x$.
  \item When $p$ is a non-negative integer and $0 \leq x$,
    $work_i(p T_i+x)=p C_i + work_i(x)$.
  \end{itemize}

  To identify the exact value of $work_i(\Delta+\phi_i)$, we define
  the following variables $p_1, p_2, q_1$, and $q_2$ for brevity:
  \begin{itemize}
  \item Let $p_1$ be $\ceiling{\phi_i/T_i}-1$ and $q_1$ be $\phi_i -
    p_1 T_i$, i.e., $p_1+1$ is the
    number of jobs of task $\tau_i$ that can be released in $[t_i,
    t_0]$. By definition $\phi_i > 0$, which implies that $p_1$ is a non-negative integer, $0 <
    q_1 \leq T_i$, and $\phi_i=p_1 T_i + q_1$.
  \item Let $p_2$ be $\ceiling{(\Delta-C_i)/T_i}-1$ and $q_2$ be
    $\Delta-C_i -  p_2 T_i$, i.e., $p_2+1$ is the
    number of jobs of task $\tau_i$ that can be released in $[t_0+C_i, t_d]$. Due to the assumption $\Delta > C_i$, we know that $p_2$ is a non-negative
    integer, $0 < q_2 \leq T_i$, and $\Delta-C_i = p_2 T_i + q_2$.
  \end{itemize}
  By the above definition, we achieve $\phi_i+\Delta = (p_1+p_2)T_i + q_1+q_2+ C_i$, and
  \begin{align}
& work_i(\Delta+\phi_i) - \rho \phi_i  \nonumber\\
&
=  work_i((p_1+p_2)T_i + q_1+q_2+ C_i) - \rho(p_1T_i + q_1)\nonumber\\
&=   work_i(p_2T_i + q_1+q_2+ C_i) + p_1 C_i - \rho(p_1T_i + q_1)\nonumber\\
&=   work_i(p_2T_i + q_1+q_2+ C_i) + p_1 U_i T_i - \rho(p_1T_i + q_1)\nonumber\\
&\leq   work_i(p_2 T_i + q_1+q_2+ C_i) - \rho q_1 \label{eq:pq1q2}
  \end{align}
  where the inequality is due to the assumption that  $0 \leq U_i \leq \rho$.
  We will prove that the right-hand side of
  Eq.~\eqref{eq:workload-light} is a safe upper bound on the condition
  in Eq.~\eqref{eq:pq1q2}. By the definition of $q_1$ and $q_2$, we
  know that $0 \leq q_1+q_2 \leq 2 T_i$, i.e., $C_i \leq p_2 T_i + q_1+q_2+ C_i
  \leq 2T_i+C_i$.  Depending on the value of
  $q_1+q_2$, there are  four cases for different (linear or constant)
  segments of $work_i(p_2 T_i + q_1+q_2+ C_i)$  to be analyzed:
  \begin{itemize}
  \item {\bf Case 1:} $0 \leq q_1+q_2 \leq T_i - C_i$: That is, 
$p_2T_i + C_i \leq p_2 T_i + q_1+q_2+ C_i \leq p_2 T_i + T_i$. Therefore, 
   $work_i(p_2 T_i + C_i) \leq work_i(p_2 T_i + q_1+q_2+ C_i) \leq
   work_i(p_2 T_i + T_i)$. Since $work_i(p_2 T_i + C_i) = work_i(p_2
   T_i + T_i) = (p_2+1)C_i$, we have
    \begin{align*}
     \mbox{RHS. of Eq.~\eqref{eq:pq1q2}} = &  (p_2+1)C_i - \rho q_1 
\leq work_i (p_2T_i+C_i+q_2)= work_i(\Delta),
    \end{align*}
    where $\leq$ is due to $\rho \geq 0$ and $q_1 > 0$.
  \item \textbf{Case 2:} $T_i - C_i < q_1+q_2 \leq T_i$:
 By definition, when $p_2$ is a nonnegative integer and $0<x\leq C_i$, $work_i((p_2+1)T_i+x)=(p_2+1)C_i+x$. By $T_i-C_i<q_1+q_2\leq T_i$, we know that $(p_2+1)T_i<p_2T_i+q_1+q_2+C_i\leq (p_2+1)T_i+C_i$. Therefore, $work_i(p_2T_i+q_1+q_2+C_i)=(p_2+1)C_i+(p_2T_i+q_1+q_2+C_i-(p_2+1)T_i) =  (p_2+1)C_i+(q_1+q_2+C_i-T_i)$. Let $\eta$ be
    $T_i-(q_1+q_2)$. By definition $\eta \geq 0$.
Therefore,
    \begin{align*}
      \mbox{RHS. of Eq.~\eqref{eq:pq1q2}} = &  (p_2+1)C_i + (C_i - \eta) - \rho (T_i-q_2 - \eta) \nonumber\\
      = &  (p_2+1)C_i + (C_i - \rho (T_i-q_2)) + \eta(\rho-1) \nonumber\\
      \leq & (p_2+1)C_i + \max\{0, C_i-\rho (T_i - q_2)\},
    \end{align*}
    where $\leq$ is due to $0 \leq \rho \leq 1$ and $\eta\geq 0$.
\item \textbf{Case 3:} $T_i < q_1+q_2 \leq 2T_i-C_i$: Thus,
  $work_i(p_2 T_i + q_1+q_2+ C_i)=(p_2+2)C_i$, and
\[
     \mbox{RHS. of Eq.~\eqref{eq:pq1q2}} =  (p_2+1)C_i +C_i- \rho q_1 
\leq   (p_2+1)C_i + \max\{0, C_i-\rho (T_i - q_2)\},
\]
    where $\leq$ is due to $\rho \geq 0$ and $q_1+q_2 > T_i$.
\item \textbf{Case 4:} $2T_i - C_i < q_1+q_2 \leq 2T_i$: In this case
  $work_i(p_2 T_i + q_1+q_2+ C_i)$ is equal to $(p_2+2)C_i + (q_1+q_2 + C_i - 2T_i)$, similar to the analysis in Case 2. Let $\eta$ be
  $2T_i-(q_1+q_2)$. By definition $\eta \geq 0$.  Therefore,
   {\small \begin{align*}
      \mbox{RHS. of Eq.~\eqref{eq:pq1q2}} =&   (p_2+1)C_i + 2C_i - \eta - \rho (2T_i-q_2 - \eta) \\
      = &  (p_2+1)C_i + C_i + T_i(U_i - \rho) - \eta (1- \rho)- \rho (T_i-q_2) \\
      \leq & (p_2+1)C_i + \max\{0, C_i-\rho (T_i - q_2)\},
    \end{align*}}
    where $\leq$ is due to $0 < U_i \leq \frac{C_k'}{D_k'} \leq \rho \leq 1$ and $\eta \geq 0$, i.e., $U_i - \rho \leq 0$ and $- \eta (1- \rho) \leq 0$.
  \end{itemize}
  Since $0 < q_1+q_2 \leq 2 T_i$, we know that $work_i(\Delta)$ is a
  safe upper bound for $\textbf{Case 1}$ and that $(p_2+1)C_i + \max\{0, C_i
  -\rho (T_i - q_2)\}$ is a safe upper bound for the other cases, and we reach the
  conclusion of this lemma.  
\end{proof}

\begin{lemma}\label{lemma:heavy>light}
  $\forall \Delta > 0$, $\omega_i^{heavy}(\Delta) \geq \omega_i^{light}(\Delta)$.
\end{lemma}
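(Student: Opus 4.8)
The plan is to compare the two workload expressions piece by piece, using only that $work_i$ is non-decreasing and that $D_i \geq C_i$ (from $C_i/D_i \leq 1$). First I would dispose of the regime $0 < \Delta \leq C_i$: there $\omega_i^{light}(\Delta) = \Delta = work_i(\Delta)$, and since $D_i > 0$, monotonicity gives $work_i(\Delta + D_i) \geq work_i(\Delta) = \Delta$, that is, $\omega_i^{heavy}(\Delta) \geq \omega_i^{light}(\Delta)$. For $\Delta > C_i$ the light bound is a maximum of two terms, and the first, $work_i(\Delta)$, is immediately dominated by $work_i(\Delta + D_i) = \omega_i^{heavy}(\Delta)$ by monotonicity. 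Hence the whole claim reduces to the single inequality
\[
work_i(\Delta + D_i) \ \geq\ (p_2+1)C_i + \max\{0,\, C_i - \rho(T_i - q_2)\}.
\]

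For this reduced inequality I would first bound the right-hand side crudely: since $\rho \geq 0$ and $q_2 \leq T_i$ we have $\max\{0, C_i - \rho(T_i - q_2)\} \leq C_i$, so the right-hand side is at most $(p_2+2)C_i$. It therefore suffices to show $work_i(\Delta + D_i) \geq (p_2+2)C_i$. Recalling the decomposition $\Delta = C_i + p_2 T_i + q_2$ with $0 < q_2 \leq T_i$, the cleanest route is the intermediate bound $\omega_i^{light}(\Delta) \leq work_i(\Delta + T_i)$, which is the real content: writing $\Delta + T_i = (p_2+1)T_i + (C_i + q_2)$ and using the block property $work_i(pT_i + x) = pC_i + work_i(x)$, one gets $work_i(\Delta + T_i) = (p_2+1)C_i + work_i(C_i + q_2) \geq (p_2+2)C_i$, since $work_i(C_i + q_2) \geq work_i(C_i) = C_i$. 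The lemma would then follow from the final monotone step $work_i(\Delta + T_i) \leq work_i(\Delta + D_i)$.

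The step I expect to be the main obstacle is exactly this last push by a full period, namely certifying $\Delta + D_i \geq C_i + (p_2+1)T_i$, equivalently $D_i \geq T_i - q_2$. Since $\Delta$ already consumes $p_2$ whole periods plus $C_i + q_2$, the deadline must supply the missing $T_i - q_2$, and the clean way to secure it is $D_i \geq T_i$. The delicate sub-case is when the correction $C_i - \rho(T_i - q_2)$ is strictly positive (small $T_i - q_2$): there the right-hand side is close to the full $(p_2+2)C_i$, so the crude ``$\leq C_i$'' slack cannot be wasted and one must argue that the extra execution block is genuinely completed. If one is restricted to $D_i \geq C_i$ rather than $D_i \geq T_i$, the crude estimate must be replaced by the sharper piecewise-linear structure of $work_i$, invoking $U_i \leq \rho$ so that the workload accruing beyond $t_0 - D_i$ grows at average rate $U_i \leq \rho$ and is absorbed by the $\rho$-slope already exploited in Lemma~\ref{lemma:workload-carry-in-light-tasks}; confirming that this averaging exactly covers the positive-correction sub-case is where the argument requires the most care.
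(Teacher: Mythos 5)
Your reduction works cleanly until the last sub-case, and the obstacle you flag there is not merely delicate --- it is fatal to a purely formula-level proof. In the positive-correction sub-case one needs $work_i(\Delta+D_i) \geq (p_2+1)C_i + C_i - \rho(T_i-q_2)$, which, after peeling off $p_2$ full periods from $\Delta+D_i = p_2T_i + (q_2+C_i+D_i)$, amounts to $D_i \geq (1-\rho)(T_i-q_2)$; neither $D_i \geq C_i$ nor $U_i \leq \rho$ delivers this. Concretely, take $T_i=10$, $C_i=D_i=1$, $U_i=\rho=0.1$ and $\Delta=1.5$: then $p_2=0$, $q_2=0.5$, so $\omega_i^{light}(1.5)=\max\{work_i(1.5),\; 1+\max\{0,\,1-0.1\cdot 9.5\}\}=1.05$, while $\omega_i^{heavy}(1.5)=work_i(2.5)=1$. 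The inequality between the two closed-form expressions therefore fails under the stated hypotheses alone, and the averaging via $U_i\leq\rho$ that you hope will cover this sub-case cannot rescue the comparison you set up.

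The paper's proof is of a different nature and sidesteps this entirely: it does not compare the formulas, but recalls that in Lemma~\ref{lemma:workload-carry-in-light-tasks} the quantity $\omega_i^{light}(\Delta)$ arises as an upper bound on $work_i(\Delta+\phi_i)-\rho\phi_i \leq work_i(\Delta+\phi_i)$ for the \emph{actual} carry-in offset $\phi_i=t_0-t_i$ realized in the schedule $S$, and then uses $\phi_i\leq D_i$ together with monotonicity of $work_i$ to conclude $\omega_i^{light}(\Delta)\leq work_i(\Delta+\phi_i)\leq work_i(\Delta+D_i)=\omega_i^{heavy}(\Delta)$. In other words, the lemma is a statement about two estimates of the same executed workload in $S$, valid because the offset at which the light bound's supremum would overtake the heavy bound (in the numbers above, $\phi_i=9.5$) cannot occur for a deadline-meeting carry-in task with $\phi_i\leq D_i$. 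Your problematic sub-case is exactly the regime where the closed form of $\omega_i^{light}$ is attained only at some $\phi_i>D_i$; to repair your argument you would have to import that schedule-level restriction on $\phi_i$, at which point you are reproducing the paper's semantic proof rather than completing a formula manipulation.
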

\begin{proof}
  This inequality can be proved formally, but can also be derived by
  following the definitions. When $0 < \Delta \leq C_i$, the
  inequality holds naturally.  In the proof of
  Lemma~\ref{lemma:workload-carry-in-light-tasks}, \emph{the workload
    of task $\tau_i$ that is executed in the time interval $[t_i,
    t_d)$ in schedule $S$ is at most $work_i(t_d-t_i) =
    work_i(\Delta+\phi_i)$.} Since $\phi_i \leq D_i$, we know that $\omega_i^{light}(\Delta) \leq 
work_i(\Delta+\phi_i) \leq work_i(\Delta+D_i) =
    \omega_i^{heavy}(\Delta)$.
\end{proof}

Here is a short summary of the
information provided by
Lemmas~\ref{lemma:number-carry-in-tasks},~\ref{lemma:workload-carry-in-tasks},~and~\ref{lemma:workload-carry-in-light-tasks}.
\begin{compactitem}
\item According to Lemma~\ref{lemma:number-carry-in-tasks}, at time
  $t_0$, there are at most $\ceiling{M -
    (M-1)\rho}-1=\ceiling{\mu_k}-1$ carry-in tasks.
\item Among the $\ceiling{\mu_k}-1$ carry-in tasks, there are two
  types of carry-in tasks, i.e., \emph{heavy} and \emph{light} tasks. A
  light carry-in task $\tau_i$ can be described by $\omega_i^{light}(\Delta)$
  from Eq.~\eqref{eq:workload-light} if the utilization is no more
  than $\rho$ and a heavy carry-in task $\tau_i$ can be described by
  $\omega_i^{heavy}(\Delta)$ from Eq.~\eqref{eq:workload-heavy}. By
  observing the conditions in
  Eqs.~\eqref{eq:workload-heavy}~and~\eqref{eq:workload-light}, we
  know that $work_i(\Delta) \leq \omega_i^{light}(\Delta) \leq
  \omega_i^{heavy}(\Delta)$. 
\item Since $\rho$ is a user-defined parameter, a smaller $\rho$
  implies a larger $\mu_k$, i.e., potentially more carry-in tasks and
  more heavy carry-in tasks. By constrast, a larger $\rho$ implies a
  smaller $\mu_k$, i.e., potentially less carry-in tasks and more
  light carry-in tasks. Therefore, \emph{a larger $\rho$ is better for
    minimizing the carry-in workload}.
\item However, the window of interest $[t_0, t_d)$ is defined by the
  condition $\Omega(t_0, t_d) \geq M-(M-1)\rho$. The window of
  interest is smaller when $\rho$ is larger. As a result, there is no
  monotonicity with respect to the schedulability test for setting the
  value of $\rho$.
\end{compactitem}

\begin{theorem}
  \label{thm:schedulability-test-main}
  Task $\tau_k$ is schedulable by the given global fixed-priority scheduling if
 {\small \begin{align}
& \forall \ell \in \mathbb{N}, \exists 1 \geq \rho \geq
\ell C_k/((\ell-1) T_k + D_k), \forall \Delta \geq (\ell-1)
T_k+D_k\nonumber \\
& \ell C_k + \sum_{\tau_i \in {\bf T}^{carry}}
\omega^{diff}_i(\Delta, \rho) +
\sum_{i=1}^{k-1}work_i(\Delta) \leq \Delta \cdot \mu_k\label{eq:theorem-schedulability-condition}
  \end{align}}
holds, where $\mu_k = M - (M-1)\rho$,
\begin{equation}
  \label{eq:omega-diff}
  \omega^{diff}_i(\Delta, \rho) =
  \begin{cases}
    \omega^{heavy}_i(\Delta) -work_i(\Delta) & \mbox{ if } U_i > \rho\\
    \omega^{light}_i(\Delta) - work_i(\Delta) & \mbox{ if } U_i \leq \rho
  \end{cases}
\end{equation}
 and ${\bf T}^{carry}$ is the set of the $\ceiling{\mu_k}-1$ tasks among the $k-1$ higher-priority tasks with the largest values of $\omega^{diff}_i(\Delta, \rho)$. If $D_k \leq T_k$, we only need to
consider $\ell=1$.
\end{theorem}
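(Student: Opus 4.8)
The plan is to argue by contrapositive, reusing the analysis-window machinery of Lemmas~\ref{lemma:workload-overall}--\ref{lemma:number-carry-in-tasks} together with the carry-in workload bounds of Lemmas~\ref{lemma:workload-carry-in-tasks} and~\ref{lemma:workload-carry-in-light-tasks}. Suppose, for contradiction, that some job of $\tau_k$ misses its deadline. This fixes $t_d$, $t_a$, the job index $\ell$, and hence $C_k'=\ell C_k$ and $D_k'=(\ell-1)T_k+D_k$, with $t_d-t_a\geq D_k'$ and executed amount $C_k^*<C_k'$. For this particular $\ell$, I would invoke the schedulability condition to obtain the $\rho$ it guarantees to exist; by construction $\rho\in[\ell C_k/D_k',1]=[C_k'/D_k',1]$, which is exactly the range required by Definition~\ref{definition-t0} and Lemma~\ref{lemma:existence-t0}.

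Applying Lemma~\ref{lemma:existence-t0} with this $\rho$ produces a time $t_0\leq t_a$ with $\Omega(t_0,t_d)\geq\mu_k$. Writing $\Delta=t_d-t_0$, we get $\Delta\geq t_d-t_a\geq D_k'$ and the lower bound $W(t_0,t_d)=\Omega(t_0,t_d)\cdot\Delta\geq\mu_k\Delta$. Since $\tau_k$ is inactive on $[t_0,t_a)$ (it is not active immediately before $t_a\geq t_0$, and jobs arriving before $t_a$ were removed), its total execution in $[t_0,t_d)$ equals $C_k^*$, so $W(t_0,t_d)=C_k^*+E(t_0,t_d)<\ell C_k+E(t_0,t_d)$ using $C_k^*<\ell C_k$.

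The core step is to upper-bound $E(t_0,t_d)$ task by task. For a higher-priority task $\tau_i$ that is \emph{not} carry-in at $t_0$, every job contributing to $[t_0,t_d)$ arrives within that interval, so its contribution is at most $work_i(\Delta)$; for a carry-in task it is at most $\omega_i^{heavy}(\Delta)$ in general and at most $\omega_i^{light}(\Delta)$ when $U_i\leq\rho$ (Lemmas~\ref{lemma:workload-carry-in-tasks} and~\ref{lemma:workload-carry-in-light-tasks}). Rewriting each carry-in contribution as $work_i(\Delta)+\omega_i^{diff}(\Delta,\rho)$, the total is $\sum_{i=1}^{k-1}work_i(\Delta)$ plus the sum of $\omega_i^{diff}(\Delta,\rho)$ over the true carry-in set. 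Because $\omega_i^{diff}\geq 0$ (from $work_i\leq\omega_i^{light}\leq\omega_i^{heavy}$) and there are at most $\ceiling{\mu_k}-1$ carry-in tasks by Lemma~\ref{lemma:number-carry-in-tasks}, this carry-in sum is bounded by taking the $\ceiling{\mu_k}-1$ tasks of largest $\omega_i^{diff}$, i.e.\ the set ${\bf T}^{carry}$. Chaining everything yields $\mu_k\Delta\leq W(t_0,t_d)<\ell C_k+\sum_{\tau_i\in{\bf T}^{carry}}\omega_i^{diff}(\Delta,\rho)+\sum_{i=1}^{k-1}work_i(\Delta)$, contradicting Eq.~\eqref{eq:theorem-schedulability-condition} at this $\ell$, $\rho$, and $\Delta$; hence no deadline miss occurs.

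I expect the selection argument for ${\bf T}^{carry}$ to be the main point requiring care to state cleanly: since the identity of the at-most-$(\ceiling{\mu_k}-1)$ true carry-in tasks is unknown and each $\omega_i^{diff}$ is nonnegative, the sum over any set of size $\leq\ceiling{\mu_k}-1$ is at most the sum over the $\ceiling{\mu_k}-1$ largest values, so replacing the unknown set by ${\bf T}^{carry}$ can only inflate the bound. One must also confirm the heavy/light split is applied correctly: a task with $U_i>\rho$ is necessarily charged the heavy bound, as Lemma~\ref{lemma:workload-carry-in-light-tasks} applies only when $U_i\leq\rho$. Finally, the claim that $\ell=1$ suffices when $D_k\leq T_k$ follows from the standard observation that under constrained deadlines a correctly-scheduled predecessor job finishes before the next release, leaving $\tau_k$ inactive immediately before the offending job and forcing $\ell=1$.
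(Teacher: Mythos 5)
Your proposal is correct and follows essentially the same route as the paper's proof: contrapositive via the deadline-miss scenario, Lemma~\ref{lemma:existence-t0} to obtain $t_0$ with $\Omega(t_0,t_d)\geq\mu_k$, the bound $C_k^*<\ell C_k$, the heavy/light carry-in bounds of Lemmas~\ref{lemma:workload-carry-in-tasks} and~\ref{lemma:workload-carry-in-light-tasks}, and the replacement of the unknown carry-in set by the $\ceiling{\mu_k}-1$ largest values of $\omega_i^{diff}$. If anything, your explicit accounting of the non-carry-in tasks' contribution via $work_i(\Delta)$ and the remark that $\omega_i^{diff}\geq 0$ (needed for the maximization argument) are stated more carefully than in the paper's own chain of inequalities.
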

\begin{proof}
  We prove this theorem by contrapositive, i.e., task $\tau_k$ misses
  its deadline first at time $t_d$ in a global fixed-priority
  preemptive schedule $S$. We know that $t_a$ can be defined for
  schedule $S$, and $t_0$, i.e., $\Omega(t_0, t_d) \geq M -
  (M-1)\times\frac{C_k'}{D_k'}$ in Definition~\ref{definition-t0} can
  be defined for any $\rho$ with $1 \geq \rho \geq \ell C_k/((\ell-1)
  T_k + D_k)$ due to Lemma~\ref{lemma:existence-t0}.

  By the
  existence of $t_d$, the choice of $\rho$, and the definition of
  $t_0$ in Definition~\ref{definition-t0}, we know that  the
  deadline miss of task $\tau_k$ at time $t_d$ in the schedule $S$  implies 
 \begin{align} 
      \exists \ell \in \mathbb{N}, \forall 1 \geq \rho \geq
      \ell C_k/((\ell-1) T_k + D_k), \exists \Delta=t_d-t_0, \qquad 
       \Omega(t_0, t_d) \geq M -
      (M-1)\rho\label{eq:necessary-miss-v1}
    \end{align}
  By the fact that $C_k^* < C_k' = \ell C_k$ and the definition of $\Omega()$, we have 
  \begin{align}
    \label{eq:omega-vs-mu}
    \Omega(t_0, t_d) = \frac{C_k^*+E(t_0, t_d)}{t_d-t_0}  <  \frac{\ell C_k + E(t_0, t_d) }{t_d-t_0}
  \end{align}

  By Lemma~\ref{lemma:number-carry-in-tasks}, for a specific $\rho$,
  there are at most $\ceiling{M - (M-1)\rho}-1=\ceiling{\mu_k}-1$
  higher-priority carry-in tasks at time $t_0$ and the other
  higher-priority tasks do not have any unfinished job at time
  $t_0$. Suppose that ${\bf T}^{heavy}$ and ${\bf T}^{light}$ are the
  sets of the heavy and light carry-in tasks at time $t_0$,
  respectively. By Lemma~\ref{lemma:number-carry-in-tasks}, $|{\bf
    T}^{heavy}| + |{\bf T}^{light}| \leq \ceiling{\mu_k}-1$.
 Therefore, by using
  Lemmas~\ref{lemma:workload-carry-in-tasks}~and~\ref{lemma:workload-carry-in-light-tasks}~and~\ref{lemma:heavy>light},
  we have
{\small \begin{align}
&  E(t_0, t_d)  \leq \sum_{\tau_i \in {\bf T}^{heavy}}
     \omega^{heavy}_i(\Delta)  + \sum_{\tau_i \in {\bf T}^{light}}
      \omega^{light}_i(\Delta) \nonumber\\
=&  \sum_{\tau_i \in {\bf T}^{heavy}}
      \left(\omega^{heavy}_i(\Delta)  - work_i(\Delta) \right) + \sum_{\tau_i \in {\bf T}^{light}}
      \left(\omega^{light}_i(\Delta)  - work_i(\Delta) \right) 
      \;\;+ \sum_{i=1}^{k-1} work_i(\Delta) \\
 \leq &   \sum_{\tau_i \in {\bf T}^{carry}}
\omega^{diff}_i(\Delta, \rho) +
\sum_{i=1}^{k-1}work_i(\Delta)  \label{eq:E-sum-with-diff}
\end{align}}
where $\omega_i^{diff}(\Delta,\rho)$ is defined in
Eq.~\eqref{eq:omega-diff}, and ${\bf T}^{carry}$ is defined in the statement of the theorem.

By Eqs.~\eqref{eq:necessary-miss-v1},~\eqref{eq:omega-vs-mu},~and~\eqref{eq:E-sum-with-diff},
and the fact $t_d-t_a \geq D_k' = (\ell-1) T_k+D_k$, the deadline miss
of task $\tau_k$ at $t_d$ implies
  \begin{align} & \exists \ell
    \in \mathbb{N}, \forall 1 \geq \rho \geq
    \ell C_k/((\ell-1) T_k + D_k), \exists \Delta \geq  (\ell-1) T_k+D_k\nonumber    \\
    &\ell C_k + \sum_{\tau_i \in {\bf T}^{carry}}
    \omega^{diff}_i(\Delta, \rho) + \sum_{i=1}^{k-1}work_i(\Delta) >
    \Delta \cdot \mu_k
  \end{align}
  
Therefore, the negation of the above necessary condition for the
deadline miss of task $\tau_k$ at time $t_d$ is a safe sufficient
schedulability test. We reach the conclusion of the schedulability test.

  When $D_k \leq T_k$, since $t_d$ is the earliest moment in the
  schedule $S$ with a deadline miss of task $\tau_k$, we know that
  $t_a$ is by definition $t_d - D_k$ and $\ell$ is $1$. Therefore, we only have to
  consider $\ell=1$ when $D_k \leq T_k$.
\end{proof}

The schedulability test described in
Theorem~\ref{thm:schedulability-test-main} can be informally explained
as follows: 1) it requires to test all the possible positive integers
for $\ell$, like the busy-window concept, 2) it has to find a $\rho$
value in the specified range, and 3) for the specified
combination of $\ell$ and $\rho$, we have to test whether the
condition in Eq.~\eqref{eq:theorem-schedulability-condition} holds for
every $\Delta \geq (\ell-1)T_k+D_k$.

\subsection{Remarks on Implementing
  Theorem~\ref{thm:schedulability-test-main}}
\label{sec:remark-of-exponential-time}

Unfortunately, due to the following issues, implementing the 
schedulability test in Theorem~\ref{thm:schedulability-test-main} directly would
lead to a 
high time complexity: 
\begin{itemize}
\item {\bf Issue 1 due to $\Delta$}: For specific $\ell$ and $\rho$,
  testing the schedulability condition in
  Eq.~\eqref{eq:theorem-schedulability-condition} requires to evaluate
  all $\Delta \geq (\ell-1)T_k+D_k$. Suppose that $HP(k)$ is the
  hyper-period of $\setof{\tau_1, \tau_2, \ldots, \tau_{k-1}}$, i.e., the least
  common multiple of the periods of 
  $\tau_1, \tau_2, \ldots, \tau_{k-1}$.  
  Since $work_i(\Delta) + HP(k) U_i=
  work_i(\Delta + HP(k))$, $\omega_i^{light}(\Delta) + HP(k) U_i=
  \omega_i^{light}(\Delta + HP(k))$, and $\omega_i^{heavy}(\Delta) + HP(k)
  U_i= \omega_i^{heavy}(\Delta + HP(k))$, we only have to 
  test $\Delta
  \in [(\ell-1)T_k+D_k, (\ell-1)T_k+D_k+ HP(k)]$, as long as
  $\sum_{i=1}^{k-1} U_i \leq \mu_k$. However, the time complexity can
  still be exponential.  
  We will explain how to reduce this
  complexity by using safe upper bounds in
  Section~\ref{sec:our-efficient-tests}.
  
\item {\bf Issue 2 due to $\rho$}: For a specific $\ell$, the
  schedulability condition in
  Eq.~\eqref{eq:theorem-schedulability-condition} is dependent on the
  selection of $\rho$. If $\rho$ is smaller, then $\mu_k$ is larger, 
  and vice versa. A smaller $\rho$ increases the right-hand side
  in the schedulability test in
  Eq.~\eqref{eq:theorem-schedulability-condition}, but it also increases 
  the left-hand side, since there are potentially more carry-in
  tasks. One simple strategy to find a suitable $\rho$ instead of
  searching for all values of $\rho$ is to start from $\rho=\ell C_k/((\ell-1)
  T_k + D_k$ and increase $\rho$ to the next (higher) $U_i$ for
  certain higher-priority task $\tau_i$ if necessary. Therefore, in
  the worst case, we only have to consider $k$ different $\rho$
  values. We will deal with this in
  Theorems~\ref{thm:schedulability-test-all-range}~and~\ref{thm:schedulability-test-ell=all-range-deltamax} in
  Section~\ref{sec:our-efficient-tests}.
\item {\bf Issue 3 due to $\ell$}: We need to consider all 
  positive integer  values of $\ell$ in the schedulability condition in
  Eq.~\eqref{eq:theorem-schedulability-condition}, as 
  the test
  is only valid when the condition holds for all $\ell \in
  \mathbb{N}$. Therefore, if we only test some $\ell$, it is necessary to
  prove that the other $\ell$ configurations are also covered even
  though they are not tested.  We will explain how to deal with this
  in
  Theorems~\ref{thm:schedulability-test-ell=1-or-infty}~and~\ref{thm:linear-time-pessimistic}
  in Section~\ref{sec:our-efficient-tests}.
\end{itemize}


\section{Efficient Schedulability Tests}
\label{sec:our-efficient-tests}

In this section we
provide several schedulability tests based on approximate workload
functions to test the schedulability of task $\tau_k$ more efficiently. The
following three lemmas approximate the \emph{piecewise linear} workload function $work_i(\Delta)$,
$\omega_i^{heavy}(\Delta)$ and $\omega_i^{light}(\Delta)$ by \emph{linear}
functions with respect to $\Delta$ for any $\Delta \geq 0$.

\begin{lemma}
    \label{lemma:workload-approximate}  
    When $0 \leq U_i \leq 1$, for any $\Delta \geq 0$,
    \begin{equation}
      \label{eq:workupper-linear}
  work_i(\Delta) \leq C_i - C_iU_i + U_i\Delta.      
    \end{equation}
\end{lemma}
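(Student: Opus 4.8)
The plan is to exploit the piecewise-linear structure of $work_i$ recalled right after its definition: on each interval $[\ell T_i,\, \ell T_i + C_i]$ the function rises with slope $1$, and on each interval $[\ell T_i + C_i,\, (\ell+1)T_i]$ it is flat at value $(\ell+1)C_i$. The first thing I would observe is that the proposed linear bound $g(\Delta) = C_i - C_iU_i + U_i\Delta$ agrees exactly with $work_i$ at every plateau corner: since $U_iT_i = C_i$, evaluating $g$ at $\Delta = \ell T_i + C_i$ gives $(\ell+1)C_i = work_i(\ell T_i + C_i)$. So the line interpolates the workload at precisely these break points, and it only remains to check that $work_i$ never rises above it in between.

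The cleanest way to carry this out is a direct algebraic case split. Writing $\Delta = pT_i + r$ with $p = \floor{\Delta/T_i}$ and $r \in [0,T_i)$, and substituting $U_iT_i = C_i$, I would rewrite the bound as $g(\Delta) = pC_i + C_i(1-U_i) + U_i r$. There are then two cases matching the two segment types. On the slope-$1$ portion, $0 \le r \le C_i$, we have $work_i(\Delta) = pC_i + r$, so
\[
g(\Delta) - work_i(\Delta) = C_i(1-U_i) - r(1-U_i) = (1-U_i)(C_i - r) \ge 0,
\]
using $U_i \le 1$ and $r \le C_i$. On the flat portion, $C_i < r < T_i$, we have $work_i(\Delta) = (p+1)C_i$, so
\[
g(\Delta) - work_i(\Delta) = U_i r - C_iU_i = U_i(r - C_i) \ge 0,
\]
using $U_i \ge 0$ and $r > C_i$. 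Since these two cases exhaust all $\Delta \ge 0$, the inequality follows.

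I do not anticipate a genuine obstacle here: once the identity $U_iT_i = C_i$ is used, the claim reduces to elementary arithmetic. The only point requiring care is getting the slope comparison the right way round on the climbing segment — there the workload has the larger slope $1$ but the two curves meet at the upper corner, which is exactly why the factor $(1-U_i)\ge 0$ keeps the difference nonnegative — together with confirming that the flat segment contributes the nonnegative term $U_i(r-C_i)$. A purely geometric phrasing (``on the slope-$1$ part the line has the smaller slope $U_i$ and coincides with $work_i$ at the right corner; on the flat part the line only increases'') would serve equally well, but the algebraic case split above is the most self-contained and is the version I would write up.
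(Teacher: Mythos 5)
Your proof is correct and follows essentially the same route as the paper's: both decompose $\Delta = pT_i + r$ with $p=\floor{\Delta/T_i}$, split into the cases $r \leq C_i$ and $r > C_i$, and close each case with the elementary factorizations $(1-U_i)(C_i-r)\geq 0$ and $U_i(r-C_i)\geq 0$ respectively. The only difference is presentational (you factor the difference $g(\Delta)-work_i(\Delta)$ directly, while the paper adds and subtracts terms), so there is nothing to change.
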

\begin{proof}
 This inequality was already stated in Eq. (5) by Bini et
  al. \cite{DBLP:journals/tc/BiniNRB09} as a fact.  Here, we provide
  the proof for completeness.    Suppose that $\Delta$ is $p_3
  T_i + q_3$, where $p_3$ is $\floor{\frac{\Delta}{T_i}}$ and $q_3$ is
  $\Delta - \floor{\frac{\Delta}{T_i}}T_i$. Therefore, we know
  $U_i\Delta = p_3 C_i + q_3U_i$ and $work_i(\Delta) = p_3 C_i +
  \min\{C_i, q_3\}$. We have to consider two cases:
  \begin{itemize}
  \item If $q_3 \leq C_i$: we have 
{\small    \begin{align*}
work_i(\Delta)  = \;\;& p_3 C_i + q_3 \leq p_3 C_i + C_i -  (C_i - q_3) \\
\leq_1 \;\;& p_3 C_i + C_i - (C_i - q_3)U_i   
= C_i - C_iU_i + U_i\Delta,
    \end{align*}} where $\leq_1$ is due to $0 \leq U_i \leq 1$ and $C_i-q_3 \geq 0$.
  \item If $q_3 > C_i$: we have 
{\small    \begin{align*}
work_i(\Delta)  = \;\;& p_3 C_i + C_i \leq p_3 C_i + C_i +  (q_3 -
C_i) U_i = C_i - C_iU_i + U_i\Delta,
    \end{align*}} where $\leq$ is due to $0 \leq U_i \leq 1$ and $q_3-C_i > 0$.
  \end{itemize}
\vspace{-0.2in}
\end{proof}

\begin{lemma}
    \label{lemma:workload-heavy-approximate}  
For any $\Delta \geq 0$,
    \begin{equation}
      \label{eq:work-heavy-upper-linear}
  \omega_i^{heavy}(\Delta) \leq C_i +U_i D_i - C_i U_i+ U_i\Delta.      
    \end{equation}
\end{lemma}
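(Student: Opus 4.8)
The plan is to bound $\omega_i^{heavy}(\Delta) = work_i(\Delta + D_i)$ directly by applying the linear approximation of the plain workload function already established in Lemma~\ref{lemma:workload-approximate}. Since $\omega_i^{heavy}$ is defined (Lemma~\ref{lemma:workload-carry-in-tasks}) as $work_i(\Delta + D_i)$, the natural first step is to substitute the argument $\Delta + D_i$ into Eq.~\eqref{eq:workupper-linear}.

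Concretely, I would argue as follows. By definition $\omega_i^{heavy}(\Delta) = work_i(\Delta + D_i)$. Applying Lemma~\ref{lemma:workload-approximate} with argument $\Delta + D_i$ (valid because $\Delta + D_i \geq 0$ whenever $\Delta \geq 0$ and $D_i > 0$, and because $0 \leq U_i \leq 1$ holds by the standing assumptions of the paper) gives
\begin{equation*}
work_i(\Delta + D_i) \leq C_i - C_i U_i + U_i(\Delta + D_i).
\end{equation*}
Expanding the right-hand side yields $C_i - C_i U_i + U_i \Delta + U_i D_i$, which is exactly the claimed bound $C_i + U_i D_i - C_i U_i + U_i \Delta$ after reordering the terms. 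This establishes Eq.~\eqref{eq:work-heavy-upper-linear}.

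The proof is essentially a one-line substitution followed by trivial algebra, so I expect no genuine obstacle. The only points requiring care are bookkeeping: confirming that the hypothesis $0 \leq U_i \leq 1$ of Lemma~\ref{lemma:workload-approximate} is met (it is, since $U_i \leq 1$ is assumed for every task throughout the paper), and confirming that the shifted argument $\Delta + D_i$ remains nonnegative so the lemma applies. There is no need to re-derive the piecewise structure of $work_i$, since Lemma~\ref{lemma:workload-approximate} already absorbs that analysis. Thus the entire argument reduces to invoking the previous lemma at a shifted point and collecting terms.
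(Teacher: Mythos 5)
Your proof is correct and follows exactly the paper's own argument: the paper likewise proves this by combining Lemma~\ref{lemma:workload-carry-in-tasks} (i.e., $\omega_i^{heavy}(\Delta)=work_i(\Delta+D_i)$) with Lemma~\ref{lemma:workload-approximate} evaluated at the shifted argument $\Delta+D_i$. Your version merely spells out the substitution and the hypothesis checks that the paper leaves implicit.
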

\begin{proof}
  Due to
  Lemma~\ref{lemma:workload-carry-in-tasks}~and~Lemma~\ref{lemma:workload-approximate},
  the inequality holds.  
\end{proof}

\begin{lemma}
    \label{lemma:workload-carry-in-light-tasks-approximate}
    If $U_i \leq \rho \leq 1$, for any $\Delta \geq 0$,
    \begin{equation}
      \label{eq:light-approx}
    \omega^{light}_i(\Delta) \leq C_i - C_iU_i + U_i\Delta.      
    \end{equation}
\end{lemma}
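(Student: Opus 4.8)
The plan is to prove the bound case by case along the definition of $\omega_i^{light}(\Delta)$ given in Lemma~\ref{lemma:workload-carry-in-light-tasks}, reducing everything to the already-established linear bound from Lemma~\ref{lemma:workload-approximate} together with one elementary sign inequality. First I would dispose of the regime $0 < \Delta \leq C_i$, where $\omega_i^{light}(\Delta) = \Delta$: here the target inequality $\Delta \leq C_i - C_i U_i + U_i \Delta$ is equivalent to $(1-U_i)\Delta \leq (1-U_i)C_i$, which holds because $U_i \leq 1$ and $\Delta \leq C_i$. The endpoint $\Delta = 0$ is immediate since the right-hand side equals $C_i(1-U_i) \geq 0$ while the workload is $0$.

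The substantive regime is $\Delta > C_i$, where $\omega_i^{light}(\Delta)$ is the maximum of two terms. For the first term, $work_i(\Delta)$, Lemma~\ref{lemma:workload-approximate} already yields exactly $work_i(\Delta) \leq C_i - C_i U_i + U_i \Delta$, so nothing further is required. The remaining task is to bound the second term $(p_2+1)C_i + \max\{0, C_i - \rho(T_i - q_2)\}$ by the same linear function. Here I would substitute the decomposition $\Delta = C_i + p_2 T_i + q_2$ (with $0 < q_2 \leq T_i$ and $p_2$ a non-negative integer, as in the statement of Lemma~\ref{lemma:workload-carry-in-light-tasks}) and use $C_i = U_i T_i$ to rewrite the target right-hand side as $(p_2+1)C_i + U_i q_2$. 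The whole claim then collapses to the single inequality $\max\{0, C_i - \rho(T_i - q_2)\} \leq U_i q_2$. When the max equals $0$ this is clear since $U_i \geq 0$ and $q_2 > 0$; otherwise, after expanding $C_i = U_i T_i$ and collecting terms, it reduces to $(U_i - \rho)(T_i - q_2) \leq 0$, which holds because $U_i \leq \rho$ and $q_2 \leq T_i$.

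The main obstacle --- really the only place where any insight is needed --- is recognizing that the identity $C_i = U_i T_i$ is precisely what makes the $(p_2+1)C_i$ bookkeeping cancel and turns the pushed-workload term into the clean product $(U_i-\rho)(T_i-q_2)$; the hypothesis $U_i \leq \rho$ is then consumed exactly once, in the final sign check. Everything else is routine algebra, so I would present the $\Delta > C_i$ case in essentially the compressed form above rather than tracking each intermediate expansion.
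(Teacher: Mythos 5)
Your proof is correct and follows essentially the same route as the paper: the same case split driven by the definition of $\omega^{light}_i$, Lemma~\ref{lemma:workload-approximate} for the $work_i(\Delta)$ branch, and the same decisive sign check $(U_i-\rho)(T_i-q_2)\leq 0$ from $U_i\leq\rho$ and $q_2\leq T_i$. Your only deviation is a mild streamlining in the last case --- rewriting the right-hand side exactly as $(p_2+1)C_i+U_iq_2$ via $C_i=U_iT_i$ instead of routing $(p_2+1)C_i=work_i(\Delta-q_2)$ back through Lemma~\ref{lemma:workload-approximate} and treating the two branches of the $\max$ separately, as the paper does.
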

\begin{proof}
  We consider the three upper bounds in
  Lemma~\ref{lemma:workload-carry-in-light-tasks} individually.  
  When $\Delta \leq C_i$, this follows from
  Lemma~\ref{lemma:workload-approximate} directly. When 
  $\Delta > C_i$ and $\omega_i^{light}(\Delta) = work_i(\Delta)$, it
  holds due to Lemma~\ref{lemma:workload-approximate} as well. 
  
  For the last case we have to bound $(p_2+1)C_i + \max\{0, C_i- \rho
  (T_i-q_2)\}$, as defined in
  Lemma~\ref{lemma:workload-carry-in-light-tasks}. By the definition
  of $p_2$ and $q_2$, i.e., $\Delta-C_i = p_2T_i + q_2$, in the statement
  of Lemma~\ref{lemma:workload-carry-in-light-tasks}, we have
  $p_2+1 = \ceiling{(\Delta-C_i)/T_i}$ and $(p_2+1)C_i = work_i(p_2
  T_i + C_i) = work_i(\Delta - q_2)$.  Therefore, for any
  $\Delta > C_i$, if $C_i- \rho (T_i-q_2)  \geq 0$, we get
  \begin{align*}
\omega^{light}_i(\Delta) =\;\;   & (p_2+1)C_i +C_i- \rho (T_i-q_2)   \\
= \;\;& 
work_i(\Delta - q_2) + C_i - \rho (T_i-q_2)    \\
\leq_1\;\; &C_i - C_iU_i +U_i (\Delta - q_2) + C_i - \rho (T_i-q_2)    \\
= \;\;&C_i - C_iU_i +U_i \Delta - q_2 (U_i-\rho) - T_i (\rho - U_i)\\
= \;\;&C_i - C_iU_i +U_i \Delta  + (T_i - q_2) (U_i - \rho)    \\
\leq_2\;\; &C_i - C_iU_i +U_i \Delta,
  \end{align*}
  where $\leq_1$ is due to Lemma~\ref{lemma:workload-approximate} and
  $\leq_2$ is due to $q_2 \leq T_i$ and $U_i \leq \rho$. For any
  $\Delta > C_i$, if $C_i- \rho (T_i-q_2)  < 0$, similarly, we have
 \begin{align*}
   \omega^{light}_i(\Delta) = &(p_2+1)C_i = work_i(p_2 T_i + C_i) \\
  \leq &
   work_i(p_2 T_i + C_i + q_2)\\= &work_i(\Delta) \leq C_i - C_iU_i +U_i \Delta.
  \end{align*}
  Therefore, we reach the conclusion.  
\end{proof}


With
the help of the above lemmas for safe approximations,
we can now
safely and efficiently handle the schedulability test for specific
$\ell$ and $\rho$ in the following theorem. This handles {\bf Issue 1}
explained at the end of Section~\ref{sec:our-tests}.
\begin{theorem}
  \label{thm:schedulability-test-all-range}
  Task $\tau_k$ is schedulable by the given global fixed-priority scheduling
  if
  \begin{align}
& \forall \ell \in \mathbb{N}, \exists 1 \geq \rho \geq
\ell C_k/((\ell-1) T_k + D_k)\nonumber    \\
&\frac{\ell C_k}{D_k'} + \sum_{\tau_i \in {\bf T}^{carry-approx}}
\frac{\gamma_i U_iD_i}{D_k'} +
\sum_{i=1}^{k-1} \left(\frac{C_i - C_iU_i}{D_k'} + U_i\right)\leq \mu_k, \label{eq:quadratic-rho-all-range}
  \end{align}
where $\mu_k= 
M-(M-1)\rho$ with $1 \geq \rho \geq
\ell C_k/((\ell-1) T_k + D_k)$, $D_k'$ is $(\ell-1) T_k+D_k$, 
\begin{equation}\label{eq:gamma-i}
  \gamma_i = 
  \begin{cases}
    1& \mbox{ if } U_i > \rho\\
    0 & \mbox{ if } U_i \leq \rho
  \end{cases}
\end{equation}
and ${\bf T}^{carry-approx}$ is the set of the $\ceiling{\mu_k}-1$ tasks among
the $k-1$  higher-priority tasks with the largest values of $\gamma_i U_iD_i$. 
Note that $|{\bf T}^{carry-approx}|$ can be smaller than $\ceiling{\mu_k}-1$ if 
the number of tasks with $U_i > \rho$ is less than $\ceiling{\mu_k}-1$.
If $D_k \leq T_k$, we only need to consider $\ell=1$. 
\end{theorem}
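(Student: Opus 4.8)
The plan is to derive this theorem as a corollary of Theorem~\ref{thm:schedulability-test-main} by replacing each piecewise-linear workload term appearing in Eq.~\eqref{eq:theorem-schedulability-condition} with the linear upper bounds established in Lemmas~\ref{lemma:workload-approximate},~\ref{lemma:workload-heavy-approximate},~and~\ref{lemma:workload-carry-in-light-tasks-approximate}. Concretely, I would fix an arbitrary $\ell \in \mathbb{N}$ and, using the hypothesis, pick the value of $\rho$ (with $1 \geq \rho \geq \ell C_k / D_k'$) for which Eq.~\eqref{eq:quadratic-rho-all-range} holds. For this pair $(\ell,\rho)$ I then aim to show that the original condition in Eq.~\eqref{eq:theorem-schedulability-condition} is satisfied for \emph{every} $\Delta \geq D_k' = (\ell-1)T_k + D_k$, so that Theorem~\ref{thm:schedulability-test-main} immediately yields the schedulability of $\tau_k$.

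The central computation is to bound the left-hand side of Eq.~\eqref{eq:theorem-schedulability-condition} from above by $\Delta$ times the left-hand side of Eq.~\eqref{eq:quadratic-rho-all-range}. First I would regroup that left-hand side so that every higher-priority task $\tau_i$, $i<k$, contributes $\omega_i^{heavy}(\Delta)$ if it is a heavy carry-in task, $\omega_i^{light}(\Delta)$ if it is a light carry-in task, and $work_i(\Delta)$ otherwise. The key observation is that all three linear bounds share the common term $C_i - C_iU_i + U_i\Delta$, and that Lemma~\ref{lemma:workload-heavy-approximate} exceeds this common term by \emph{exactly} $U_iD_i$, whereas the light bound (Lemma~\ref{lemma:workload-carry-in-light-tasks-approximate}, applicable because $U_i \leq \rho$ for light tasks) and the plain $work_i$ bound (Lemma~\ref{lemma:workload-approximate}) match it. Hence each task contributes at most $(C_i - C_iU_i + U_i\Delta) + \gamma_i U_iD_i$ when it is a carry-in task and $(C_i - C_iU_i + U_i\Delta)$ otherwise, with $\gamma_i$ as in Eq.~\eqref{eq:gamma-i}. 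Collecting the common terms over all $i<k$ gives $\sum_{i=1}^{k-1}(C_i - C_iU_i + U_i\Delta)$, and the excess reduces to $\sum_{\tau_i \in carry}\gamma_i U_iD_i$.

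Two monotonicity arguments then close the gap. Since the actual carry-in set contains at most $\ceiling{\mu_k}-1$ tasks (Lemma~\ref{lemma:number-carry-in-tasks}) and $\gamma_i U_iD_i \geq 0$, I can replace it by ${\bf T}^{carry-approx}$, the $\ceiling{\mu_k}-1$ tasks maximizing $\gamma_i U_iD_i$, giving $\sum_{\tau_i \in carry}\gamma_i U_iD_i \leq \sum_{\tau_i \in {\bf T}^{carry-approx}}\gamma_i U_iD_i$. Finally, using $\Delta \geq D_k' > 0$ together with the non-negativity of $\ell C_k$, of $C_i - C_iU_i$, and of $\gamma_i U_iD_i$, I would factor out $\Delta/D_k' \geq 1$ to verify that the whole upper bound is at most $\Delta$ times the left-hand side of Eq.~\eqref{eq:quadratic-rho-all-range}, which is at most $\Delta\mu_k$ by assumption. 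This establishes Eq.~\eqref{eq:theorem-schedulability-condition} for all $\Delta \geq D_k'$, and the $\ell=1$ simplification for $D_k \leq T_k$ is inherited directly from Theorem~\ref{thm:schedulability-test-main}. I expect the only delicate point to be the bookkeeping in the second paragraph: confirming that the single additive term $\gamma_i U_iD_i$ simultaneously captures the heavy-task excess, vanishes for light and non-carry-in tasks, and survives the re-selection of the carry-in set --- rather than any hard inequality.
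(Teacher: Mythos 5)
Your proposal is correct and follows essentially the same route as the paper's proof: both reduce to Theorem~\ref{thm:schedulability-test-main} by applying the three linear approximation Lemmas~\ref{lemma:workload-approximate},~\ref{lemma:workload-heavy-approximate},~and~\ref{lemma:workload-carry-in-light-tasks-approximate}, observe that only heavy carry-in tasks contribute the excess $U_iD_i$ (hence $\gamma_i$), pass from ${\bf T}^{carry}$ to ${\bf T}^{carry-approx}$ by maximality and non-negativity, and divide through by $\Delta \geq D_k'$. Your regrouping into per-task totals ($\omega_i^{heavy}$, $\omega_i^{light}$, or $work_i$) before applying the linear bounds is just a cleaner presentation of the identical chain of inequalities in the paper.
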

\begin{proof}
  We prove that the condition in this theorem is a safe upper bound of
  that in Theorem~\ref{thm:schedulability-test-main}. For specific
  $\ell, \rho, \Delta$, we can find ${\bf T}^{carry}$ as defined in
  Theorem~\ref{thm:schedulability-test-main}. 
  By
  Lemmas~\ref{lemma:workload-approximate},~\ref{lemma:workload-heavy-approximate},~and~\ref{lemma:workload-carry-in-light-tasks-approximate}
  and the assumptions $\Delta \geq (\ell-1) T_k+D_k = D_k'$ and $0 < U_i \leq 1 \forall \tau_i$, we have
  \begin{align}
     & \ell C_k + \sum_{\tau_i \in {\bf T}^{carry}} \omega^{diff}_i(\Delta, \rho) + \sum_{i=1}^{k-1}work_i(\Delta)\nonumber\\
    \leq  & \ell C_k + \sum_{\tau_i \in {\bf T}^{carry}} \gamma_i U_i D_i  + \sum_{i=1}^{k-1} \left(C_i - C_i U_i + U_i \Delta\right)\\
    \leq  & \ell C_k + \sum_{\tau_i \in {\bf T}^{carry-approx}} \gamma_i U_i D_i  + \sum_{i=1}^{k-1} \left(C_i - C_i U_i + U_i \Delta\right)\\
    \leq  & \Delta\cdot\left(\frac{\ell C_k}{D_k'} + \sum_{\tau_i \in {\bf T}^{carry-approx}} \frac{\gamma_i U_i D_i}{D_k'}  + \sum_{i=1}^{k-1} \left(\frac{C_i - C_i U_i}{D_k'} + U_i\right)\right)
  \end{align}
  Therefore, the test in Theorem~\ref{thm:schedulability-test-main} can be safely over-approximated as follows:
{\small \begin{align}
& \forall \ell \in \mathbb{N}, \exists 1 \geq \rho \geq
\ell C_k/((\ell-1) T_k + D_k)
T_k+D_k\nonumber \\
& \frac{\ell C_k}{D_k'} +  \left(\sum_{\tau_i \in {\bf T}^{carry-approx}} \frac{\gamma_i U_i D_i}{D_k'}  \right) + \sum_{i=1}^{k-1} \left(\frac{C_i - C_i U_i}{D_k'} + U_i \right)\leq \mu_k
\end{align}}
\end{proof}


Theorem~\ref{thm:schedulability-test-all-range}
provides two interesting implications to handle {\bf Issue
  2}. Firstly, if $U_i \leq \rho$, the linear approximation of
$work_i(\Delta)$ by considering task $\tau_i$ as a non-carry-in task
in Lemma~\ref{lemma:workload-approximate} is the same as the linear
approximation of $\omega_i^{light}(\Delta)$ by considering task $\tau_i$
as a carry-in task in
Lemma~\ref{lemma:workload-carry-in-light-tasks-approximate}. Therefore,
the carry-in tasks are only effective for those tasks $\tau_i$ with
$U_i > \rho$. Secondly, for a specific $\ell$, deciding whether a
specific $\rho$ exists to pass the test in
Eq.~\eqref{eq:quadratic-rho-all-range} can be done by only testing a finite
number of $\rho$ values, i.e.     
by starting from $\rho=\ell C_k/((\ell-1) T_k + D_k$ and increasing
$\rho$ to the next (higher) values where ${\bf T}^{carry-approx}$ changes. This means either  
1)  $\rho = U_i$ for certain higher-priority task $\tau_i$, i.e., the summation
can be larger with the same number of summands; or 2) 
$\mu_k=M-(M-1)\rho$ is an integer, i.e., the number of summands increases.
 This only has 
time complexity $O((k+M)\log (k+M))$, mainly due to the sorting, 
when proper data structures are used. 

\subsection{Linear-Time Schedulability Tests} 
The time complexity of Theorem~\ref{thm:schedulability-test-all-range}
is due to the search of possible $\rho$ values.  Nevertheless, we can
directly set $\rho$ to $U_{\delta,k}^{\max}$ which implies that there
is no carry-in task in the linear-approximation form. With this
simplification, we can conclude different schedulability tests in
Theorems~\ref{thm:schedulability-test-ell=all-range-deltamax},~\ref{thm:schedulability-test-ell=1-or-infty},~and~\ref{thm:linear-time-pessimistic}. Although
these tests are not superior to
Theorem~\ref{thm:schedulability-test-all-range}, our main target is
the test in Theorem~\ref{thm:linear-time-pessimistic}, which will be
used \emph{mainly to derive the speedup bounds later in
Theorem~\ref{thm:global-dm-speedup-factors-3}}.

\begin{theorem}
  \label{thm:schedulability-test-ell=all-range-deltamax}
  Task $\tau_k$ is schedulable by the given global fixed-priority scheduling if
  $\forall \ell \in \mathbb{N}$ 
  \begin{equation}
\frac{\ell C_k}{D_k'} + \sum_{i=1}^{k-1} \left(\frac{C_i - C_iU_i}{D_k'} +
U_i\right)\leq (M - (M-1)U_{\delta,k}^{\max}) \label{eq:quadratic-simplified}
  \end{equation}
holds, where $D_k'$ is $(\ell-1) T_k+D_k$.
\end{theorem}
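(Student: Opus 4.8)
The plan is to obtain this test as a direct corollary of Theorem~\ref{thm:schedulability-test-all-range} by fixing the free parameter $\rho$ to the single value $\rho = U_{\delta,k}^{\max}$, uniformly over all $\ell$. Recall that $U_{\delta,k}^{\max} = \max\{\max_{i=1}^{k-1} U_i,\, \delta_k\}$; this choice is engineered precisely so that $\rho \geq U_i$ holds for every higher-priority task $\tau_i$, which will collapse the carry-in contribution in Eq.~\eqref{eq:quadratic-rho-all-range}.

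First I would verify that $\rho = U_{\delta,k}^{\max}$ is an admissible choice for \emph{every} $\ell$, i.e.\ that $1 \geq \rho \geq \ell C_k/((\ell-1)T_k + D_k)$. The upper bound is immediate, since $U_{\delta,k}^{\max} \leq \delta_{\max}(k) \leq 1$ using the standing assumptions $C_i/D_i \leq 1$ and $U_i \leq 1$. The lower bound is the crux of the argument. Because $U_{\delta,k}^{\max} \geq \delta_k$, it suffices to show $\delta_k \geq \ell C_k/((\ell-1)T_k + D_k)$; writing $\delta_k = C_k/\min\{D_k,T_k\}$ and cancelling $C_k>0$, this reduces to $(\ell-1)T_k + D_k \geq \ell\,\min\{D_k,T_k\}$. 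A short case distinction closes it: if $D_k \leq T_k$ the inequality becomes $(\ell-1)(T_k - D_k) \geq 0$, true since $\ell \geq 1$; if $D_k > T_k$ it becomes $D_k \geq T_k$, again true. I expect this feasibility check of $\rho$ to be the main (if modest) obstacle, as it is the only place that genuinely exploits the structure of $\delta_k$ together with the arbitrary-deadline regime.

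With $\rho = U_{\delta,k}^{\max}$ fixed, the next observation is that $\gamma_i = 0$ for all $i < k$. Indeed, by Eq.~\eqref{eq:gamma-i} we have $\gamma_i = 1$ only when $U_i > \rho$, but $\rho = U_{\delta,k}^{\max} \geq \max_{i=1}^{k-1} U_i \geq U_i$ for every $i < k$, so no higher-priority task qualifies as a carry-in task in the linear-approximation form. Consequently the carry-in summation $\sum_{\tau_i \in {\bf T}^{carry-approx}} \gamma_i U_i D_i / D_k'$ in Eq.~\eqref{eq:quadratic-rho-all-range} vanishes entirely, and the set ${\bf T}^{carry-approx}$ may be taken empty.

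Finally I would substitute $\mu_k = M - (M-1)\rho = M - (M-1)U_{\delta,k}^{\max}$ and the vanished carry-in term back into Eq.~\eqref{eq:quadratic-rho-all-range}. What remains is exactly Eq.~\eqref{eq:quadratic-simplified}. Thus, whenever Eq.~\eqref{eq:quadratic-simplified} holds for all $\ell \in \mathbb{N}$, the condition of Theorem~\ref{thm:schedulability-test-all-range} is satisfied for every $\ell$ with the common witness $\rho = U_{\delta,k}^{\max}$, and schedulability of $\tau_k$ follows, completing the proof.
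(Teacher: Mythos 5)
Your proposal is correct and follows exactly the paper's own argument: the paper likewise derives this theorem from Theorem~\ref{thm:schedulability-test-all-range} by setting $\rho = U_{\delta,k}^{\max}$, invoking $U_{\delta,k}^{\max} \geq U_i$ for $i<k$ to kill the carry-in term and $U_{\delta,k}^{\max} \geq \delta_k \geq \ell C_k/((\ell-1)T_k + D_k)$ for admissibility. You merely spell out the case distinction behind the latter inequality, which the paper leaves implicit.
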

\begin{proof}
  This comes directly from
  Theorem~\ref{thm:schedulability-test-all-range} by setting $\rho$ to
  $U_{\delta,k}^{\max}$ and the facts that $U_{\delta,k}^{\max} \geq
  U_i$ for $i=1,2,\ldots, k-1$ and $U_{\delta,k}^{\max} \geq \delta_k
  \geq \ell C_k/((\ell-1) T_k + D_k)$ by definition.
\end{proof}

\begin{theorem}
\label{thm:schedulability-test-ell=1-or-infty}
  Suppose that $D_k > T_k$. Let $b$ be $\frac{D_k-T_k}{T_k}$. 
Task $\tau_k$ is schedulable by the given global fixed-priority scheduling
algorithm if:
{\small 
\begin{align}
&\sum_{i=1}^{k} U_i\leq (M - (M-1)U_{\delta,k}^{\max}),& \mbox{ when } b U_k -
\sum_{i=1}^{k-1} \frac{C_i - C_iU_i}{T_k} > 0\\
    &\frac{C_k}{D_k} +
\sum_{i=1}^{k-1} (\frac{C_i - C_iU_i}{D_k} +U_i) \leq (M -
(M-1)U_{\delta,k}^{\max}), &\mbox{ otherwise}
\end{align}}
\end{theorem}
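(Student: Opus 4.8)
The plan is to reduce the infinite family of conditions over all $\ell\in\mathbb{N}$ in Theorem~\ref{thm:schedulability-test-ell=all-range-deltamax} to a single monotonicity analysis of the left-hand side of Eq.~\eqref{eq:quadratic-simplified} viewed as a function of $\ell$. Writing $D_k' = (\ell-1)T_k + D_k = (\ell+b)T_k$ with $b = (D_k-T_k)/T_k > 0$, and setting $A = \sum_{i=1}^{k-1}(C_i - C_iU_i)$ (which is nonnegative since $U_i\leq 1$) and $S = \sum_{i=1}^{k-1}U_i$, the left-hand side becomes $g(\ell) = h(\ell) + S$ where $h(\ell) = \frac{\ell C_k + A}{(\ell+b)T_k}$. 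Since the right-hand side $M-(M-1)U_{\delta,k}^{\max}$ does not depend on $\ell$, Theorem~\ref{thm:schedulability-test-ell=all-range-deltamax} applies as soon as $\sup_{\ell\geq 1} g(\ell) \leq M-(M-1)U_{\delta,k}^{\max}$, so I only need to locate this supremum.

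Second, I would treat $\ell$ as a continuous real variable $\geq 1$ and differentiate. A direct computation gives $h'(\ell) = \frac{T_k(C_k b - A)}{[(\ell+b)T_k]^2}$, whose sign is that of $C_k b - A$. Using $C_k = U_k T_k$ one checks $C_k b - A = T_k\left(bU_k - \sum_{i=1}^{k-1}\frac{C_i - C_iU_i}{T_k}\right)$, so the case split in the theorem is precisely the sign of $h'$. If $bU_k - \sum_{i=1}^{k-1}\frac{C_i - C_iU_i}{T_k} > 0$, then $h$ is strictly increasing; since $\lim_{\ell\to\infty} h(\ell) = C_k/T_k = U_k$, the value $U_k$ is the supremum of $h$ but is never attained at any finite $\ell$, so $g(\ell) < U_k + S = \sum_{i=1}^{k} U_i$ for every $\ell\in\mathbb{N}$. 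Hence the single test $\sum_{i=1}^{k}U_i \leq M-(M-1)U_{\delta,k}^{\max}$ forces $g(\ell) \leq M-(M-1)U_{\delta,k}^{\max}$ for all $\ell$, and Theorem~\ref{thm:schedulability-test-ell=all-range-deltamax} yields schedulability.

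Third, in the complementary case $bU_k - \sum_{i=1}^{k-1}\frac{C_i-C_iU_i}{T_k} \leq 0$, the function $h$ is non-increasing, so $g$ attains its maximum over $\ell\geq 1$ at the integer endpoint $\ell=1$. Evaluating there, where $D_k' = (1+b)T_k = D_k$, gives $g(1) = \frac{C_k}{D_k} + \sum_{i=1}^{k-1}\left(\frac{C_i-C_iU_i}{D_k} + U_i\right)$, which is exactly the left-hand side of the second condition, so requiring $g(1)\leq M-(M-1)U_{\delta,k}^{\max}$ again gives $g(\ell)\leq g(1)\leq M-(M-1)U_{\delta,k}^{\max}$ for all $\ell$. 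The main subtlety I anticipate lies in the first case: because the supremum $U_k$ is only a limit and is not reached at any finite $\ell$, I must argue through the strict bound $g(\ell) < \sum_{i=1}^k U_i$ rather than substituting a concrete value of $\ell$. I would also remark that the boundary $C_k b = A$ belongs to the second case, where $h$ is constant and $g(1) = U_k + S = \sum_{i=1}^k U_i$, so the two formulas coincide there and the case distinction is consistent.
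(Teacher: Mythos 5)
Your proposal is correct and follows essentially the same route as the paper: rewrite the left-hand side of Eq.~\eqref{eq:quadratic-simplified} as a function of $\ell$ using $D_k'=(\ell+b)T_k$, differentiate in $\ell$, and observe that the sign of the derivative is exactly the sign of $bU_k-\sum_{i=1}^{k-1}\frac{C_i-C_iU_i}{T_k}$, so the supremum is the limit $\sum_{i=1}^k U_i$ as $\ell\to\infty$ in the increasing case and the value at $\ell=1$ otherwise. Your added remarks on the supremum not being attained and on consistency at the boundary case are fine refinements of the same argument, not a different approach.
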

\begin{proof}
  For a given $\ell$, the left-hand side in
  Eq.~\eqref{eq:quadratic-simplified} can be rephrased as:
{\footnotesize  \begin{align}
   F(\ell) = &  \frac{\ell C_k}{D_k'} + \sum_{i=1}^{k-1} \left(\frac{C_i -
      C_iU_i}{D_k'} +
    U_i\right) \nonumber\\
    = &\frac{\ell U_k  +\sum_{i=1}^{k-1} \frac{C_i -
      C_iU_i}{T_k}}{\ell + b} + \sum_{i=1}^{k-1}  U_i
  \end{align}}

  The first order derivative of $F(\ell)$ with respect to $\ell$ is:
  \begin{equation}
   \frac{ \partial F(\ell)}{\partial \ell} =\frac{b U_k - \sum_{i=1}^{k-1} \frac{C_i -
      C_iU_i}{T_k}}{(\ell+b)^2}.
  \end{equation}
  We have to consider two cases:

  \begin{itemize}
  \item Case 1: if $b U_k - \sum_{i=1}^{k-1} \frac{C_i -
      C_iU_i}{T_k} > 0$, then $F(\ell)$ is an increasing function with
    respect to $\ell$. Therefore, $F(\ell)$ is maximized when $\ell
    \rightarrow \infty$, i.e., $F(\ell) \leq \sum_{i=1}^{k} U_i$.
  \item Case 2: if $b U_k - \sum_{i=1}^{k-1} \frac{C_i -
      C_iU_i}{T_k} \leq 0$, then $F(\ell)$ is a non-increasing
    function with respect to $\ell$. Therefore, $F(\ell)$ is maximized
    when $\ell \rightarrow 1$, i.e., $F(\ell) \leq \frac{C_k}{D_k} +
    \sum_{i=1}^{k-1} (\frac{C_i - C_iU_i}{D_k} +U_i)$.
  \end{itemize}
\end{proof}

\begin{theorem}
  \label{thm:linear-time-pessimistic}
  Task $\tau_k$ is schedulable by the given global fixed-priority scheduling if
  \begin{align}
\delta_k+ \sum_{i=1}^{k-1} \left(\frac{C_i - C_iU_i}{D_k} +
U_i\right)\leq M - (M-1)U_{\delta,k}^{\max}
\label{eq:linear-time-pessimistic}
  \end{align}
\end{theorem}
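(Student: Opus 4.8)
The plan is to obtain this linear-time test directly as a corollary of Theorem~\ref{thm:schedulability-test-ell=all-range-deltamax}. That theorem already certifies the schedulability of $\tau_k$ whenever Eq.~\eqref{eq:quadratic-simplified} holds \emph{for every} $\ell \in \mathbb{N}$, where $D_k'=(\ell-1)T_k+D_k$. Since the right-hand side $M-(M-1)U_{\delta,k}^{\max}$ of Eq.~\eqref{eq:quadratic-simplified} does not depend on $\ell$, it suffices to show that the $\ell$-independent expression appearing on the left of the assumed condition Eq.~\eqref{eq:linear-time-pessimistic} dominates the $\ell$-dependent left-hand side of Eq.~\eqref{eq:quadratic-simplified} uniformly over all $\ell \geq 1$. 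So the whole task reduces to a uniform upper bound on $\frac{\ell C_k}{D_k'}+\sum_{i=1}^{k-1}\frac{C_i-C_iU_i}{D_k'}+\sum_{i=1}^{k-1}U_i$ in terms of $\delta_k+\sum_{i=1}^{k-1}(\frac{C_i-C_iU_i}{D_k}+U_i)$.

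The argument treats the two $\ell$-dependent terms separately. First I would establish the key inequality $\frac{\ell C_k}{D_k'}\leq \delta_k$ for every $\ell\geq 1$. Using $\delta_k=C_k/\min\{D_k,T_k\}$ and $D_k'=(\ell-1)T_k+D_k$ and clearing the positive denominators, this is equivalent to $\ell\min\{D_k,T_k\}\leq (\ell-1)T_k+D_k$, i.e. $(\ell-1)\min\{D_k,T_k\}\leq (\ell-1)T_k+(D_k-\min\{D_k,T_k\})$, which holds because $\min\{D_k,T_k\}\leq T_k$ makes the left side at most $(\ell-1)T_k$ while $D_k-\min\{D_k,T_k\}\geq 0$. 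This single estimate is the crux of the statement, and it is precisely the reason the bound must be written with $\delta_k$ (i.e. with $\min\{D_k,T_k\}$) rather than with $C_k/D_k$: for arbitrary deadlines with $D_k>T_k$, the term $\frac{\ell C_k}{D_k'}$ grows toward $C_k/T_k=U_k=\delta_k$ as $\ell\to\infty$, not toward $C_k/D_k$.

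Next I would bound the second term: since $U_i\leq 1$ gives $C_i-C_iU_i\geq 0$ and $D_k'=(\ell-1)T_k+D_k\geq D_k$ for every $\ell\geq 1$, each summand satisfies $\frac{C_i-C_iU_i}{D_k'}\leq \frac{C_i-C_iU_i}{D_k}$, so the whole sum is dominated by its value at $\ell=1$. Combining the two bounds with the (unchanged) $\ell$-free sum $\sum_{i=1}^{k-1}U_i$ shows that, for every $\ell\in\mathbb{N}$, the left-hand side of Eq.~\eqref{eq:quadratic-simplified} is at most $\delta_k+\sum_{i=1}^{k-1}\left(\frac{C_i-C_iU_i}{D_k}+U_i\right)$, which is exactly the left-hand side of Eq.~\eqref{eq:linear-time-pessimistic}. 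Hence, assuming Eq.~\eqref{eq:linear-time-pessimistic}, the condition Eq.~\eqref{eq:quadratic-simplified} holds for all $\ell$, and Theorem~\ref{thm:schedulability-test-ell=all-range-deltamax} yields the schedulability of $\tau_k$. No step is genuinely difficult; the only real obstacle is recognizing and verifying the monotone estimate $\frac{\ell C_k}{D_k'}\leq \delta_k$, which I would carry out first since everything else is then a routine term-by-term domination.
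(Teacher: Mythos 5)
Your proposal is correct and follows exactly the paper's route: the paper also derives the result from Theorem~\ref{thm:schedulability-test-ell=all-range-deltamax} using precisely the two facts $D_k'=(\ell-1)T_k+D_k\geq D_k$ and $\ell C_k/D_k'\leq\delta_k$ for all $\ell\in\mathbb{N}$. You merely spell out the verification of the second inequality in more detail than the paper does.
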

\begin{proof}
  Based on Theorem~\ref{thm:schedulability-test-ell=all-range-deltamax} and
  the two facts that  $D_k'=(\ell-1)T_k+D_k \geq D_k$ and $\delta_k \geq \ell
  C_k/((\ell-1) T_k + D_k)$ for all $\ell \in \mathbb{N}$, we reach the
  conclusion.
\end{proof}


\subsection{Dominance}

We now show analytical dominance among the tests presented above and
in Theorem~\ref{thm:schedulability-test-main} in the following
corollary. A test $\mathcal{B}_1$ analytically dominates another test
$\mathcal{B}_2$ if the schedulability condition in $\mathcal{B}_1$ always
dominates that in $\mathcal{B}_2$. This means, if task $\tau_k$ is deemed
schedulable by $\mathcal{B}_2$, task $\tau_k$ is also deemed schedulable
by $\mathcal{B}_1$.
\begin{corollary}
  \label{corollary:dominance}
  For arbitrary-deadline sporadic real-time systems under global fixed-priority
  scheduling, the schedulability tests have the following dominance
  relations.
  \begin{compactitem}
  \item  Theorem~\ref{thm:schedulability-test-main} analytically dominates
    Theorem~\ref{thm:schedulability-test-all-range}.
  \item  Theorem~\ref{thm:schedulability-test-all-range} analytically dominates
    Theorem~\ref{thm:schedulability-test-ell=all-range-deltamax}.
  \item Theorem~\ref{thm:schedulability-test-ell=all-range-deltamax} is equivalent to 
    the test in Theorem~\ref{thm:schedulability-test-ell=1-or-infty}.
  \item Theorem~\ref{thm:schedulability-test-ell=1-or-infty} analytically dominates
    Theorem~\ref{thm:linear-time-pessimistic}.
    \end{compactitem}
\end{corollary}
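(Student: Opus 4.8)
The plan is to establish each of the four dominance relations by revisiting the proof of the respective weaker (dominated) theorem, since each of those proofs already derives the weaker test either as a safe over-approximation of, or as a special case of, the stronger one. I would treat the four items in the order listed, in each case arguing that whenever the dominated test certifies $\tau_k$ schedulable, the inequality chain used to obtain that test forces the dominating test's condition to hold as well. No new machinery is needed; the corollary is essentially a bookkeeping statement over the proofs already given.

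For the first relation, I would reuse the inequality chain in the proof of Theorem~\ref{thm:schedulability-test-all-range}: there it is shown that, for every fixed $\ell$, $\rho$, and $\Delta \geq D_k'$, the exact left-hand side $\ell C_k + \sum_{\tau_i \in {\bf T}^{carry}} \omega^{diff}_i(\Delta,\rho) + \sum_{i=1}^{k-1} work_i(\Delta)$ of Theorem~\ref{thm:schedulability-test-main} is bounded above by $\Delta$ times the (constant-in-$\Delta$) left-hand side of Eq.~\eqref{eq:quadratic-rho-all-range}. Hence if the all-range condition holds for some witnessing $\rho$, then for that same $\rho$ the main condition holds for every $\Delta$, so Theorem~\ref{thm:schedulability-test-main} accepts $\tau_k$; this is precisely dominance.

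For the second relation, I would observe that Theorem~\ref{thm:schedulability-test-ell=all-range-deltamax} is the all-range test of Theorem~\ref{thm:schedulability-test-all-range} evaluated at the single choice $\rho = U_{\delta,k}^{\max}$. Because $U_{\delta,k}^{\max} \geq U_i$ for every higher-priority task, we get $\gamma_i = 0$ for all $i < k$, so ${\bf T}^{carry-approx}$ contributes nothing and Eq.~\eqref{eq:quadratic-rho-all-range} collapses exactly to Eq.~\eqref{eq:quadratic-simplified}. One must check that this $\rho$ is admissible, i.e.\ $\ell C_k/D_k' \leq U_{\delta,k}^{\max} \leq 1$; the upper bound follows from $\delta_i, U_i \leq 1$, and the lower bound from $U_{\delta,k}^{\max} \geq \delta_k \geq \ell C_k/D_k'$, where the last inequality reduces to $(\ell-1)T_k + D_k \geq \ell \min\{D_k,T_k\}$ and is verified separately for $D_k \leq T_k$ and $D_k > T_k$. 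Thus a pass of the deltamax test furnishes a witness $\rho$ for the existential quantifier in the all-range test.

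For the third relation (an equivalence), I would note that Theorem~\ref{thm:schedulability-test-ell=all-range-deltamax} asks $F(\ell) \leq M-(M-1)U_{\delta,k}^{\max}$ for all $\ell \in \mathbb{N}$, which is equivalent to bounding $\max_{\ell \in \mathbb{N}} F(\ell)$. The proof of Theorem~\ref{thm:schedulability-test-ell=1-or-infty} already shows that $F$ is monotone in $\ell$ through the sign of $\partial F/\partial \ell$, so this maximum is attained as $\ell \to \infty$ (value $\sum_{i=1}^{k} U_i$) when $bU_k - \sum_{i<k}(C_i - C_iU_i)/T_k > 0$ and at $\ell = 1$ otherwise; these are exactly the two cases of Theorem~\ref{thm:schedulability-test-ell=1-or-infty}, so the two tests accept the same task sets. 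For the fourth relation I would combine $\delta_k \geq \ell C_k/D_k'$, $D_k \leq D_k'$, and $C_i(1-U_i) \geq 0$ to show that the left-hand side of Eq.~\eqref{eq:linear-time-pessimistic} dominates $F(\ell)$ for every $\ell$; hence a pass of Theorem~\ref{thm:linear-time-pessimistic} implies the deltamax condition for all $\ell$, which by the equivalence just established is a pass of Theorem~\ref{thm:schedulability-test-ell=1-or-infty}. The steps needing the most care are the admissibility check $\ell C_k/D_k' \leq U_{\delta,k}^{\max} \leq 1$ in relation two and the observation that relation three is a genuine two-sided equivalence, which is what lets relation four route cleanly through the deltamax condition; everything else reduces to the monotonicity and substitution estimates already assembled in the earlier proofs.
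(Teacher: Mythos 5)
Your proposal is correct and follows essentially the same route as the paper, which simply observes that the relations ``follow directly from the above analyses'': each dominated test is obtained in its own proof as a safe over-approximation (or worst-case $\ell$ instantiation) of the dominating one, and you have merely spelled out the details, including the admissibility check $\ell C_k/D_k' \leq U_{\delta,k}^{\max} \leq 1$ and the two-sided nature of the third relation, both of which the paper leaves implicit.
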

\begin{proof}
  They follow directly from the above analyses. The reason why
  Theorems~\ref{thm:schedulability-test-ell=all-range-deltamax}~and~\ref{thm:schedulability-test-ell=1-or-infty}
  are equivalent is because the conditions in
  Theorem~\ref{thm:schedulability-test-ell=1-or-infty} represent
  exactly the worst-case $\ell$ selection in
  Theorem~\ref{thm:schedulability-test-ell=1-or-infty}. The other
  cases are obvious.
\end{proof}

Although we will show in
Theorem~\ref{thm:global-dm-speedup-factors-3-tests} that all the above
schedulability tests have the same speedup bound for global DM, the
\emph{performance} of the schedulability tests in this section
can be very different \emph{in practice}.  Chen et al.~\cite{DBLP:conf/ecrts/ChenBHD17}
have recently shown that ``\emph{Speedup factors ... often lack the
  power to discriminate between the performance of different
  scheduling algorithms and schedulability tests even though the
  performance of these algorithms and tests may be very different when
  viewed from the perspective of empirical evaluation}.'' To avoid
concluding an algorithm with a reasonable speedup bound but practically not
useful, we performed a series of experiments and present the results
in Section~\ref{sec:eval}.

\section{Global Deadline-Monotonic (DM) Scheduling}
\label{sec:global-DM}

After presenting the schedulability tests for any global
fixed-priority scheduling algorithms, we focus ourselves on global DM
in this section. We will discuss the speedup upper bound and 
the speedup lower bound.
Baruah and Fisher \cite{DBLP:conf/opodis/BaruahF07} showed that global
DM has a speedup upper bound of $2~+~\sqrt{3}~\approx~3.73$ 
compared to the
optimal schedules, based on the test restated in
Theorem~\ref{thm:baruah-fisher07-bug-free}.  This is the best known upper bound
on speedup factors  for arbitrary-deadline sporadic task systems under
global fixed-priority scheduling. 
Evaluating $\load(k)$ in Theorem~\ref{thm:baruah-fisher07-bug-free}
requires to calculate $\sum_{i=1}^{k} \dbf(\tau_i, t)/t$ at all
time points $t$. This means, the na\"{i}ve implementation has an 
exponential-time complexity.  There are more efficient methods, as
discussed by Baruah and Bini \cite{BaruahBini-DASIP08}, but the time
complexity remains exponential. Although it is possible to approximate
$\load(k)$ by using approximate demand bound functions in polynomial
time, this is at a price of higher $\load(k)$. We show that
the test in Theorem~\ref{thm:baruah-fisher07-bug-free} is
over-pessimistic and is analytically dominated by our linear-time
schedulability test in Theorem~\ref{thm:linear-time-pessimistic} under
global DM.

\begin{corollary}
  \label{cor:superior-to-BaruahFisher2007}
  For global DM, the schedulability test in
  Theorem~\ref{thm:linear-time-pessimistic} analytically dominates the
  schedulability test in Theorem~\ref{thm:baruah-fisher07-bug-free}
  proposed by Baruah and Fisher \cite{DBLP:conf/opodis/BaruahF07}.
\end{corollary}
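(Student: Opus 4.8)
The plan is to prove the implication directly: assuming the (bug-free) Baruah--Fisher condition of Theorem~\ref{thm:baruah-fisher07-bug-free} holds for $\tau_k$, I will show that the weaker-looking but \emph{larger-accepting} condition of Theorem~\ref{thm:linear-time-pessimistic} also holds. Writing $\mathit{LHS}_1,\mathit{RHS}_1$ for the two sides of Eq.~\eqref{eq:linear-time-pessimistic} and $\mathit{LHS}_2 = 2\load(k)+(\ceiling{\mu_k}-1)\delta_{\max}(k)$, $\mathit{RHS}_2=\mu_k$ for Eq.~\eqref{eq:sch-test-corollary-v1}, the target is the chain $\mathit{LHS}_1 \le \mathit{LHS}_2 \le \mathit{RHS}_2 \le \mathit{RHS}_1$, whose middle inequality is exactly the assumed test. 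The easy end is the right inequality: since $U_{\delta,k}^{\max}\le\delta_{\max}(k)$ (noted just after Definition~\ref{def:dbf-function}), we immediately get $\mathit{RHS}_1 = M-(M-1)U_{\delta,k}^{\max} \ge M-(M-1)\delta_{\max}(k) = \mu_k = \mathit{RHS}_2$.

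The core work is $\mathit{LHS}_1\le \mathit{LHS}_2$, and here I would exploit that $\load(k)$ is a \emph{maximum} over all $t>0$ by extracting three elementary lower bounds: (i) letting $t\to\infty$ gives $\load(k)\ge\sum_{i=1}^{k}U_i$, since $\dbf(\tau_i,t)/t\to U_i$; (ii) evaluating at $t=D_k$ and using $\dbf(\tau_i,D_k)\ge C_i$ for every $i\le k$ (valid because the DM order gives $D_i\le D_k$) yields $\load(k)\ge \sum_{i=1}^{k-1}C_i/D_k$; and (iii) $\load(k)\ge\delta_{\max}(k)\ge\delta_k$, because the single-task envelope $\max_{t>0}\dbf(\tau_j,t)/t$ peaks at $t=D_j$ or as $t\to\infty$ and hence equals $\max\{C_j/D_j,U_j\}=\delta_j$. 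I then split $\mathit{LHS}_1=\delta_k+\sum_{i=1}^{k-1}\frac{C_i(1-U_i)}{D_k}+\sum_{i=1}^{k-1}U_i$ and bound it term by term: the middle sum is $\le\sum_{i=1}^{k-1}C_i/D_k\le\load(k)$ by (ii) together with $0\le 1-U_i\le 1$, and the last sum is $\le\sum_{i=1}^{k}U_i\le\load(k)$ by (i). These two consume the $2\load(k)$ budget, leaving only the single leading term $\delta_k$ to be absorbed into $(\ceiling{\mu_k}-1)\delta_{\max}(k)$.

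The one delicate point, and the main obstacle, is precisely this last absorption, which needs $\ceiling{\mu_k}-1\ge1$, i.e.\ $\mu_k>1$, i.e.\ $\delta_{\max}(k)<1$. I would dispatch it by showing that the only problematic regime is \emph{vacuous} under the hypothesis: if $\delta_{\max}(k)=1$, then bound (iii) forces $\load(k)\ge1$, so $\mathit{LHS}_2=2\load(k)\ge 2 > 1 = \mu_k = \mathit{RHS}_2$, meaning the assumed Baruah--Fisher test itself \emph{fails}, contradicting the premise. Hence we may freely assume $\delta_{\max}(k)<1$, which gives $\ceiling{\mu_k}\ge2$ and therefore $(\ceiling{\mu_k}-1)\delta_{\max}(k)\ge\delta_{\max}(k)\ge\delta_k$. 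Combining, $\mathit{LHS}_1\le\delta_k+2\load(k)\le 2\load(k)+(\ceiling{\mu_k}-1)\delta_{\max}(k)=\mathit{LHS}_2$, closing the chain and yielding $\mathit{LHS}_1\le\mathit{RHS}_1$, i.e.\ schedulability by Theorem~\ref{thm:linear-time-pessimistic}.

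Finally I would record the boundary case $k=1$, where the higher-priority sums are empty: there $\mathit{LHS}_1=\delta_1$ and $\mathit{RHS}_1=M-(M-1)\delta_1$, so the test passes whenever $\delta_1\le1$, which always holds; thus dominance is trivial there and consistent with the general argument (whose vacuity clause again covers $\delta_1=1$). I expect the write-up itself to be short, as each of the three $\load(k)$ bounds is a one-line consequence of the $\dbf$ definition; the only genuine idea is the vacuity observation (iii) that rules out $\delta_{\max}(k)=1$, without which the $\delta_k$-absorption step would be false.
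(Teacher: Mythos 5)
Your proposal is correct and follows the same overall chain as the paper's proof --- the same two lower bounds on $\load(k)$ (the limit $t\to\infty$ giving $\sum_{i=1}^{k}U_i$, and the evaluation at $t=D_k$ using the DM order), and the same observation $U_{\delta,k}^{\max}\le\delta_{\max}(k)$ for the right-hand sides. The one place you diverge is the treatment of the leading term $\delta_k$: you absorb it into the carry-in term $(\ceiling{\mu_k}-1)\delta_{\max}(k)$, which forces you to establish $\ceiling{\mu_k}\ge 2$ and hence to add the lower bound $\load(k)\ge\delta_{\max}(k)$ plus a vacuity argument ruling out $\delta_{\max}(k)=1$. The paper avoids this entirely by writing $\delta_k=\max\{C_k/D_k,\,U_k\}\le C_k/D_k+U_k$ and extending both sums to index $k$, so that the whole left-hand side of Eq.~\eqref{eq:linear-time-pessimistic} is bounded by $2\load(k)$ alone and the nonnegative carry-in term of Eq.~\eqref{eq:sch-test-corollary-v1} can simply be discarded. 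Both routes are valid; the paper's is shorter and needs no case split, while yours proves the slightly stronger (but unneeded) fact that the dominance survives even if one insists on matching $\delta_k$ against the carry-in term. Your auxiliary claims ($\load(k)\ge\delta_{\max}(k)$ and the $k=1$ boundary case) are correct, so there is no gap --- just an avoidable detour.
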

\begin{proof}
  This is due to the following facts:
  \begin{itemize}
  \item By definition, $\load(k) \geq \mbox{limit}_{t\rightarrow \infty}\sum_{i=1}^{k} \dbf(\tau_i,
    t)/t = \sum_{i=1}^{k} U_i$.
  \item Since $D_i \leq D_k$ in global DM for $i=1,2,\ldots,k-1$, we
    know that $\frac{\sum_{i=1}^{k} \dbf(\tau_i,D_k)}{D_k} \geq
    \sum_{i=1}^{k} \frac{C_i}{D_k}$. Therefore, $\load(k) \geq \sum_{i=1}^{k}
    \frac{C_i}{D_k}$.
  \end{itemize}
  Combining these 
  facts, we get 
  \begin{align}
    \delta_k+
  \sum_{i=1}^{k-1} \left(\frac{C_i - C_iU_i}{D_k} + U_i\right) 
\leq  \sum_{i=1}^{k} \frac{C_i}{D_k}  + U_i \leq 2\load(k).
  \end{align}
  Since we know that the right-hand side  in
  Eq.~\eqref{eq:sch-test-corollary-v1}, i.e., \mbox{$M -
  (M-1)\delta_{\max}(k)$}, is less than or equal to \mbox{$M -
  (M-1)U_{\delta,k}^{\max}$} in Eq.~\eqref{eq:linear-time-pessimistic}, we reach
  the conclusion.
\end{proof}

\begin{theorem}
\label{thm:global-dm-speedup-factors-3}
  Global DM has a speedup bound of $3-\frac{1}{M}$, with respect to
  the optimal schedule, when $M \geq 2$.
\end{theorem}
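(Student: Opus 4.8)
The plan is to prove the bound through the contrapositive of the resource-augmentation definition: I assume the task set $\mathbf{T}$ is feasible on $M$ unit-speed processors and then show that the linear-time test of Theorem~\ref{thm:linear-time-pessimistic} accepts \emph{every} task $\tau_k$ once each processor runs at speed $s = 3-\frac1M$. Since passing that test for all $\tau_k$ certifies schedulability under global DM, this yields the speedup bound, and Theorem~\ref{thm:linear-time-pessimistic} is the natural vehicle because it is a single closed-form inequality (no $\ell$-enumeration). First I would invoke the contrapositive of Lemma~\ref{lemma:lower-speed-bound}: feasibility at unit speed forces $\sum_{\tau_i\in\mathbf{T}}\dbf(\tau_i,t)\le Mt$ for all $t>0$ and $\sum_{\tau_i\in\mathbf{T}}U_i\le M$. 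Evaluating the first at $t=D_k$ and using that under DM every task $\tau_i$ with $i\le k$ has $D_i\le D_k$ and hence $\dbf(\tau_i,D_k)\ge C_i$, I obtain $\sum_{i=1}^{k}C_i\le M D_k$; the utilization condition gives $\sum_{i=1}^{k}U_i\le M$. These two sharpened bounds (note they run up to index $k$) are the crux.

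Next I would write the test of Theorem~\ref{thm:linear-time-pessimistic} at speed $s$. Running at speed $s$ replaces $C_i$ by $C_i/s$, hence $U_i$ by $U_i/s$, $\delta_k$ by $\delta_k/s$, and $U_{\delta,k}^{\max}$ by $U_{\delta,k}^{\max}/s$. After multiplying the scaled inequality by $s$, the condition to verify is
\[
\delta_k + \sum_{i=1}^{k-1}\Bigl(\tfrac{C_i - C_iU_i/s}{D_k} + U_i\Bigr) \le sM - (M-1)U_{\delta,k}^{\max}.
\]
I would bound the left-hand side by dropping the nonnegative subtracted term $\tfrac1s\sum_{i<k}\tfrac{C_iU_i}{D_k}$ and then re-inserting task $\tau_k$ into both sums:
\[
\text{LHS} \le \delta_k - \tfrac{C_k}{D_k} - U_k + \sum_{i=1}^{k}\tfrac{C_i}{D_k} + \sum_{i=1}^{k}U_i \le \delta_k - \tfrac{C_k}{D_k} - U_k + 2M,
\]
using the two feasibility bounds above. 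The decisive observation is $\delta_k=\max\{C_k/D_k,\,U_k\}\le C_k/D_k+U_k$, so $\delta_k-\tfrac{C_k}{D_k}-U_k\le 0$ and therefore $\text{LHS}\le 2M$.

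Finally I would close on the right. Since each $U_i\le 1$ and $\delta_k\le 1$ by the standing assumptions, $U_{\delta,k}^{\max}\le 1$, so $sM-(M-1)U_{\delta,k}^{\max}\ge sM-(M-1)$; plugging $s=3-\frac1M$ gives $sM-(M-1)=3M-1-M+1=2M$. Hence $\text{LHS}\le 2M = sM-(M-1)\le sM-(M-1)U_{\delta,k}^{\max}$, the test of Theorem~\ref{thm:linear-time-pessimistic} is satisfied for $\tau_k$, and global DM meets all deadlines at speed $3-\frac1M$.

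I expect the main obstacle to be extracting the $-\frac1M$ refinement rather than a plain factor of $3$: bounding $\sum_{i<k}C_i/D_k$ and $\sum_{i<k}U_i$ separately and crudely (each by $M$) together with $\delta_k\le 1$ only yields $\text{LHS}\le 2M+1$ and thus $s\ge 3$. The sharper move is to carry the feasibility sums up to index $k$, pay the price $-C_k/D_k-U_k$, and cancel it against $\delta_k$ via $\delta_k\le C_k/D_k+U_k$; this collapses $2M+\delta_k$ to exactly $2M$, which is precisely the value that the density margin $(M-1)U_{\delta,k}^{\max}\le M-1$ on the right-hand side can absorb, landing the analysis exactly on $s=3-\frac1M$.
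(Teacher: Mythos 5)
Your proof is correct and takes essentially the same route as the paper: it certifies the speedup bound via the linear-time test of Theorem~\ref{thm:linear-time-pessimistic}, uses the DM property $D_i\le D_k$ to get $\sum_{i\le k}\dbf(\tau_i,D_k)\ge\sum_{i\le k}C_i$ together with the necessary conditions of Lemma~\ref{lemma:lower-speed-bound}, and exploits the same cancellation $\delta_k\le C_k/D_k+U_k$. The only difference is presentational --- you argue directly (feasible at speed $1$ implies the scaled test passes at speed $3-1/M$) whereas the paper argues by contrapositive (a failed test forces one of the three necessary quantities above $\tfrac{1}{3-1/M}$), and the arithmetic $2M\le sM-(M-1)$ is just a rearrangement of the paper's budget $1+1+(1-1/M)=3-1/M$.
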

\begin{proof}
  We only prove the speedup bound by using the schedulability test in
  Theorem \ref{thm:linear-time-pessimistic}. Due to the dominance
  properties in Corollary~\ref{corollary:dominance}, such a bound also
  holds for the schedulability tests from 
  Theorems~\ref{thm:schedulability-test-main},~\ref{thm:schedulability-test-all-range},~\ref{thm:schedulability-test-ell=all-range-deltamax},~and~\ref{thm:schedulability-test-ell=1-or-infty}.

  Suppose that task $\tau_k$ is not schedulable by global DM.  Since
  $D_i \leq D_k$ for any $i=1,2,\ldots,k-1$ under global DM, we know
  $\dbf(\tau_i,D_k) \geq C_i$. Therefore, under global DM,
  $\sum_{i=1}^{k} \frac{C_i}{M D_k} \leq \sum_{i=1}^{k}
  \frac{\dbf(\tau_i,D_k)}{M D_k} \leq \frac{\sum_{\tau_i \in {\bf T}}
    \dbf(\tau_i, D_k)}{M D_k} \leq \max_{t>0} \frac{\sum_{\tau_i \in
      {\bf T}} \dbf(\tau_i, t)}{Mt}$.  

  By the assumption that task $\tau_k$ is
  also deemed not schedulable by
  Theorem~\ref{thm:linear-time-pessimistic}, we have
  \begin{align}
&    \delta_k+ \sum_{i=1}^{k-1} \left(\frac{C_i - C_iU_i}{D_k} +
      U_i\right)>  M - (M-1)U_{\delta,k}^{\max}     \nonumber\\
\Rightarrow&  \sum_{i=1}^{k} \frac{C_i}{M D_k} +
     \sum_{i=1}^{k} \frac{U_i}{M} > 1 - \left(1-\frac{1}{M}\right)
     U_{\delta,k}^{\max}\nonumber\\
\Rightarrow&  \sum_{i=1}^{k} \frac{C_i}{M D_k} +
     \sum_{i=1}^{k} \frac{U_i}{M} +  \left(1-\frac{1}{M}\right)
     U_{\delta,k}^{\max}> 1
  \end{align}

  Therefore, either $\max_{t>0} \frac{\sum_{\tau_i \in {\bf T}}
    \dbf(\tau_i, t)}{Mt} \geq \sum_{i=1}^{k} \frac{C_i}{M D_k} >
  \frac{1}{3-1/M}$, or $\sum_{i=1}^{k} \frac{U_i}{M} >
  \frac{1}{3-1/M}$, or $\delta_{\max}(k) \geq U_{\delta,k}^{\max} >
  \frac{1}{3-1/M}$. By Lemma~\ref{lemma:lower-speed-bound}, we reach
  the conclusion of the speedup bound for global DM with respect to
  the optimal schedule.
\end{proof}

\begin{theorem}
  \label{thm:global-dm-speedup-factors-3-tests}
  For global DM, the schedulability tests in
  Theorems~\ref{thm:schedulability-test-main},~\ref{thm:schedulability-test-all-range},~\ref{thm:schedulability-test-ell=all-range-deltamax},~\ref{thm:schedulability-test-ell=1-or-infty},~and~\ref{thm:linear-time-pessimistic}
  have a speedup bound of $3-\frac{1}{M}$, with respect to the
  optimal schedule, when $M \geq 2$.
\end{theorem}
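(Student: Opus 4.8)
The plan is to reduce this multi-test claim to a single already-proven speedup bound together with the dominance chain, so that no new computation is needed. Theorem~\ref{thm:global-dm-speedup-factors-3} has already established, via the contrapositive argument using Lemma~\ref{lemma:lower-speed-bound}, that the \emph{weakest} of the five tests, namely the linear-time test in Theorem~\ref{thm:linear-time-pessimistic}, attains the speedup bound $3-\frac{1}{M}$ for global DM. Concretely, this means: whenever a task set is feasible on $M$ processors at speed $1$, the test in Theorem~\ref{thm:linear-time-pessimistic} deems every task schedulable once each processor is sped up by the factor $3-\frac{1}{M}$. My first step would be to restate this consequence of Theorem~\ref{thm:global-dm-speedup-factors-3} in exactly this ``accept at speed $3-\frac{1}{M}$'' form, since that is the form needed for the transfer argument.

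Next I would invoke Corollary~\ref{corollary:dominance}, which gives the dominance chain: Theorem~\ref{thm:schedulability-test-main} analytically dominates Theorem~\ref{thm:schedulability-test-all-range}, which dominates Theorem~\ref{thm:schedulability-test-ell=all-range-deltamax}, which is equivalent to Theorem~\ref{thm:schedulability-test-ell=1-or-infty}, which in turn dominates Theorem~\ref{thm:linear-time-pessimistic}. By the definition of analytical dominance stated before Corollary~\ref{corollary:dominance}, each of the other four tests accepts \emph{every} task set that Theorem~\ref{thm:linear-time-pessimistic} accepts, evaluated on the same (sped-up) platform. The crux is then a one-line transfer: fix any task set feasible at speed $1$ and run the platform at speed $3-\frac{1}{M}$; by the speedup bound of Theorem~\ref{thm:global-dm-speedup-factors-3} the test in Theorem~\ref{thm:linear-time-pessimistic} accepts it, and by dominance each of the remaining four tests accepts it as well, at the very same speed. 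Hence all five tests deem the task set schedulable at speed $3-\frac{1}{M}$, which is precisely the claimed speedup bound for each of them.

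I expect no genuine obstacle here, since all the quantitative work is absorbed into Theorem~\ref{thm:global-dm-speedup-factors-3}. The one point that must be stated carefully is the \emph{direction} of the dominance implication. Because a stronger test accepts a superset of the task sets accepted by a weaker one, and because it is the \emph{most dominated} test (Theorem~\ref{thm:linear-time-pessimistic}) that already realizes the bound, the speedup guarantee propagates \emph{upward} to the stronger tests rather than downward. It is also worth noting explicitly that the speedup-bound definition only demands \emph{acceptance} at speed $3-\frac{1}{M}$, not tightness, so inheriting acceptance through the dominance chain is exactly what is required; tightness in the asymptotic sense is handled separately by the lower-bound construction.
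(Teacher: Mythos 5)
Your proposal is correct and is essentially identical to the paper's own proof: the paper likewise derives this theorem directly from Theorem~\ref{thm:global-dm-speedup-factors-3} (which establishes the $3-\frac{1}{M}$ bound for the weakest test, Theorem~\ref{thm:linear-time-pessimistic}) together with the dominance chain of Corollary~\ref{corollary:dominance}. Your explicit remark about the direction of the dominance transfer is a useful clarification but does not change the argument.
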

\begin{proof}
  This is due to 
  Theorem~\ref{thm:global-dm-speedup-factors-3} and
  Corollary~\ref{corollary:dominance}, because all of the tests in
  Theorems~\ref{thm:schedulability-test-main},~\ref{thm:schedulability-test-all-range},~\ref{thm:schedulability-test-ell=all-range-deltamax},~\ref{thm:schedulability-test-ell=1-or-infty}
  dominate the test in Theorem~\ref{thm:linear-time-pessimistic} as
  presented in Corollary~\ref{corollary:dominance}.
\end{proof}

\begin{theorem}
\label{theorem:sporadic-arbitrary-tight}
The speedup bound of global DM for arbitrary-deadline task systems is
at least $3-\frac{3}{M+1}$.
\end{theorem}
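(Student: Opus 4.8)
The plan is to prove the bound constructively: for every $M\ge 2$ and every small $\varepsilon>0$ I would exhibit a concrete arbitrary-deadline task set ${\bf T}(M,\varepsilon)$ together with a legal (synchronous, periodic) job-arrival sequence, and establish two facts: (i) ${\bf T}$ admits a feasible schedule on $M$ unit-speed processors, and (ii) global DM, run on the synchronous release pattern on $M$ processors at any speed $s<3-\tfrac{3}{M+1}$, causes some job to miss its deadline. By the negated definition of the speedup bound in the resource-augmentation subsection, (i) and (ii) together force the speedup of global DM to exceed $s$; letting $s\uparrow 3-\tfrac{3}{M+1}$ and $\varepsilon\downarrow 0$ then yields the claim.

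The construction I would use pairs a block of $M$ high-priority \emph{interfering} tasks with a single lowest-priority \emph{victim} task $\tau_k$ (so $M+1$ tasks in total, which is the source of the $M+1$ in the bound). The interfering tasks are given density close to $1$, so that when released synchronously their jobs momentarily seize all $M$ processors, but they are also given deadline slack $D_i<T_i$ relative to their period so that an unconstrained optimal scheduler can \emph{defer} them and free processors for $\tau_k$, whereas global DM, being work-conserving and priority-driven, always dispatches them as early as possible. The victim $\tau_k$ is an \emph{arbitrary-deadline} task with $D_k>T_k$: over the busy window $[t_a,t_d)$ its jobs pile up, so that by the deadline $D_k'=(\ell-1)T_k+D_k$ of its $\ell$-th job it must complete $\ell C_k$ units of work on \emph{at most one} processor at a time while the synchronized interfering jobs repeatedly occupy all $M$ processors. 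I would tune the per-task density of the interfering tasks, the ratio $C_k/T_k$, and the window index $\ell$ so that, at speed $1$, the three quantities in the necessary condition of Lemma~\ref{lemma:lower-speed-bound}—the normalized $\dbf$ term, the total-utilization term, and $\delta_{\max}(N)$—are simultaneously driven to the feasibility threshold $1$, which is precisely the regime in which the inequalities behind the $3-\tfrac1M$ upper bound become tight.

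The two claims are then discharged separately. For (i) I would exhibit an \emph{explicit} schedule meeting all deadlines on the worst-case (critical-instant) arrival pattern—dedicating processor time to $\tau_k$ while staggering the deferrable interfering jobs into the remaining capacity—since Lemma~\ref{lemma:lower-speed-bound} supplies only a \emph{necessary} feasibility condition and cannot certify feasibility by itself. For (ii) I would directly simulate global DM on the synchronous release: because the interfering jobs always win priority, $\tau_k$ advances only during the scarce instants when fewer than $M$ interfering jobs are pending, and an accounting of its accumulated service over $[0,D_k']$ shows the $\ell$-th job is unfinished at $D_k'$ exactly when $s<3-\tfrac{3}{M+1}$; this bookkeeping mirrors Lemma~\ref{lemma:workload-overall} but is run forward on the concrete schedule. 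The main obstacle is the tension inside the single construction: the interfering tasks must carry enough slack for the optimal scheduler to hide them (needed for (i)) yet must occupy the processors densely enough under DM to starve $\tau_k$ (needed for (ii)), and the parameters together with the choice of $\ell$ governing the arbitrary-deadline pile-up must be balanced so tightly that the extracted constant is exactly $3-\tfrac{3}{M+1}$ rather than the weaker factor $2$ obtained from a naive steady-state backlog argument. Confirming that the synchronous pattern is indeed the feasibility-worst case, and that no alternative $\ell$ relaxes the bound, is the remaining delicate point.
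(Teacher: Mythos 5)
Your high-level strategy is the paper's: exhibit a concrete task set, show that global DM misses a deadline on a synchronous release pattern, show that a (semi-partitioned) schedule is feasible at a reduced speed, and take the ratio. But the construction you sketch has a genuine gap that I do not believe can be repaired within its own structure. With only $M$ interfering tasks and one victim, the victim is blocked only at instants when all $M$ interfering jobs execute simultaneously; over the long busy window $[t_a,t_d)$ with $\ell\to\infty$ that you invoke to exploit the victim's pile-up, the total blocking amortizes to roughly a $U_i/s$ fraction of the window, while the victim needs a $U_k/s$ fraction of it. A deadline miss then forces $U_k+U_i>s$, and since both utilizations are at most $1$ this caps the extractable speedup at $2$ --- exactly the ``naive steady-state backlog'' factor you say you want to avoid, but your single-layer, $(M+1)$-task construction offers no mechanism to get past it. Carry-in bursts help only in the short-window regime $\ell=1$, which is incompatible with the pile-up argument you rely on. You also misattribute the source of the $M+1$ in the bound: it does not come from the task count.

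The paper's construction uses $N=2M+1$ tasks and \emph{two} layers of interference, and it places the arbitrary-deadline property on the \emph{interfering} tasks rather than on the victim. Tasks $\tau_1,\dots,\tau_M$ have $C_i=\varepsilon/3$, $T_i=\varepsilon$, $D_i=1$: under DM they have top priority (smallest deadline) yet release $1/\varepsilon$ jobs in $[0,1]$, synchronously seizing all $M$ processors for one third of the interval. Tasks $\tau_{M+1},\dots,\tau_{2M}$ are one-shot jobs with $C_i=1/3$ that, under work-conserving DM, greedily absorb the remaining capacity until about time $1/2$, so the victim $\tau_{2M+1}$ (itself a one-shot job with $C=(1+\varepsilon)/3$, not an arbitrary-deadline task) receives only $1/3$ time units before its deadline and misses. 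The arbitrary-deadline slack $D_i=1\gg T_i=\varepsilon$ of the first group is precisely what lets the optimal (semi-partitioned) schedule demote $\tau_m$ to the lowest local priority on processor $m$ and defer it across many of its own periods; each processor then carries load $\frac{1}{3}+\frac{1+\varepsilon}{3M}$ (one one-shot task plus a $1/M$ share of the migrating victim), which is where the factor $\frac{M+1}{3M}$, and hence $3-\frac{3}{M+1}$, actually originates. Your proposal correctly identifies the tension between deferrability for the optimal schedule and density under DM, but resolving it requires the second layer of interference and the relocation of the $D_i>T_i$ property to the blockers, neither of which is present in your sketch.
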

\begin{proof}
  The proof is based on a concrete task set.  We specifically use
  the following task set ${\bf T}^{ad}$ with $N=2M+1$ tasks.  Let
  $\varepsilon$ be an arbitrarily small positive real number such that
  $1/\varepsilon$ is an integer.  Let $\eta \ll \varepsilon$ be an
  arbitrarily small positive number, that is 
  used to enforce the priority assignment under global DM:
\begin{itemize}
  \item $C_i=\frac{\varepsilon}{3}$, $T_i=\varepsilon$, $D_i=1$, for $i=1,2,\ldots, M$.
  \item $C_i=\frac{1}{3}$, $T_i=\infty$, $D_i=1+\eta$, for $i=M+1,M+2,\ldots, 2M$.
  \item $C_i=\frac{1+\varepsilon}{3}$, $T_i=\infty$, $D_i=1+2\eta$, for $i=2M+1$
  \end{itemize}
  As the setting of $\eta \ll \varepsilon$ is just to enforce the
  indexing, \emph{we will directly take $\eta \rightarrow 0$ here.}
  In the Appendix, we prove two properties: 1) ${\bf T}^{ad}$ is not
  schedulable by global DM under a concrete instance which releases
  all the tasks at time $0$ and the subsequent jobs periodically. 2)
  There exists a feasible schedule for task set ${\bf T}^{ad}$ at any
  speed no lower than $\frac{1+\varepsilon}{3} +
  \frac{1+\varepsilon}{3M}$ under a concrete semi-partitioned
  multiprocessor schedule, i.e., $\setof{\tau_{m}, \tau_{m+M}}$
  assigned to processor $m$ for $m=1,2,\ldots, M$ and task
  $\tau_{2M+1}$ executed partially on each of the $M$ processors.
  Therefore, a lower bound on the speedup bound of global DM is:
\[
\lim_{\varepsilon \rightarrow 0}\frac{1}{\frac{1+\varepsilon}{3} + \frac{1+\varepsilon}{3M}} = 
\lim_{\varepsilon \rightarrow 0}\frac{3M}{(1+\varepsilon)\times(M+1)} = \frac{3M}{M+1} = 3- \frac{3}{M+1}.
\]
\end{proof}

By
Theorems~\ref{thm:global-dm-speedup-factors-3}~and~\ref{theorem:sporadic-arbitrary-tight},
we can reach the conclusion that all the schedulability tests from
Theorems~\ref{thm:schedulability-test-main},~\ref{thm:schedulability-test-all-range},~\ref{thm:schedulability-test-ell=all-range-deltamax},~\ref{thm:schedulability-test-ell=1-or-infty},~and~\ref{thm:linear-time-pessimistic}
are asymptotically tight with respect to speedup bounds. However, due to dominance properties in
Corollary~\ref{corollary:dominance}, these tests clearly have
different performance with respect to schedulability tests.

\section{Evaluation}
\label{sec:eval}

We evaluated 
the scheduling
tests provided in this paper by comparing their acceptance ratio to the
acceptance ratio of other algorithms, i.e., comparing the percentage of task
sets accepted for the different schedulability tests,
%
using different settings for the number of processors, the
scheduling policy, and the
ratio of the relative deadline
to the period.

\textbf{Evaluation Setup:}
We conducted evaluations for
homogeneous multiprocessor systems with $M=4$, $M=8$, and $M=16$ processors. We generated
$100$ task sets with cardinality of both $N=5\times M$ and $N=10 \times M$, and
utilization ranging from $M\times 5\%$ to $M \times 100\%$ in steps of
$M\times 5\%$. 
The UUniFast-Discard method~\cite{DBLP:journals/rts/BiniB05} was adopted to
generate the utilization values of a set of $N$ tasks under the target
utilization.
As suggested by Emberson et
al.~\cite{emberson2010techniques}, the periods were generated according to a
log-uniform distribution, 
 with 1, 2, and 3 orders of  
magnitude, i.e., $[1ms-10ms]$, $[1ms-100ms]$, and $[1ms-1000ms]$. 
For each task, the relative deadline was set to
the period multiplied with a  value randomly drawn under a uniform distribution
from a given interval $I$. We conducted evaluations using different interval,
i.e., 
$I$ was $[0.8,2],[0.8,5],[0.8,10],[1,2],[1,5],$ or $[1,10]$.
To schedule the task sets, we applied global deadline-monotonic (DM) and global
slack-monotonic (SM)~\cite{DBLP:conf/opodis/Andersson08} scheduling.

Whether the task set is schedulable under the given scheduling
approach or not was tested
using the following schedulability tests: 
\begin{compactitem}
  \item LOAD: The load-based analysis by Baruah and Fisher
  in~\cite{DBLP:conf/icdcn/BaruahF08}, only for DM scheduling.
  \item BAK: The test  by Baker in Theorem
  11 in~\cite{baker2006analysis}. 
  \item HC: The sufficient test in Corollary 2 by Huang and Chen
  in~\cite{DBLP:conf/rtns/HuangC15}.
  \item OUR-4.4: The sufficient test in
  Theorem~\ref{thm:schedulability-test-all-range} in this paper.
  \item OUR-4.6: The sufficient test in
  Theorem~\ref{thm:schedulability-test-ell=1-or-infty} in this paper.
  \item OUR-4.7: The sufficient test in
  Theorem~\ref{thm:linear-time-pessimistic} in this paper.
\end{compactitem}
We also  checked if a task set was schedulable according to at least
one of the tests, 
denoted as \emph{ALL}. 
\emph{We only present a small set of the conducted tests here. The
diagrams  of all conducted evaluations  can be found in~\cite{eval_zip}.}

\textbf{Evaluation Results:} 
Figure~\ref{fig:kevins_eval} shows the evaluations under the setting used in the
paper by Huang and Chen~\cite{DBLP:conf/rtns/HuangC15}. They used DM scheduling
on $M=8$ processors, a task set containing 40 tasks and ratios of
$\frac{D_i}{T_i}\in[0.8,2]$ and analyzed the schedulability for 
$T_i$ values that differ up to 1, 2, and 3 orders of magnitude, i.e., $T_i$ in a
range of $[1ms,10ms]$, $[1ms,100ms]$, or $[1ms,1000ms]$. 
The test by Baruah and  Fisher~\cite{DBLP:conf/icdcn/BaruahF08} is clearly
outperformed by 
Theorem~\ref{thm:schedulability-test-ell=1-or-infty}, 
Theorem~\ref{thm:linear-time-pessimistic}, and
Baker's test~\cite{baker2006analysis}, which provide similar acceptance ratios. 
The test by Huang and Chen~\cite{DBLP:conf/rtns/HuangC15} outperforms those three
tests and is  worse than the test in
Theorem~\ref{thm:schedulability-test-all-range} in these settings. However,
there is no dominance relation between Theorem~\ref{thm:schedulability-test-all-range} and the
test by Huang and Chen~\cite{DBLP:conf/rtns/HuangC15}, as  some task sets
are schedulable under the
test by Huang and Chen~\cite{DBLP:conf/rtns/HuangC15} but not schedulable under
Theorem~\ref{thm:schedulability-test-all-range} and vise versa.


\begin{figure*}[t]
   \centering
  \includegraphics[trim = 0mm 80mm 0mm 0mm, clip,
width=0.9\textwidth]{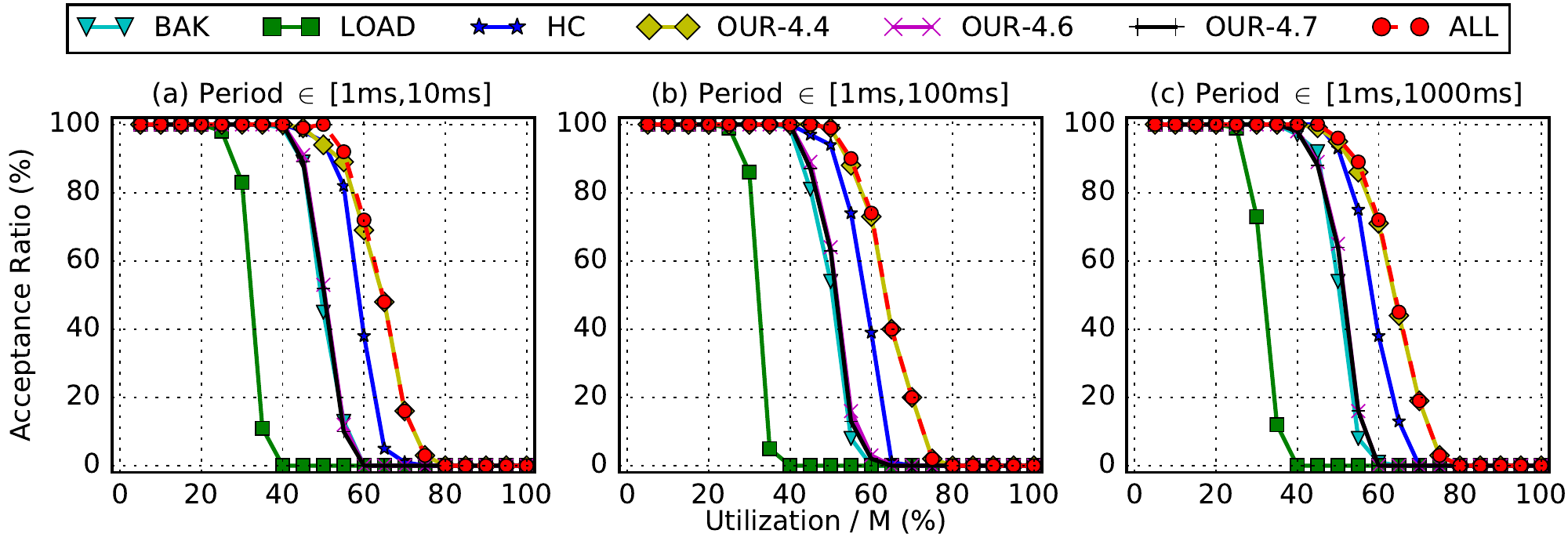}
\vspace{-0.15in}   \caption{\small Comparison of the tests presented in
Theorem~\ref{thm:schedulability-test-all-range},
\ref{thm:schedulability-test-ell=1-or-infty}, and
\ref{thm:linear-time-pessimistic} with the methods from
Baruah and  Fisher~(LOAD)~\cite{DBLP:conf/icdcn/BaruahF08},
Baker~\cite{baker2006analysis}, and Huang and
Chen~\cite{DBLP:conf/rtns/HuangC15} for different ranges of period. The
evaluation setup is the same as in~\cite{DBLP:conf/rtns/HuangC15}, i.e., DM, $M=8$, $N=40$,
$\frac{D_i}{T_i}\in[0.8,2]$.}
    \label{fig:kevins_eval}
\end{figure*}

\begin{figure*}[t]
   \centering
   \vspace{-0.1in}
  \includegraphics[trim = 0mm 80mm 0mm 0mm, clip,
width=0.9\textwidth]{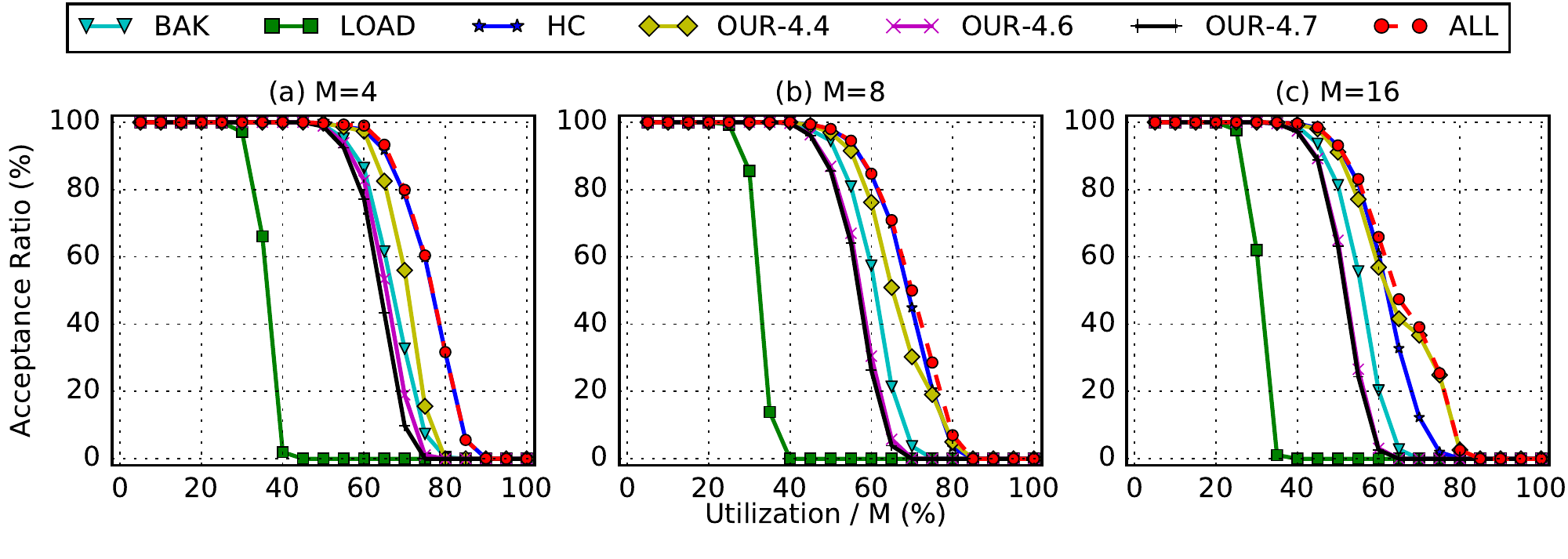}
\vspace{-0.15in}   \caption{\small Comparison of the tests presented in
Theorem~\ref{thm:schedulability-test-all-range},
\ref{thm:schedulability-test-ell=1-or-infty}, and
\ref{thm:linear-time-pessimistic} with the methods from
Baruah and  Fisher~(LOAD)~\cite{DBLP:conf/icdcn/BaruahF08},
Baker~\cite{baker2006analysis}, and Huang and
Chen~\cite{DBLP:conf/rtns/HuangC15} for different $M$ values. The other
parameters are fixed, i.e., DM, $N=5\times M$,
$T_i\in[1ms,10ms]$, and $\frac{D_i}{T_i}\in[0.8,10]$.}
    \label{fig:proc_eval}
\vspace{-0.15in}
\end{figure*}

There are other configurations where the test by Huang and
Chen~\cite{DBLP:conf/rtns/HuangC15} performs better than
Theorem~\ref{thm:schedulability-test-all-range}. One example is shown in
Figure~\ref{fig:proc_eval},  analyzing the impact of the number of
processors. Here Theorem~\ref{thm:schedulability-test-all-range} performs
compatible to Theorem~\ref{thm:schedulability-test-ell=1-or-infty}, 
Theorem~\ref{thm:linear-time-pessimistic}, and
Baker's test~\cite{baker2006analysis} for $M=4$. 
When the number of processors increases, Theorem~\ref{thm:schedulability-test-all-range}
performs better. The gap to Huang and
Chen~\cite{DBLP:conf/rtns/HuangC15} is smaller for 8 processors 
and Theorem~\ref{thm:schedulability-test-all-range} 
has a higher acceptance rate
when the utilization level is $80\% \times M$. For $M=16$ processors
Theorem~\ref{thm:schedulability-test-all-range} accepts more task sets than
Huang and Chen~\cite{DBLP:conf/rtns/HuangC15} when the utilization level is $\geq 65\%\times M$. In
addition, it is possible that the number of task sets that is accepted by
at least one algorithm is not close to the number of task sets accepted by Huang
and Chen~\cite{DBLP:conf/rtns/HuangC15} or
Theorem~\ref{thm:schedulability-test-all-range} as can be seen for 
the utilization level $75\%\times M$ in the case where $M=8$. 

Furthermore, we  tracked if  the test by
Baker~\cite{baker2006analysis} accepted some task sets that were not accepted by Huang
and Chen~\cite{DBLP:conf/rtns/HuangC15} or
Theorem~\ref{thm:schedulability-test-all-range}, which happened occasionally. 
Therefore, we conclude that there is no dominance relation between any of
those three  tests, i.e., Theorem~\ref{thm:schedulability-test-all-range},
and the tests by Baker~\cite{baker2006analysis} and 
by 
Huang and
Chen~\cite{DBLP:conf/rtns/HuangC15}.
 As these 
 tests 
  can all be implemented with polynomial-time complexity, all three  
  should be applied when testing the schedulability of  arbitrary-deadline
  task sets under global fixed-priority scheduling.


\section{Conclusion}

We present a series of schedulability tests for multiprocessor
systems under any given fixed-priority scheduling approach. Those schedulability
tests have different tradeoffs between their accuracy and their time complexity.
All those schedulability tests dominate the approach by Baruah and
Fisher~\cite{DBLP:conf/icdcn/BaruahF08}, both with respect to speedup bounds and
schedulability analysis. Theorem~\ref{thm:schedulability-test-main} is the most powerful
schedulability test in this paper. However, we do not reach any
concrete implementation with affordable time complexity. In the future
work, we will seek for efficient methods to implement the
schedulability test in Theorem~\ref{thm:schedulability-test-main}.


\bibliographystyle{abbrv}
\bibliography{real-time}

\section{Appendix: Additional Proofs}
\label{sec:proof}

\subsection{Proof of Theorem~\ref{theorem:sporadic-arbitrary-tight}: Speedup lower bound of global DM
for arbitrary-deadline task systems} 
\label{sec:lower-bound-arbitrary-DM}

 We will specifically use the
following task set ${\bf T}^{ad}$ with $N=2M+1$ tasks.  Let $\varepsilon$
be an arbitrarily small positive real number such that $1/\varepsilon$ is
an integer.  Let $\eta \ll \varepsilon$ be an arbitrarily small positive
number, that is  
used  to enforce the priority assignment under global DM:
\begin{itemize}
  \item $C_i=\frac{\varepsilon}{3}$, $T_i=\varepsilon$, $D_i=1$, for $i=1,2,\ldots, M$.
  \item $C_i=\frac{1}{3}$, $T_i=\infty$, $D_i=1+\eta$, for $i=M+1,M+2,\ldots, 2M$.
  \item $C_i=\frac{1+\varepsilon}{3}$, $T_i=\infty$, $D_i=1+2\eta$, for $i=2M+1$
  \end{itemize}
 As the setting of $\eta \ll \varepsilon$ is just to enforce the indexing, 
 \emph{we will directly take $\eta \rightarrow 0$ 
  here.}

\begin{lemma}
  \label{lemma:lb-arbitrary-deadline-DM-fail}
  ${\bf T}^{ad}$ is not schedulable by global DM.
\end{lemma}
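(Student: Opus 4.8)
The plan is to exhibit a concrete legal release sequence and directly simulate the global DM schedule it induces, then show that the single job of the lowest-priority task $\tau_{2M+1}$ cannot complete by its deadline. I would release every task at time $0$ and let each of the periodic tasks $\tau_1,\ldots,\tau_M$ release a fresh job every $\varepsilon$ time units, while $\tau_{M+1},\ldots,\tau_{2M+1}$ each release only their single job at $0$ (since $T_i=\infty$). Recall that under global DM the priority order is exactly $\tau_1,\ldots,\tau_M$ (deadline $1$), then $\tau_{M+1},\ldots,\tau_{2M}$ (deadline $1+\eta$), then $\tau_{2M+1}$ (deadline $1+2\eta$), and we take $\eta \to 0$ so the effective deadline of $\tau_{2M+1}$ is $1$.

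First I would analyze the $M$ highest-priority tasks. In every period of length $\varepsilon$ there are exactly $M$ such jobs, each of size $\varepsilon/3$, and exactly $M$ processors; hence they occupy all processors during a sub-interval of length $\varepsilon/3$ and leave all $M$ processors free for the remaining $2\varepsilon/3$ of each period. Since each job finishes in $\varepsilon/3 < \varepsilon$, these tasks trivially meet their deadlines, and the consequence for the analysis is that the lower-priority work sees $M$ processors available for exactly a $2/3$ fraction of time, partitioned into the per-period free sub-intervals of length $2\varepsilon/3$.

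Next I would track $\tau_{M+1},\ldots,\tau_{2M}$. These are exactly $M$ jobs, each requiring $1/3$ of execution, all released at $0$, and all of higher priority than $\tau_{2M+1}$. Because there are exactly $M$ of them and $M$ processors, during every free sub-interval each of them occupies one processor, so they all accumulate execution at the same rate and finish simultaneously at time $1/2$, since $(2/3)\cdot(1/2)=1/3$ (meeting their deadlines $1+\eta$). The key observation is that throughout $[0,1/2)$ all $M$ processors' free time is consumed by these $M$ equal-priority jobs, so $\tau_{2M+1}$ receives \emph{no} service before $1/2$. Finally, on $[1/2,1]$ the free processor time is available to $\tau_{2M+1}$, but by the no-intra-task-parallelism assumption it may occupy at most one processor at a time; it therefore executes only during the free sub-intervals, on a single processor, accumulating at most $(2/3)\cdot(1/2)=1/3$. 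Allowing for the actual deadline $1+2\eta$, it accumulates at most $1/3 + (2/3)\cdot 2\eta$, which is strictly less than $C_{2M+1}=(1+\varepsilon)/3=1/3+\varepsilon/3$ precisely because $\eta \ll \varepsilon$. Hence $\tau_{2M+1}$ misses its deadline.

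The main obstacle is to argue rigorously — not merely at the level of "effective rates" — that the free slots left by $\tau_1,\ldots,\tau_M$ are fully and \emph{exclusively} claimed by $\tau_{M+1},\ldots,\tau_{2M}$ on all of $[0,1/2)$, so that $\tau_{2M+1}$ is genuinely starved, and that $\tau_{2M+1}$ cannot exploit the $M$ simultaneously-free processors after $1/2$. Both hinge on the exact counting that the number of equal-priority competitors equals the processor count $M$, combined with the global-FP dispatch rule (the $M$ highest-priority ready jobs run) and the single-processor restriction on one task. I would make this precise by partitioning $[0,1]$ into the per-period busy sub-intervals (length $\varepsilon/3$) and free sub-intervals (length $2\varepsilon/3$) and accounting the accumulated execution of each task class interval by interval; a crude global processor-time budget does \emph{not} suffice here, since the total free time $M\cdot(2/3)$ minus the medium work $M/3$ leaves $M/3 \geq C_{2M+1}$, so the temporal starvation argument is essential.
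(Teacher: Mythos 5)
Your proposal is correct and follows essentially the same route as the paper's own proof: the same synchronous periodic release pattern, the same observation that the $M$ top tasks leave all processors free for a $2/3$ fraction of each period, that the $M$ medium jobs exactly saturate those free slots until time $1/2$, and that $\tau_{2M+1}$ is then confined to one processor and accumulates only $1/3 < (1+\varepsilon)/3$. Your explicit remarks on why a crude processor-time budget is insufficient and on the role of the no-intra-task-parallelism assumption simply make rigorous what the paper's shorter argument leaves implicit.
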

\begin{proof}
  This can be proved by 
  showing that  task $\tau_N$ misses its deadline
  in the following concrete arrival pattern:
  all tasks
  release their first jobs at time $0$ and the subsequent jobs arrive
  as early as possible while respecting their minimum inter-arrival
  times.
  For this arrival pattern, the jobs of tasks $\tau_1, \tau_2, \ldots,
  \tau_M$ are executed from time $i \varepsilon$ to time $i \varepsilon +
  \frac{\varepsilon}{3}$ for $i=0,1,2,\ldots, 1/\varepsilon$. Therefore, 
   these $M$ tasks are executed for in total $1/3$ time units  from time
  $0$ to time $1$. 
  For tasks $\tau_{1+M}, \tau_{2+M}, \ldots,   \tau_{2M}$,  each
  of them is
  executed for $1/3$ time units from time $0$ to time $1$ when the
  processors do not execute 
  $\tau_1, \tau_2, \ldots,
  \tau_M$. 
  Task $\tau_{2M+1}$ is executed alone
  without any overlap with the executions of the higher-priority
  tasks. Therefore task $\tau_{2M+1}$ misses its deadline since it
  needs $\frac{1+\varepsilon}{3}$ time units, but only $\frac{1}{3}$ time
  units are available before its deadline.
\end{proof}

\begin{lemma}
  \label{lemma:lb-arbitrary-deadline-partitioned-succeed}
  There exists a feasible schedule for task set ${\bf T}^{ad}$ at
  any speed no lower than $\frac{1+\varepsilon}{3} + \frac{1+\varepsilon}{3M}$.
\end{lemma}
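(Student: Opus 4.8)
The plan is to exhibit an explicit semi-partitioned schedule witnessing feasibility at the claimed speed $s_0 := \frac{1+\varepsilon}{3} + \frac{1+\varepsilon}{3M} = \frac{(1+\varepsilon)(M+1)}{3M}$, following the assignment announced in the theorem: put $\{\tau_m, \tau_{m+M}\}$ on processor $m$ for $m = 1, \dots, M$, and split the single job of $\tau_{2M+1}$ into $M$ equal fragments, one per processor. I would model running at speed $s_0$ by dividing every execution requirement by $s_0$ and giving each processor unit capacity. After scaling, $\tau_m$ (for $m \le M$) has per-job cost $\hat C_m = \frac{\varepsilon M}{(1+\varepsilon)(M+1)}$ and period $\varepsilon$, $\tau_{m+M}$ is a single job of cost $\hat C_{m+M} = \frac{M}{(1+\varepsilon)(M+1)}$ with deadline $1$ (taking $\eta \to 0$), and each fragment of $\tau_{2M+1}$ is a single job of cost $\beta := \frac{1}{M+1}$ with deadline $1$; note $\hat C_m + \hat C_{m+M} = \frac{M}{M+1} = 1 - \beta$, which already signals that $s_0$ will be the exact break-even speed.

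To respect the prohibition on intra-task parallelism for $\tau_{2M+1}$, I would place the $M$ fragments in pairwise disjoint time windows: reserve $W_m = [(m-1)\beta, m\beta)$ on processor $m$ for its fragment. Since $M\beta = \frac{M}{M+1} < 1$, these windows are disjoint and contained in $[0,1)$, so every fragment meets its deadline and no two fragments ever execute simultaneously. It then remains to schedule $\{\tau_m, \tau_{m+M}\}$ on processor $m$ using only the supply left after removing $W_m$.

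For this last step I would use the uniprocessor processor-demand argument: EDF on the reduced supply succeeds provided $\dbf(\tau_m, t) + \dbf(\tau_{m+M}, t) \le \mathrm{sbf}(t)$ for all $t > 0$, where the supply-bound function of a processor deprived of one interval of length $\beta$ is the conservative $\mathrm{sbf}(t) = \max\{0, t - \beta\}$. Because $D_m = D_{m+M} = 1$ and $T_m = \varepsilon$, both demand bound functions are $0$ for $t < 1$, so the inequality is vacuous there. At $t = 1$ the demand equals $\hat C_m + \hat C_{m+M} = \frac{M}{M+1}$ while $\mathrm{sbf}(1) = 1 - \beta = \frac{M}{M+1}$, giving equality --- the knife-edge that pins down $s_0$. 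It then suffices to verify the remaining step points $t = 1 + k\varepsilon$, $k \ge 1$: there the slack is exactly $k(\varepsilon - \hat C_m) \ge 0$ since $\hat C_m < \varepsilon$, and it is nondecreasing in $k$; as the demand stays constant between consecutive step points while the supply grows, checking these points covers every $t$.

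I expect the main obstacle to be the interaction between the recurrent arbitrary-deadline task $\tau_m$ --- whose relative deadline $1$ far exceeds its period $\varepsilon$, so its jobs may legitimately be deferred --- and the window $W_m$ that is stolen for the fragment. One cannot simply run each $\tau_m$ job to completion inside its own period, because the reservation plus the two single jobs would overflow the per-processor budget; the deferral has to be accounted for exactly, which is why I route the argument through the demand/supply inequality rather than a hand-built slot assignment. Confirming that $t = 1$ is the unique tight point and that all later step points carry nonnegative, increasing slack is the crux; the rest --- disjointness of the $W_m$, the $\eta \to 0$ limit, and the reduction of sporadic feasibility to the processor-demand test --- is routine bookkeeping.
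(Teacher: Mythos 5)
Your proposal is correct and lands on exactly the same decomposition as the paper --- the pairs $\{\tau_m,\tau_{m+M}\}$ partitioned onto processor $m$, and the single job of $\tau_{2M+1}$ executed in consecutive disjoint slices of length $\frac{1}{M+1}$ (after scaling), one per processor, which is precisely the paper's ``migrate after $C_N/M$ time units'' scheme --- but it verifies per-processor feasibility by a genuinely different route. The paper keeps everything fixed-priority: the fragment of $\tau_{2M+1}$ gets the highest priority on each processor (with an artificial relative deadline $C_N/M$ so the migration chain is served immediately), $\tau_{m+M}$ the middle priority, $\tau_m$ the lowest, and the speed threshold is extracted from the closed-form response-time bound of Bini et al.\ in Eq.~\eqref{eq:WCRT-bini-quadratic}. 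You instead treat the fragment as a static reservation window, model the leftover capacity by the supply bound function $\max\{0,t-\beta\}$, and run EDF on $\{\tau_m,\tau_{m+M}\}$, closing the argument with the processor-demand criterion $\sum\dbf(\cdot,t)\le \mathrm{sbf}(t)$ checked at the step points $t=1+k\varepsilon$. Your route buys a cleaner view of why the bound is tight (the single knife-edge at $t=1$, with slack $k(\varepsilon-\hat C_m)\ge 0$ thereafter) and avoids importing the response-time bound, at the cost of invoking the hierarchical dbf-versus-sbf feasibility result for EDF under restricted supply, which you should cite or prove; you should also state explicitly that the reservation windows are anchored to the arrival time of $\tau_{2M+1}$'s (unique, since $T_{2M+1}=\infty$) job, so that the sbf bound --- and hence the dbf test, which already quantifies over all sporadic arrival patterns of $\tau_m$ --- covers every legal arrival sequence. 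The arithmetic in both arguments pins down the identical threshold $\frac{1+\varepsilon}{3}+\frac{1+\varepsilon}{3M}$, so your proof is a valid substitute.
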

\begin{proof}
  We will apply multiprocessor semi-partitioned scheduling, in which
  tasks in $\setof{\tau_{m}, \tau_{m+M}}$ are assigned to processor $m$
  for $m=1,2,\ldots, M$.
  In our designed semi-partitioned schedule, a job of task
  $\tau_{2M+1}$, i.e., a part of $\tau_N$, is executed partially on each of the
  $M$ processors as follows: it runs on processor $m$ for $C_{N}/M$
  amount of time, and then migrates to processor $m+1$ to continue its
  execution, for $m=1,2,\ldots,M-1$. To ensure that the
  migration can be served immediately, 
  $\tau_{N}$ is given the 
  the highest-priority  in this schedule.
   Therefore, a
  \emph{subtask} of task $\tau_{N}$ on
  processor $m$, denoted as $\tau_{N,m}$,  has a relative deadline $C_{N}/M$.
  As long as the speed of the processors is greater than or equal to
  $\frac{1+\varepsilon}{3}$, task $\tau_{N}$ can meet its deadline.
  Therefore, in our designed semi-partitioned schedule, each processor
  $m$ has a task set ${\bf T}_m$ that consists of three tasks:
  $\tau_m$ and $\tau_{m+M}$ from ${\bf T}^{ad}$ and a subtask
  $\tau_{N,m}$ of task $\tau_{N}$ with execution time $C_{N}/M$. We
  assign the second priority to task $\tau_{m+M}$ and the lowest
  priority to task $\tau_m$ on processor $m$.

  We utilize the worst-case response time analysis by Bini et
  al. \cite{DBLP:journals/tc/BiniNRB09}.  They showed that if
  $1-\sum_{\tau_i \in hp(\tau_k,m)} U_i\leq 1$, then the
  worst-case response time of a task $\tau_k$ in a task set ${\bf
    T}_m$ under fixed-priority scheduling on a processor is at most
  \begin{equation}
    \label{eq:WCRT-bini-quadratic}
    \frac{C_k+ \sum_{\tau_i \in hp(\tau_k,m)} C_i - \sum_{\tau_i \in hp(\tau_k,m) } U_i C_i}{1-\sum_{\tau_i \in hp(\tau_k,m)} U_i},
  \end{equation}
  where $hp(\tau_k, m)$ is the set of the tasks in ${\bf T}_m$ that
have a higher priorities than task $\tau_k$.
  Note that the precondition \mbox{$1-\sum_{\tau_i \in hp(\tau_k,m)} U_i\leq
  1$} for the test in Eq.~\eqref{eq:WCRT-bini-quadratic} to be
  applicable always holds
  at any arbitrarily speed since we assign $\tau_m$ as the
  lower-priority task on processor $m$ and $U_{m+M} \rightarrow 0$, and $U_{N,m} =
  C_{N,m}/T_N \rightarrow 0$.

  By Eq.~\eqref{eq:WCRT-bini-quadratic}, if the speed of processor $m$
  is greater than or equal to $\frac{C_{N}}{M} + C_{m+M} =
  \frac{1+\varepsilon}{3M}+\frac{1}{3}$, task $\tau_{m+M}$ can still meet
  its deadline in this schedule. By Eq.~\eqref{eq:WCRT-bini-quadratic}, 
  task $\tau_m$ can meet its deadline at speed $s$ in this schedule if
  \begin{align}
1 \geq &\frac{C_m/s+\sum_{\tau_i \in hp(\tau_k,m)} C_i/s - \sum_{\tau_i \in hp(\tau_k,m) } \frac{U_i}{s} \frac{C_i}{s}}{1-\sum_{\tau_i \in hp(\tau_k,m)} U_i/s}
=  \frac{\varepsilon}{3s}   + \frac{1}{3s} + \frac{1+\varepsilon}{3sM}
  \end{align}
  Therefore, as long as $s \geq \frac{1+\varepsilon}{3} +
  \frac{1+\varepsilon}{3M}$, task $\tau_m$ meets its deadline under our
  designed schedule.
\end{proof}



\end{document}